\LetLtxMacro{\Standardrestriction}{\restriction}     
\renewcommand*{\restriction}[1]{\Standardrestriction_{#1}}
\LetLtxMacro{\Standardexists}{\exists}    
\renewcommand*{\exists}[1]{\Standardexists #1 \,}
\LetLtxMacro{\Standardforall}{\forall}
\renewcommand*{\forall}[1]{\Standardforall #1 \,}
\let\Latextextbf\textbf
\renewcommand{\textbf}[1]{\Latextextbf{\boldmath #1}}
\newcommand{\define}[1]{\emph{#1}}
\DeclarePairedDelimiter{\abs}{\lvert}{\rvert}
\newcommand{\Buchi}{B{\"u}chi\xspace}
\newcommand{\Bojanczyk}{Boja\'nczyk\xspace}
\newcommand{\Torunczyk}{Toru\'nczyk\xspace}
\newcommand*{\TITEL}{Model Checking Constraint LTL over Trees}
\newcommand*{\KEYWORDS}{\Buchi automata, constraint automata, concrete domains,
  constraint LTL}
\newcommand{\AUTHORS}{%
Alexander Kartzow \texorpdfstring{\inst{1,2} \and}{ and }Thomas Weidner\texorpdfstring{\inst{1}}{}
}
\DeclareMathOperator{\bigsqcap}{ 
  \rotatebox[origin=c]{180}{\ensuremath{\bigsqcup}}} 
\newcommand{\N}{\mathbb{N}}
\newcommand{\Z}{\mathbb{Z}}
\newcommand{\Q}{\mathbb{Q}}
\newcommand{\Struc}[1]{\mathfrak{#1}}   
\newcommand{\Aut}[1]{\mathbb{#1}}  
\DeclareMathOperator{\image}{im} 
\DeclareMathOperator{\dom}{dom}
\newcommand{\neX}{\mathsf{X}}
\newcommand{\Glob}{\mathsf{G}}
\newcommand{\Until}{\mathsf{U}}
\newcommand{\LTL}{\ensuremath{\mathrm{LTL}}\xspace}
\newcommand{\CTLStar}{\mathrm{CTL}^*}
\newcommand{\cLTL}{\ensuremath{\mathrm{LTL}(\set{\preceq, \KBleq, S})\xspace}}
\newcommand{\Var}{\mathcal{V}}
\newcommand{\PSPACE}{\mathrm{PSPACE}}
\newcommand{\NPSPACE}{\mathrm{NPSPACE}}
\newcommand{\stretchleq}{\mathrel{\leq_C}}
\DeclareMathOperator{\typ}{typ}
\DeclareMathOperator{\MCAT}{MCAT}
\newcommand*{\RunTypes}{{\ensuremath{\mathsf{Typs}_{n,C}}}\xspace}
\newcommand*{\Tree}[1][\infty]{\Struc{T}_{#1}}
\newcommand*{\KBleq}{\mathrel{\sqsubseteq}}
\newcommand*{\KBless}{\mathrel{\sqsubset}}
\newcommand*{\SAT}[1]{\operatorname{SAT}(#1)}
\newcommand*{\MC}[1]{\operatorname{MC}(#1)}
\begin{document}
\title{\TITEL\thanks{This work is supported by the DFG Research Training
    Group 1763 (QuantLA) and the DFG
    research project GELO.}}

\institute{Institut f\"ur Informatik, Universit\"at Leipzig, Germany \and
  Department f\"ur Elektrotechnik und Informatik, Universit\"at
  Siegen, Germany}

\author{\AUTHORS}
\maketitle

\begin{abstract}
  Constraint automata are an adaptation of \Buchi-automata that process
data words where the data comes from some relational structure
$\Struc{S}$. Every transition of such an automaton comes with
constraints in terms of the relations of $\Struc{S}$. A transition 
can only
be fired if the current and the next data values satisfy all
constraints of this transition. These automata have been used in the
setting where $\Struc{S}$ is a linear order for 
deciding constraint $\LTL$ with constraints over $\Struc{S}$.
In this paper, $\Struc{S}$ is the  infinitely
branching infinite order tree $\Struc{T}$. We 
provide a $\PSPACE$ algorithm for emptiness of $\Struc{T}$-constraint
automata. This result implies
$\PSPACE$-completeness of the satisfiability and the model checking
problem for  constraint $\LTL$ with constraints over $\Struc{T}$.


\end{abstract}

\section{Introduction}
Temporal logics like $\LTL$  or
$\CTLStar$ are nowadays 
standard languages for specifying system properties in
verification. These logics are interpreted over node labelled graphs,
where the node labels (also called atomic 
propositions) represent abstract properties of a system (for
instance,  a computer program). Clearly,
such an abstracted system state does not in  general contain all the
information of the original system state. This may lead to incorrect
results in model checking. 

In order to overcome this weakness, extensions of temporal logics by
atomic (local) constraints over some structure $\Struc{A}$  have been
proposed (cf.~\cite{Cerans94,DemriG08}). 
For instance, \emph{\LTL with local constraints}
is evaluated over infinite words where the letters are tuples over
$\Struc{A}$ of a fixed size.  For instance, for 
$\Struc{A}= (\Z, <)$, this logic is  standard $\LTL$ where atomic
propositions are replaced by atomic constraints of the
form $\neX^i x_j < \neX^l x_k$. This
constraint is satisfied by  
a path $\pi$ if 
the $j$-th element of the $i$-th letter 
of $\pi$ is less than the $k$-th element of the $l$-th letter of $\pi$.

While temporal logics with integer constraints are suitable to
reason about programs manipulating counters,
reasoning about systems manipulating pushdowns requires  constraints
over words over a fixed alphabet and the 
prefix relation (which is equivalent to constraints over an infinite
$k$-ary tree with descendant/ancestor relations). 
There are numerous investigations on satisfiability and model
checking for temporal logics with constraints over the integers
(cf.~\cite{Cerans94,BozzelliG06,DemriG08,Gascon09,BozzelliP14,CKL14}). Contrary, 
temporal logics with constraints over trees have not yet been
investigated much, although questions concerning decidability of the
satisfiability problem for $\LTL$ or $\CTLStar$ with
such constraints have been asked for instance in
\cite{DemriG08,CKL13}. A first (negative) result by Carapelle et
al.~\cite{CFKL14} shows that a technique developed in 
\cite{CKL13,CKL14} for 
satisfiability results of
branching-time logics (like $\CTLStar$ or $\mathrm{ECTL}^*$) with
integer constraints cannot be used 
to resolve the satisfiability status of temporal logics with
constraints over trees. 

Our goal is to show that satisfiability of $\LTL$ with
constraints over the tree is decidable. At first,
we analyse the emptiness problem of
$\Struc{T}$-constraint automata (cf.~\cite{Gascon09,DemriD07}) where
$\Struc{T}$ is the infinitely branching infinite tree with prefix
relation.  These 
automata are \Buchi-automata that process (multi-)data words where the
data values are elements of $\Struc{T}$
where applicability of transitions  
depends on the order of the data values at the current and the next 
position.
Our technical main
result shows that emptiness for these automata is $\PSPACE$-complete.
Having obtained an algorithm for the emptiness problem, we can easily 
provide algorithms for the 
satisfiability and model checking problems for 
$\LTL$ with constraints over $\Struc{T}$. We exactly mimic the 
automata based algorithms for standard $\LTL$ of Vardi and Wolper
\cite{VardiW94} noting that the constraints in the transitions are
exactly what is needed to deal with the atomic constraints in the
local constraint  version of
$\LTL$. It follows directly that  satisfiability of 
$\LTL$ with constraints over $\Struc{T}$ and 
model checking models defined by constraint automata against
$\LTL$ with constraints over $\Struc{T}$ is $\PSPACE$-complete. 

Finally, we extend our results to the case of constraints over the
infinite $k$-ary tree for every $k\in\N$ by providing
a reduction  to
$\LTL$ with constraints over $\Struc{T}$. 
Thus, satisfiability and model checking for  $\LTL$ with
constraints over the infinite $k$-ary tree is also in $\PSPACE$.

Upon finishing our paper, we have become aware that Demri and Deters
(abbreviated DD in the following)
have submitted a paper \cite{DemriD14} that shows above mentioned
results on satisfiability using a reduction of constraints over trees
to constraints over the integers.
Even though the
main results of both papers coincide, there are major differences.
\begin{enumerate}
\item  DD's result extends to satisfiability of the
  corresponding version of $\CTLStar$, but DD do not consider
  the model checking problem.
\item DD's result holds
  even if the logic is enriched by length constraints that compare the 
  lengths of the interpretations of variables. Since our approach
  abstracts away the concrete length of words, we cannot reprove this
  result. On the other hand, 
  we can
  enrich the logic  with  constraints using the
  lexicographic order on the tree as well. DD's approach can not deal
  with this order.  Thus, the logic in each paper is incomparable to
  the logic of the other.
\item DD conjecture that (branching-degree) uniform satisfiability
  problem is in $\PSPACE$. This problem asks, given a formula and a
  $k\in\N\cup\set{\infty}$ whether there is a model with values in the
  $k$-ary infinite tree that satisfies the formula. 
  We  confirm DD's conjecture.  
\item Finally, our proof is self-contained.
  In contrast, DD's proof seems to be more elegant
  and less technical, but this comes at the cost of relying on the
  decidability result for satisfiability of $\LTL$ with constraints
  over the integers \cite{BozzelliP14}, which is again quite technical to
  prove.\footnote{In fact, our proof can be easily adapted to reprove
    this result.}
\end{enumerate}

Our result leaves open several further research directions. 
Firstly, DD's result on $\CTLStar$ with constraints over trees does
not yield any reasonable complexity bound because the complexity of
their algorithm
relies on the results of \Bojanczyk and
\Torunczyk \cite{BojanczykT12} on weak monadic second order logic with
the unbounding quantifier. Thus, without any progresses concerning the
complexity of this logic, DD's approach cannot be used to obtain
better bounds. In contrast, the concept of $\Struc{T}$-constraint
automata can be 
easily lifted to a $\Struc{T}$-constraint tree-automaton
model. Complexity bounds on the 
emptiness problem for this model would directly imply bounds on the
satisfiability for $\CTLStar$ with constraints over $\Struc{T}$. 
Thus, investigating whether our approach transfers to a result on the
emptiness problem of $\Struc{T}$-constraint tree-automata might be a
fruitful approach. 
Secondly, it may be possible to lift our results to the global model
checking problem similar to  the work of Bozelli and Pinchinat
\cite{BozzelliP14} on $\LTL$ with constraints over the integers.
Finally, it is a very challenging task to decide whether DD's result
and our result can be unified to a result on $\LTL$ with constraints
over the tree with prefix order, lexicographic order and
length-comparisons (of maximal common prefixes).



\section{Model Checking LTL with Constraints over Trees}

We first introduce $\cLTL$, a variant of $\LTL$ with local
constraints. A model of a formula of this logic is a (multi-) data
word where the data comes from some $\set{\preceq,\KBleq,
  S}$-structure. We are particularly interested in the case where this
structure is an order tree with lexicographic order $\KBleq$. We want
to adjust the automata-based model checking methods for $\LTL$ to this
setting. For this purpose we then recall the definition of
tree-constraint automata.  The technical core of this paper shows that
emptiness of tree-constraint automata is $\PSPACE$-complete. Before we
delve into this technical part, we prove that satisfiability and
model checking for $\cLTL$ formulas with constraints over the full
infinitely branching tree are in $\PSPACE$ due to a reduction to the
emptiness problem of tree-constraint automata. We conclude this
section by providing a reduction of satisfiability and model checking
for $\cLTL$ with constraints over the full tree of branching degree $k$
to the corresponding problem over the full infinitely branching tree.

\subsection{LTL with Constraints}

Constraint $\LTL$ over signature $\set{{=}, {\preceq}, {\KBleq}, s_1, s_2,\dots, s_m}$
where $S = \Set{s_1, \dots, s_m}$ is a set of constant symbols,
abbreviated $\cLTL$, is given by the grammar
\begin{equation*}
  \phi  ::=  \neX^i x_1 \mathrel{*} s \mid  s \mathrel{*} \neX^i x_1 \mid
  \neX^{i} x_1 \mathrel{*} \neX^{j} x_2 \mid
  \neg \phi \mid (\phi \land \phi) \mid \neX\phi \mid \phi \Until \psi
 \mid \Glob \phi
\end{equation*}
where $* \in \Set{=, \preceq, \KBleq}$, $i,j$ are natural
numbers, $x_1, x_2$ are 
variables from some countable fixed set $\Var$ and
$s \in S$ is a constant symbol.  
Given a structure $\Struc{A} = (A,
\preceq^{\Struc{A}}, \KBleq^{\Struc{A}} s_1^{\Struc{A}}, s_2^{\Struc{A}}, \dots,
s_m^{\Struc{A}})$, 
an \define{$n$-dimensional data word over
  $\Struc{A}$} is a sequence
$(\bar a_i)_{i\in \N}$ with $\bar a_i\in A^n$.
We evaluate 
a formula $\phi$ (where $x_1, \dots,
x_n\in \Var$ are the variables occurring in $\phi$)
on $n$-dimensional data words $(\bar a_i)_{i\in \N}$.
We write $a_i^j$ for the $j$-th
component of $\bar a_i$.
We say \define{$(\bar a_i)_{i\in\N}$ is a model of $\phi$},
denoted as $(\bar a_i)_{i\in\N} \models \phi$, 
if the  usual conditions for $\LTL$ hold, and 
the following additional rules apply for
$*\in \Set{=, \preceq, \KBleq}$:
\begin{itemize}
\item  $(\bar a_i)_{i\in\N} \models (\neX^{i} x_k) \mathrel{*}
  (\neX^{j} x_l)$ if and only if $\Struc{A} \models a_{i}^l \mathrel{*} a_{j}^k$,
\item $(\bar a_i)_{i\in\N} \models (\neX^i x_l) \mathrel{*} s_j$ 
  (or $s_j \mathrel{*} (\neX^i x_l)$, resp.) 
  if and only if
  $\Struc{A} \models a_{i}^l \mathrel{*} s_j$ (or
  $\Struc{A} \models s_j \mathrel{*} a_{i}^l$, respectively).
\end{itemize}
Note that our constraint LTL does not use
atomic propositions. On nontrivial structures, 
proposition $p$ can be resembled
by constraints of the form  $x_{p_1} = x_{p_2}$.

As for usual $\LTL$ one defines dual operators. Then every formula
has an equivalent negation normal form where negation only appears in
front of atomic constraints 
($(\neX^{i} x_1) \preceq (\neX^{j} x_2)$,
$s \preceq \neX^i x$
or $\neX^i x\preceq s $). 
Using that $X^n(X^i x_k \ast X^j x_{\ell}) \equiv X^{i+n} x_k \ast X^{j+n} x_{\ell}$ and by introducing auxiliary variables,
it is also easy to  eliminate 
exponents in terms:

\begin{proposition} \label{prop:Depth2ConstraitnsSuffice}
  There is a polynomial time algorithm that computes, on input a
  $\cLTL$-formula $\phi$ an equivalent $\cLTL$-formula $\psi$ such
  that $\psi$ does not contain terms of the form $\neX^i x$ with
  $i\geq 2$. 
\end{proposition}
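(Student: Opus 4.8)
The plan is to reduce the exponent of every term occurring in $\phi$ by one at a time, using the commutation identity $\neX^n(\neX^i x_k \ast \neX^j x_\ell) \equiv \neX^{i+n} x_k \ast \neX^{j+n} x_\ell$ together with fresh auxiliary variables that ``remember'' the current value of a variable. Concretely, let $d$ be the maximal exponent appearing in any term of $\phi$. If $d \leq 1$ we are done. Otherwise, for each variable $x$ occurring in $\phi$ I introduce a fresh variable $x'$ and add the conjunct $\Glob(x' = \neX x)$, which forces $a_i^{x'} = a_{i+1}^{x}$ at every position. Then, inside every atomic constraint of $\phi$, I rewrite each occurrence of $\neX^i x$ with $i \geq 1$ as $\neX^{i-1} x'$. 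This yields a formula all of whose terms have exponent at most $d-1$ (the new conjuncts $\Glob(x' = \neX x)$ only use exponent $1$), and it is equivalent to $\phi$ over the expanded tuple of variables, hence — projecting away the auxiliary components — equivalent to $\phi$ as a specification on the original data words.

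Iterating this construction $d-1$ times brings the maximal exponent down to $1$, so no term of the form $\neX^i x$ with $i \geq 2$ remains. First I would make precise the one-step transformation $\phi \mapsto \phi_1$ just described and verify the equivalence claim: a model $(\bar a_i)_{i\in\N}$ of $\phi$ extends uniquely to a model of $\phi_1$ by setting the $x'$-component at position $i$ to the $x$-component at position $i+1$, and conversely any model of $\phi_1$ restricts to a model of $\phi$; the rewriting of the atomic constraints is sound precisely because $\Glob(x' = \neX x)$ holds. Then I would bound the cost: one step at most doubles the number of variables and adds one $\Glob$-conjunct per variable, and it is applied at most $d$ times, where $d$ is bounded by the size of $\phi$; so the whole procedure runs in polynomial time and produces a polynomially larger formula. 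Strictly, to keep the growth polynomial rather than exponential one should introduce, in a single pass, one fresh variable $x^{(r)}$ for each original variable $x$ and each $1 \le r \le d$ with the conjuncts $\Glob(x^{(r)} = \neX x^{(r-1)})$ (where $x^{(0)} := x$), and replace $\neX^i x$ by $\neX^0 x^{(i)} = x^{(i)}$; this does the whole reduction at once and is manifestly polynomial.

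The only real point requiring care is the semantic bookkeeping in the equivalence proof, namely checking that replacing $\neX^i x$ by $x^{(i)}$ inside an atomic constraint is faithful: by the definition of satisfaction for constraints, $(\bar a_i)_{i\in\N} \models (\neX^i x_k) \ast (\neX^j x_\ell)$ iff $\Struc{A} \models a_i^{x_k} \ast a_j^{x_\ell}$ after the (notational) index convention of the excerpt, and since the auxiliary conjuncts force $a_m^{x_k^{(i)}} = a_{m+i}^{x_k}$ for all $m$, evaluating the rewritten constraint at position $0$ gives exactly the same relation in $\Struc{A}$. Everything else — that negation normal form, the Boolean connectives, and the temporal operators are untouched, and that the new conjuncts themselves contain only exponent-$1$ terms — is immediate. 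I expect no genuine obstacle here; the statement is a normalization lemma, and the work is entirely in setting up notation so that the equivalence and the polynomial bound are both transparent.
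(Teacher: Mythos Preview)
Your proposal is correct and follows essentially the same approach as the paper: introduce fresh chained variables via conjuncts of the form $\Glob(x^{(r)} = \neX x^{(r-1)})$ and substitute $\neX^i x$ by $x^{(i)}$. The paper additionally applies the commutation identity first, rewriting each atom as $\neX^{\min(i,j)}(\neX^{i-\min(i,j)} x \ast y)$ so that only one chain (of length $\lvert i-j\rvert$) is needed per atomic constraint, but this is a minor optimisation that changes neither the idea nor the polynomial bound.
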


We want to investigate $\cLTL$ in the cases where the structure
$\Struc{A}$  is one of the following order trees.
For each $k\in \set{2,3,4,\dots }$, let
\begin{equation*}
  \Tree^C = (\Q^*, \preceq, \KBleq c_1, c_2,\dots, c_m)\text{ and }
  \Tree[k]^C = (\{1, 2,\dots, k\}^*, \preceq,\KBleq,  c_1, \dots, c_m)
\end{equation*}
where $\preceq$ is the prefix order, $\KBleq$ is the lexicographic
order defined by $w \KBleq v$ if either $w \preceq v$ or there are
$q_1,q_2\in Q$ such that $(w\sqcap v)q_1 \preceq w$, $(w\sqcap v)q_2
\preceq v$ and $q_1 < q_2$, where $<$  is the natural order on $\Q$ and $\sqcap$ denotes the (binary) \define{greatest common prefix
operator}, and
$C=(c_1,c_2, \dots c_m)$ is a tuple of constants in $\Q^*$ or
$\set{1,2,\dots, k}^*$, respectively.

\subsection{Constraint Automata}
\label{sec:constAut}

In the following, we investigate the satisfiability and model checking
problems for $\cLTL$ over models with data values in one of the trees
$\Tree[k]^C$ for $k\in\set{\infty, 2,3,4,\dots}$. We follow closely
the automata theoretic approach of Vardi and Wolper \cite{VardiW94}
which provides a reduction of model checking for $\LTL$ to the
emptiness problem of \Buchi automata. In order to deal with the
constraints, we use \emph{$\Tree[k]^C$-constraint automata}
(cf.~\cite{Gascon09}) instead of \Buchi automata. Next we recall the
definition of constraint automata and state our main result concerning
emptiness of constraint automata. We then derive analogous results of
Vardi and Wolper's decidability results on \LTL for \cLTL with
constraints over $\Tree[k]^C$. A
\emph{$\Tree[k]^C$-constraint automaton} is defined as a usual
\Buchi automaton but instead of labelling transitions by some letter
from a finite alphabet we label them by Boolean combinations of
constraints which the current and the next data values have to satisfy
in order to apply the transition. 

\begin{definition}
  \begin{itemize}
  \item An \define{$n$-dimensional
      $\Tree[k]^C$-constraint automaton} 
    is a   quadruple 
    \mbox{$\Aut{A} = (Q, I, F, \delta)$}
    where $Q$ is a finite set of states, $I\subseteq Q$ the initial
    states, $F\subseteq Q$ the set of accepting states and
    $\delta \subseteq Q \times B^C_n \times Q$ the \define{transition
      relation} where $B^C_n$ is the set of all quantifier-free
    formulas over signature $\set{\preceq, \KBleq}\cup C$ with
    variables $x_1, \dots, x_n, y_1, \dots, y_n$, i.e., propositional
    logic formulas with atomic formulas $v \ast v'$, with
    ${\ast} \in \{{=}, {\preceq}, {\KBleq}\}$ and $v$, $v'$ are
    variables or constants.
  \item A
    \define{configuration} of the automaton $\Aut{A}$ is a tuple
    in $Q \times (\set{1,2,\dots,k}^*)^n)$  (or $(\Q^*)^n$ if $k = \infty$).
  \item
    We define $(q, \bar w) \to (p, \bar v)$ 
    iff there is a transition $(q, \beta(x_1, \dots x_n, y_1, \dots,
    y_n), p)$ such that $\Tree[k]^C \models \beta(\bar w, \bar v)$.
  \item A \define{run} of $\Aut{A}$ is a finite or infinite sequence of
    configurations $r = (c_j)_{j\in J}$ ($J\subseteq \N$ an interval)
    such that $c_j \to c_{j+1}$ for all $j,j+1 \in J$.  For a finite
    run $r = (c_i)_{ i_1 \leq i \leq i_2}$ with $i_1 \leq i_2 \in \N$
    we say $r$ is a \emph{run from $c_{i_1}$ to $c_{i_2}$}.
  \item A   run $r=(c_i)_{i\in\N}$ is \emph{accepting} if 
    $c_0 = (q,d_1, \dots, d_n)$ for an initial state $q\in I$ and a final
    state $f\in F$ appears in infinitely many configurations
    of $r$.
  \item The \define{set 
    of all words accepted by $\Aut{A}$}
    comprises all $\bar{w}_1 \bar{w}_2 \dots \in  ((\Q^*)^n)^{\omega}$ (or $(\{1,\dotsc,k\})^n)^{\omega}$ if $k \ne \infty$)
    such that there  
    is an accepting infinite run $(c_i)_{i \in \N}$ with 
    $c_i = (q_i, \bar{w}_i)$.
  \end{itemize}
\end{definition}

In the following sections (see Theorem \ref{thm:EmptinessPSPACE}) we prove
that emptiness of $n$-dimensional 
$\Tree^C$-constraint automata is $\PSPACE$-complete in terms of
$\abs{Q} + \abs{C} + k + m$ where $m$ is the length of the longest
constant occurring in $C$. We next 
apply this result in order to obtain
$\PSPACE$-completeness of  satisfiability and model checking.

\subsection{Satisfiability and Model Checking of Constraint LTL}

\begin{definition}
  Let $k\in \set{\infty,2,3,4,\dots}$. 
  
  $\SAT{\Tree[k]^C}$ denotes the \define{satisfiability problem} for
  $\cLTL$ over 
  $\Tree[k]^C$: given a set of constants $C$ and a $\cLTL$-formula $\varphi$, is
  there a data word $(\bar w_i)_{i\in\N}$ over   $\Tree[k]^C$ such
  that $(\bar w_i)_{i\in\N} \models \varphi$?

  $\MC{\Tree[k]^C}$ denotes the \define{model checking problem for 
  $\Tree[k]^C$-constraint automata against $\cLTL$}:
  given a set of constants $C$, a $\Tree[k]^C$-constraint automaton $\Aut{A}$ and a
  $\cLTL$-formula $\varphi$, is 
  there a data word $(\bar w_i)_{i\in\N}$ over $\Tree[k]^C$ accepted
  by $\Aut{A}$ such that $(\bar w_i)_{i\in\N} \models \varphi$?
\end{definition}

\begin{theorem} \label{thm:SatAndMCInPSPACE}
  Let $k\in \set{\infty,2,3,4,\dots}$ and $C$ a set of constants. 
  $\SAT{\Tree[k]^C}$ and $\MC{\Tree[k]^C}$ are $\PSPACE$-complete.
\end{theorem}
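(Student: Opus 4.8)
The plan is to mimic the Vardi–Wolper construction, reducing $\SAT{\Tree[k]^C}$ and $\MC{\Tree[k]^C}$ to the emptiness problem for $\Tree^C$-constraint automata, for which $\PSPACE$-completeness is claimed in Theorem~\ref{thm:EmptinessPSPACE}. We treat the case $k=\infty$ first and derive the case of finite $k$ from a reduction announced later in this section.

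\paragraph{Upper bound, $k=\infty$.}
By Proposition~\ref{prop:Depth2ConstraitnsSuffice} we may assume the input formula $\varphi$ has no terms $\neX^i x$ with $i\geq 2$, so every atomic constraint speaks only about the current and the next position; this is exactly the format of the transition labels in $B^C_n$. Now I would build, just as for classical $\LTL$, a $\Tree^C$-constraint automaton $\Aut{A}_\varphi$ over dimension $n$ (the number of variables of $\varphi$) whose states are the Hintikka-style atoms: maximal consistent sets of subformulas of $\varphi$ (closed under the usual closure). Transitions from atom $q$ to atom $q'$ carry as their constraint label the conjunction of the atomic constraints asserted positively in $q$ (interpreting a constraint $\neX^i x_k \ast \neX^j x_\ell$ with $i,j\in\{0,1\}$ as a formula over $x_1,\dots,x_n,y_1,\dots,y_n$ in the obvious way), and are present iff the $\neX$- and $\Until$- and $\Glob$-subformulas are propagated consistently between $q$ and $q'$ exactly as in the Boolean $\LTL$ construction. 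The \Buchi acceptance condition uses the standard generalized acceptance sets that enforce the eventualities of the $\Until$- and $\Fut$-subformulas. A data word $(\bar w_i)_{i\in\N}$ over $\Tree^C$ is accepted by $\Aut{A}_\varphi$ iff it models $\varphi$; the only point beyond the classical argument is that acceptance of a transition now additionally requires $\Tree^C\models\beta(\bar w_i,\bar w_{i+1})$, which is precisely the semantics of the atomic constraints. The automaton has $2^{O(\abs{\varphi})}$ states but is computed on the fly, so $\SAT{\Tree^C}$ reduces to emptiness of an exponential-size $\Tree^C$-constraint automaton, and since $\PSPACE=\NPSPACE$ and the emptiness algorithm of Theorem~\ref{thm:EmptinessPSPACE} runs in space polynomial in $\abs{Q}+\abs{C}+k+m$, the whole procedure runs in space polynomial in $\abs{\varphi}+\abs{C}+m$. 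For $\MC{\Tree^C}$ one additionally takes the product of $\Aut{A}_\varphi$ with the given constraint automaton $\Aut{A}$ (intersecting state sets and conjoining transition labels), which only polynomially enlarges the input to the emptiness test; hence $\MC{\Tree^C}$ is in $\PSPACE$ as well. The case of finite $k$ follows from the same bound together with the polynomial reduction of $\SAT{\Tree[k]^C}$, $\MC{\Tree[k]^C}$ to $\SAT{\Tree^C}$, $\MC{\Tree^C}$ presented at the end of this section.

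\paragraph{Lower bound.}
$\PSPACE$-hardness already holds for plain $\LTL$ satisfiability \cite{VardiW94}, and $\cLTL$ over any of the $\Tree[k]^C$ subsumes plain $\LTL$: atomic propositions are simulated by equality constraints $x_{p_1}=x_{p_2}$ as remarked after the definition of $\cLTL$ (the trees are nontrivial, so such constraints are neither valid nor unsatisfiable). This gives $\PSPACE$-hardness of $\SAT{\Tree[k]^C}$, and $\MC{\Tree[k]^C}$ is at least as hard since one may take $\Aut{A}$ to accept all data words (a single state, self-loop labelled by $\top$).

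\paragraph{Main obstacle.}
The genuine content is Theorem~\ref{thm:EmptinessPSPACE} — the $\PSPACE$ emptiness test for $\Tree^C$-constraint automata — which is where the tree structure, the prefix order, and the lexicographic order must be tamed (presumably via a bounded-size abstraction of the relevant order types of the data values along a run, combined with a small-model / short-witness property). The reduction described here is routine once that theorem is in hand; the only mild care needed is (i) making the on-the-fly construction respect the $\PSPACE$ budget, i.e.\ never materialising the exponential state set, and (ii) checking that the generalized \Buchi acceptance for eventualities translates without blow-up into the single-acceptance-set format of the automaton definition above, which is the standard counter trick and costs only a multiplicative factor equal to the number of eventualities.
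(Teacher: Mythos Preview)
Your proposal is correct and follows essentially the same route as the paper: Vardi--Wolper on the fly, product with the input automaton, then invoke Theorem~\ref{thm:EmptinessPSPACE}; hardness via propositional $\LTL$. One point to flag: the finite-$k$ reduction you defer to ``the end of this section'' is in fact carried out \emph{inside} the paper's proof of this very theorem, not as a separate statement, and it is not entirely trivial --- one pads $C$ so that its maximal elements meet every branch of $\{1,\dots,k\}^*$, conjoins $\Glob\bigwedge_i\bigvee_{c\in C_l}(x_i\preceq c\lor c\preceq x_i)$ to force branching degree $k$ at every constant, and then uses a $\{\preceq,\KBleq\}$-preserving injection $\Q^*\hookrightarrow\{1,2\}^*$ to pull a $\Tree^C$-model back to a $\Tree[k]^C$-model --- so you should not treat it as a black box you can simply cite.
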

\begin{proof}
  Since there is an automaton accepting all data words, the
  satisfiability problem reduces to the model checking problem whence
  it suffices to prove the claim on model checking. Hardness follows
  directly from the known results for $\LTL$. 
  We first prove $\MC{\Tree^C}\in\PSPACE$ and then we provide a
  reduction of $\MC{\Tree[k]^C}$ to  $\MC{\Tree^C}$ for all other
  $k$. 

  \smallskip
  \noindent \textbf{Case $k = \infty$.}
  Let $C\subseteq \Q^*$ be a finite set of constants, $\Aut{A}$ a 
  $\Tree^C$-constraint automaton and $\varphi\in \cLTL$.
  Due to Proposition \ref{prop:Depth2ConstraitnsSuffice} we can assume
  that all atomic constraints occurring in $\varphi$ only concern the current
  and the next data values. 
  Recall that Vardi and Wolper~\cite{VardiW94} provided a translation
  from $\LTL$ to \Buchi automata such that the resulting automaton
  accepts some word if and only if it is a model of the formula.
  
  This
  translation directly lifts to a translation of $\cLTL$ over $\Tree$ 
  to  $\Tree$-constraint automata.  As in the
  standard construction, each state of the automaton is
  a subset of (the negation closure of) the set of subformulas of the
  $\cLTL$-formula. Intuitively, an accepting run of the automaton on
  $(\bar w_i)_{i\in\N}$ is at position
  $i_0$ in a state containing some subformula $\psi$ if and only if 
  $(\bar w_i)_{i\geq i_0} \models \psi$. Obviously the dependence of
  the transitions of a 
  constraint automaton on the order of the current and next data
  values is exactly what is needed to allow the automaton to switch from
  one state to another only if the (possibly negated) atomic constraints
  contained in the current state are satisfied by the current and the
  next data values. 

  Thus, we obtain a constraint automaton $\Aut{B}$ such that $\Aut{B}$
  accepts $(\bar w_i)_{i\in\N}$ if and only if
  $(\bar w_i)_{i\in\N}\models \varphi$. Since the usual product
  construction for \Buchi automata lifts also to constraint automata,
  we easily construct in polynomial space an automaton $\Aut{C}$ such
  that $\Aut{C}$ accepts a word if and only if both $\Aut{A}$ and
  $\Aut{B}$ accept this word. Thus, the set of all words accepted by
  $\Aut{C}$ is non-empty if and only if there is a data word
  $(\bar w_i)_{i\in\N}$ such that $\Aut{A}$ accepts
  $(\bar w_i)_{i\in\N}$ and $(\bar w_i)_{i\in\N}\models\varphi$.
  Since emptiness is in $\PSPACE$ the claim follows.
	
  \smallskip
  \noindent \textbf{Case $k \neq \infty$.}
  Now we turn to the case $\Tree[k]^C$ where $k\neq \infty$. 
  Let  $C_l$ be the set of
  $\preceq$-maximal elements of $C$, and let $\varphi$ and $\Aut{A}$
  as before. 
  Without loss of generality we can assume that $C_l$ intersects every
  infinite branch in $\set{1,2,\dots, k}^\omega$
  (%
  If not, add $ci$ as
  a new constant for 
  every $c$ in the prefix-closure of $C$ and $i\in\set{1,2,\dots, k}$, which only causes a
  polynomial growth of the input%
  ).
  We claim that $(C, \Aut{A}, \varphi)$ is a positive instance of
  $\MC{\Tree[k]^C}$ if and only if 
  $(C, \Aut{A}, \psi)$ is a positive instance of
  $\MC{\Tree^C}$ where $\Aut{A}$ is seen as a $\Tree^C$-automaton and 
  $\psi  =\varphi \land
  \Glob \bigwedge_{i=1}^n \bigvee_{c\in C_l} (x_i \preceq c \lor
  c\preceq  x_i)$ where $x_1,x_2, \dots, x_n$ is the set of variables
  occurring in the constraints of $\varphi$. Basically, $\psi$ is
  $\varphi$ with the additional condition that the data values
  occurring in a model form a tree of branching degree $k$ at all
  constants. 
  It is clear that every witness $(\bar w_i)_{i\in\N}$ for the former
  model checking problem is a witness for the latter. 

  For the converse
  assume that $(\bar w_i)_{i\in\N}$ is a data word over $\Tree$
  accepted by $\Aut{A}$ satisfying $\psi$. 
  Note that there is an injective map $g:\Q^*\to \set{1,2}^*$ preserving
  $\preceq$ and $\KBleq$ in both directions (cf. Appendix
  \ref{sec:Orders}). Moreover, by definition of $\psi$ we 
  conclude that every value occurring in $(\bar w_i)_{i\in\N}$ is
  either a prefix of one of the constants or of the form
  $c q_1q_2\dots q_n$ for some maximal constant $c\in C_l$. 
  Thus, we can define $\bar v_i = (v_i^1, v_i^2, \dots, v_i^n)$ where 
  $v_i^j= w_i^j$ if $w_i^j \preceq c$ for some $c\in C_l$ and
  $v_i^j = c g(u)$ if 
  $w_i^j = c u$  for some $c\in C_l$ and $u\neq \varepsilon$. 
  Clearly $(\bar v_i)_{i\in\N}$ is a data word over $\Tree[k]$. 
  Since $g$ preserves $\preceq$, $\KBleq$ and all constants, it is a
  model of $\psi$ accepted by $\Aut{A}$ whence it is also a model of
  $\varphi$. 
\qed
\end{proof}
\begin{remark}
  Demri and Deter \cite{DemriD14} conjectured that if the arity $k$ of
  the tree is part of the input to the satisfiability problem, it is
  still in $\PSPACE$.  Our proof confirms that this
  branching degree uniform satisfiability problem is
  $\PSPACE$-complete.
\end{remark}

\section{Emptiness of Tree Constraint Automata}

Recall that every nonempty \Buchi automaton has an accepting
run which is ultimately periodic. We first prove that a nonempty  
constraint automaton has an accepting run which ultimately consists of
loops that never contract the distances of data values and keep the
order type of 
the data values constant. We then  define the notion of the type of a
run. It turns out that such a non-contracting loop exists if and only
if the automaton has a run realising a type among a certain
set. Finally, we provide a $\PSPACE$-algorithm that checks whether an
automaton realises a given type. Putting all these together yields our
main technical result.

\begin{theorem}\label{thm:EmptinessPSPACE}
  Emptiness of $\Tree^C$-constraint automata is in $\PSPACE$. 
\end{theorem}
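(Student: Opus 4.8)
The plan is to follow the three-stage roadmap the authors have just announced: (1) show that emptiness of a $\Tree^C$-constraint automaton is witnessed by an ultimately-periodic run whose periodic part is a ``non-contracting'' loop that fixes the order type of the tuple of data values; (2) isolate a finite combinatorial object — the \emph{type} of such a run — and show that the automaton is non-empty iff it realises one of a computable finite set of types; and (3) give a $\PSPACE$ procedure that decides whether a given type is realisable. I would start from the observation that, over the infinitely branching tree $\Q^*$, only finitely much information about a configuration is relevant to whether a transition can fire: the order type (with respect to $\preceq$ and $\KBleq$) of the $n$ current values, the $n$ next values, and the $m$ constants in $C$, together with, for each pair, which of the finitely many ``meeting-point patterns'' (greatest-common-prefix relationships) holds. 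I would formalise this as a quantifier-free $1$-type over $\set{\preceq,\KBleq}\cup C$ in $2n$ variables; the key point, to be proved by a small back-and-forth / amalgamation argument, is that $\Tree$ is sufficiently homogeneous that any two tuples of the same such type are interchangeable for the purpose of firing transitions, and moreover that given a tuple of one type and a ``next type'' consistent with it one can always realise the next tuple \emph{without decreasing} any distance, i.e.\ without shrinking any greatest common prefix. This ``pumping up'' is what lets us turn an arbitrary accepting run into one whose tail is a genuine loop in the finite type-automaton and simultaneously a concrete run in $\Tree$.

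**Next I would** make precise the finite abstraction. Let $\RunTypes$ denote the (exponential-size, hence $\PSPACE$-representable) set of pairs $(q,\tau)$ where $q\in Q$ and $\tau$ is an admissible $2n$-variable quantifier-free type; put an edge $(q,\tau)\to(q',\tau')$ when there is a transition $(q,\beta,q')\in\delta$ such that $\beta$ is implied by $\tau$ (reading the $y$-variables of $\tau$ as the $x$-variables of $\tau'$, with a consistency condition linking the ``current half'' of $\tau'$ to the ``next half'' of $\tau$). A run of $\Aut{A}$ projects to a path in this type-graph; conversely, using the amalgamation lemma from stage (1), any path in the type-graph lifts to a concrete run in $\Tree$ provided it can be realised \emph{non-contractingly}, which is exactly the case when the loop we traverse never forces two values to come closer. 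I would define the \emph{type of an (ultimately periodic) run} to be the eventual type together with the set of states visited on the loop, and prove: $L(\Aut{A})\neq\emptyset$ iff there is an initial state $q_0$, a reachable pair $(q,\tau)$ with $\tau$ a ``stable'' type (current half $=$ next half, and no distance is forced to shrink), a cycle in the type-graph through $(q,\tau)$ that visits some accepting state, and — crucially — a \emph{finite} initialisation run from $(q_0,\cdot)$ to $(q,\tau)$. The ``iff'' here is where the homogeneity of $\Tree$ does the real work: the stable non-contracting loop can be executed forever, moving the tuple steadily deeper into the tree, because $\Q^*$ has infinitely much room below every node and $\Q$ is dense, so the lexicographic constraints can always be honoured.

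**The $\PSPACE$ algorithm** then does the obvious thing: guess (on the fly, never storing the whole thing) a sequence of pairs $(q_i,\tau_i)$ in $\RunTypes$ forming a valid initialisation path from some $(q_0,\tau_0)$ with $q_0\in I$ to a target $(q,\tau)$ with $\tau$ stable and non-contracting, then verify — by a second, self-contained guessed walk of at most $|\RunTypes|$ steps, i.e.\ polynomially many bits of counter — that there is a cycle from $(q,\tau)$ back to $(q,\tau)$ through an accepting state. Each pair $(q_i,\tau_i)$ is describable in $\mathrm{poly}(|Q|+|C|+k+m)$ bits (the type is a Boolean assignment to $\mathrm{poly}((n+m)^2)$ atomic formulas, and $n$ is bounded by the number of variables in $\Aut{A}$), and checking an edge $(q_i,\tau_i)\to(q_{i+1},\tau_{i+1})$ amounts to consistency of a quantifier-free $\Tree^C$-sentence, decidable in $\PSPACE$ (indeed much less) because $\Tree^C$ has a decidable — in fact effectively quantifier-eliminable — theory of orders with constants. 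So the whole thing is a nondeterministic polynomial-space computation, and $\PSPACE=\NPSPACE$ by Savitch finishes it.

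**The main obstacle** I expect is stage (1): proving carefully that every accepting run can be replaced by one whose tail is a \emph{non-contracting} stable loop, and that conversely such loops are always concretely realisable in $\Tree$. Both directions hinge on a precise amalgamation statement for $\Tree^C$ — given the current tuple and an admissible ``next type'', produce an actual next tuple realising it with all greatest-common-prefixes at least as long as before — and one must check this is compatible with the lexicographic order $\KBleq$, not merely the prefix order $\preceq$, which is the genuinely new difficulty compared with the linear-order case of \cite{Gascon09}. Getting the right invariant (some monotone potential on the multiset of gcp-lengths that the non-contracting loops cannot decrease) and showing a run that violates it infinitely often can be ``folded'' into a non-contracting one is the crux; everything after that is bookkeeping that fits comfortably in polynomial space.
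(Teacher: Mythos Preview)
Your three-stage plan matches the paper's architecture closely: reduce emptiness to the existence of a noncontracting loop reachable from an initial configuration, abstract runs by the order type of their start/end tuples (together with the constants and all pairwise greatest common prefixes), and check type-realisability by an on-the-fly $\NPSPACE$ walk. Two points where the paper is sharper than your sketch are worth flagging.

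First, for stage (1) you gesture at ``some monotone potential on the multiset of gcp-lengths''. The paper makes this precise by introducing a quasi-order $\stretchleq$ on configurations (same state, same order type, and every interval in the $\MCAT$ at least as long) and proving it is a \emph{well}-quasi-order via Dickson's lemma. Then any accepting run automatically contains two final-state configurations $c_{n_1}\stretchleq c_{n_2}$, giving a \emph{stretching} loop directly; conversely, a noncontracting loop is converted to a stretching one by iterated ``gap-insertion'' $\iota_u^m$ at carefully chosen points (Proposition~\ref{prop:noncontracting-to-stretching}). This wqo argument is cleaner than hunting for an ad hoc potential, and the gap-insertion device is also what makes your amalgamation step go through: one first stretches the current tuple so that $\MCAT_C(\bar w,\bar v)$ has $2n$-gaps above $C$, and only then extends by a new $n$-tuple $\bar u$.

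Second, your type-graph with vertices $(q,\tau)$ where $\tau$ is a $2n$-type of (current, next) is not quite the right bookkeeping for the loop test: whether a loop is noncontracting depends on $\typ_C(\bar w,\bar v)$ for the \emph{start} and \emph{end} of the whole loop, not on the individual steps, so you must track the type of the run-segment-so-far, not just the last transition. The paper handles this by defining the type of a run as the triple $(q,\typ_C(\bar w,\bar v),p)$ and equipping the powerset of such types with a composition product; $T_1^+$ is then exactly the set of realisable run types, and membership in $T_1^+$ is the $\PSPACE$ check. Your ``walk in the type-graph'' becomes this product computation once you carry the start-tuple's type along.
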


\subsection{Emptiness and Stretching Loops}

We first introduce some notation before defining our notion of
stretching loop and characterising emptiness in terms of stretching
loops. 

From now on a \define{word} is always an element of $\Q^*$,
$\bigsqcap$ ($\sqcap$) denotes the (binary) \define{greatest common prefix
operator}, and
we fix a finite tuple of words $C=(c_1,c_2,\dots, c_m)$ called
constants. \textbf{We assume that $C$ is closed under prefixes.} Note that closing $C$ under prefixes results only in polynomial growth.

\begin{definition}
  Let $s_1, \dots s_n$ be
  constant symbols and $\sigma = \Set{\preceq, \KBleq, s_1, s_2, \dots,s_n}$.
  Given a tuple $\bar w = (w_1, w_2, \dots, w_n)$ of words,
  the \define{maximal common ancestor tree} of $\bar w$ is the
  $\sigma$-structure
  \begin{equation*}
  \MCAT(\bar w)= (M,{\preceq}\restriction{M^2},
  {\KBleq}\restriction{M^2}, w_1, w_2, \dots, w_n) ,
  \end{equation*}
  where $w_i$ is the interpretation of constant symbol $s_i$ and 
  \begin{equation*}
  M =
  \Set{\varepsilon} \cup \Set{ \bigsqcap_{i\in I} w_i | \emptyset
    \neq I\subseteq 
    \set{1, 2, \dots, n}}.    
  \end{equation*}
  The \define{(order) type $\typ(\bar w)$ of $\bar w$}
    is the $\sigma$-isomorphism type of $\MCAT(\bar w)$. 
   We set $\MCAT_C(\bar w) \coloneqq \MCAT(\bar w, C)$ and 
    $\typ_C(\bar w) \coloneqq \typ(\bar w, C)$.
\end{definition}
Labelling the words from $\bar w$ by constant symbols has the following
consequence: if $\typ_C(\bar w) = \typ_C(\bar v)$ for 
$\bar w= (w_1, w_2, \dots, w_n)$ then there is a unique 
isomorphism $h$ from $\MCAT_C(\bar w)$ to $\MCAT_C(\bar v)$ 
which maps $c\mapsto c$ for every $c\in C$ and $w_i \to v_i$ for
$w_i$ the $i$-th element of $\bar w$ and $v_i$ the $i$-th element of
$\bar v$.

\begin{definition}
  For $n\in\N$ we define a relation $\stretchleq$
  on configurations from $Q\times (\Q^*)^n$
  by $(q,\bar w) \stretchleq (p,\bar v)$ if 
  $q=p$,
  $\typ_C(\bar w) = \typ_C(\bar v)$ and the 
  induced isomorphism $h:\MCAT_C(\bar w) \to
  \MCAT_C(\bar v)$ satisfies for all $d,e\in \MCAT_C(\bar w)$
  if $d\prec e$  then $\lvert h(e) \rvert - \lvert h(d) \rvert
  \geq \lvert e \rvert - \lvert d\rvert$.
\end{definition}
Intuitively, $(q,\bar{w})\stretchleq (q,\bar{v})$ holds if both data
tuples
have the same order type and  the
lengths of intervals in $\MCAT_C(\bar v)$ seen as a subtree of
$\Q^*$ are greater than the lengths of the corresponding intervals in 
$\MCAT_C(\bar w)$. In the following sections, we make extensive use of the following
properties of $\stretchleq$.
\begin{lemma}\label{lem:stretchleqIsWQO} \label{lem:StrongUpwardsCompat}
  \label{lem:CommonStretchLeqBoundExists}
  \begin{enumerate}
  \item \label{lem:stretchleqIsWQOPart1} $\stretchleq$ is a well-quasi order.
  \item \label{lem:stretchleqIsWQOPart2} The (inverse) transition
    relation 
  $\to$ ($\to^{-1}$) is strongly upwards compatible with respect to
  $\stretchleq$ in the sense of \cite{FinkelS01}, i.e., if $u \to v$ ($u \to^{-1} v$) and $u \stretchleq u'$, then there is a $v'$ such that $v \stretchleq v'$ and $u' \to v'$ ($u' \to^{-1} v'$).
  \item  \label{lem:stretchleqIsWQOPart3}
    Given two configurations $(q,\bar w)$ and $(q, \bar v)$ such
    that $\typ_C(\bar w) = \typ_C(\bar v)$ then there is a
    configuration $(q, \bar u)$ such that $(q,\bar w) \stretchleq
    (q, \bar u)$ and $(q, \bar v) \stretchleq (q, \bar u)$. 
  \end{enumerate}
\end{lemma}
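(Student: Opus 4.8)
The plan is to treat the three parts largely independently, since they rest on rather different ideas; only Part~2 will require serious bookkeeping. Throughout I would keep in mind the intuitive picture from the text: a configuration $(q,\bar w)$ is coded by the finite labelled tree $\MCAT_C(\bar w)$, which lives as a finite subtree of $\Q^*$, and $\stretchleq$ compares two configurations with the \emph{same} order type by asking that every edge (in fact every ancestor interval) of the $\MCAT$ be at least as long in the larger configuration.

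\smallskip
\textbf{Part~1 (well-quasi order).} First I would observe that there are only finitely many order types $\typ_C(\bar w)$ for fixed $n$ and $C$ (an $\MCAT$ has at most $n+m$ leaves and hence boundedly many nodes, and the signature is finite), and only finitely many states $q$. Since a finite union of wqo's is a wqo, it suffices to fix a single state $q$ and a single order type $\tau$ and show that $\stretchleq$ restricted to configurations $(q,\bar w)$ with $\typ_C(\bar w)=\tau$ is a wqo. Fixing $\tau$ fixes the shape of the tree $\MCAT_C(\bar w)$ and hence a fixed finite set $E_\tau$ of ancestor pairs $d\prec e$; the map sending $(q,\bar w)$ to the integer vector $\big(\lvert e\rvert-\lvert d\rvert\big)_{(d,e)\in E_\tau}\in\N^{E_\tau}$ is, by definition of $\stretchleq$, an order-reflecting map into $(\N^{E_\tau},\leq)$ with the componentwise order. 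By Dickson's lemma $(\N^{E_\tau},\leq)$ is a wqo, and the pullback of a wqo along an order-reflecting map is a wqo; this gives the claim. (One small point to check: that the induced isomorphism $h$ is compatible with taking consecutive pairs, so that controlling the differences along the edges of $E_\tau$ controls all ancestor intervals — this is immediate from additivity of lengths along a branch.)

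\smallskip
\textbf{Part~2 (strong upward compatibility), the main obstacle.} This is where the work lies. Suppose $u=(q,\bar w)\to v=(p,\bar v)$ via a transition $(q,\beta(\bar x,\bar y),p)$, so $\Tree^C\models\beta(\bar w,\bar v)$, and suppose $u\stretchleq u'=(q,\bar w')$. I need to produce $\bar v'$ with $v'=(p,\bar v')$, $u'\to v'$ (i.e. $\Tree^C\models\beta(\bar w',\bar v')$) and $v\stretchleq v'$. The idea is: the pair $(\bar w,\bar v)$ together with $C$ has a combined $\MCAT$, call it $\MCAT_C(\bar w,\bar v)$, a finite subtree of $\Q^*$; the hypothesis $u\stretchleq u'$ stretches the part of this tree spanned by $\bar w$ and $C$, and I must extend this stretching to the whole combined tree, keeping (i) the order type of $(\bar w,\bar v,C)$, (ii) all ancestor-interval lengths non-decreasing, and (iii) the ability to realise it inside $\Q^*$ (density of $\Q$ is what lets us always fit new branch labels in the correct lexicographic position). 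Since $\beta$ is a Boolean combination of atomic $\{=,\preceq,\KBleq\}$-constraints over $\bar w,\bar v,C$, and its truth is determined entirely by $\typ_C(\bar w,\bar v)$, condition (i) already guarantees $\Tree^C\models\beta(\bar w',\bar v')$. So the real task is purely geometric: given the finite tree $\MCAT_C(\bar w,\bar v)\subseteq\Q^*$ and a stretching of its ``$\bar w,C$-part'', extend it to a stretching of the whole tree embeddable in $\Q^*$. I would do this by a bottom-up (or rather root-to-leaf) relabelling: walk down the combined tree; whenever a node already belongs to the $\bar w,C$-part its new position is dictated by $h$; for a node lying strictly below the $\bar w,C$-part (i.e. ``new'' material contributed by $\bar v$), choose its new length as large as needed (certainly $\geq$ its old interval length above its parent) and use density of $\Q$ to pick the first coordinate so that lexicographic order with respect to already-placed siblings is preserved; for branching points of the combined tree that lie strictly inside an edge of the $\bar w,C$-part, the stretching of that edge gives us room (length went up, so there is at least as much ``space'' to place the branch point and then continue). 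A clean way to package this is to prove a standalone sublemma: \emph{any finite labelled subtree of $\Q^*$ with a prescribed stretching of a ``downward-closed-away-from-leaves'' sub-skeleton extends to a stretching of the whole tree realisable in $\Q^*$}, then apply it. For the inverse relation $\to^{-1}$ the argument is symmetric: now $(p,\bar v)\stretchleq(p,\bar v')$ is given and we must stretch the $\bar w$-part compatibly; the same geometric sublemma applies with the roles of the two coordinate-blocks swapped. I expect handling the lexicographic (second-coordinate) bookkeeping and the edge-interior branch points to be the fiddly part; the length bookkeeping alone is easy.

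\smallskip
\textbf{Part~3 (common upper bound).} Given $(q,\bar w)$ and $(q,\bar v)$ with $\typ_C(\bar w)=\typ_C(\bar v)$, let $h$ be the induced isomorphism $\MCAT_C(\bar w)\to\MCAT_C(\bar v)$. I want $(q,\bar u)$ above both. Work in the common combinatorial tree skeleton given by the order type; for each edge $(d,e)$ of the skeleton I have two candidate lengths, $\ell_{\bar w}(d,e)=\lvert e\rvert-\lvert d\rvert$ and $\ell_{\bar v}(d,e)=\lvert h(e)\rvert-\lvert h(d)\rvert$; set the target length to $\ell_{\bar u}(d,e)=\max$ of the two. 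This defines, branch by branch from the root, the desired lengths of all nodes of $\MCAT_C(\bar u)$; realise them in $\Q^*$ using density of $\Q$ exactly as in Part~2 (choosing first coordinates to respect the prescribed lexicographic order and to be consistent with the already-fixed constants $C$), and set $\bar u$ to be the tuple of leaves corresponding to the $\bar w$-labels. By construction $\typ_C(\bar u)$ equals the common order type, and every edge of $\MCAT_C(\bar u)$ is at least as long as the corresponding edge of $\MCAT_C(\bar w)$ and of $\MCAT_C(\bar v)$; summing along branches gives the ancestor-interval inequality, so $(q,\bar w)\stretchleq(q,\bar u)$ and $(q,\bar v)\stretchleq(q,\bar u)$. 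The only subtlety is ensuring the constants $C$ keep their fixed positions while we enlarge edge lengths elsewhere — but edges on or above $C$-nodes can simply be left at their true lengths (we only ever \emph{increase} lengths on edges not forced by $C$), and density of $\Q$ still gives room for the $\bar w$- and $\bar v$-branches to peel off at the correct lexicographic place.
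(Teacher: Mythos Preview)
Your proposal is correct and, for Part~1, essentially identical to the paper's proof (finitely many types and states, then Dickson on the vector of interval lengths). For Parts~2 and~3 you arrive at the right conclusions by slightly different routes, and it is worth noting the differences.

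For Part~2 the paper does not work with the finite combined tree $\MCAT_C(\bar w,\bar v)$ as you propose; instead it proves a stronger auxiliary fact: whenever $\bar w\stretchleq\bar w'$, the induced isomorphism $h:\MCAT_C(\bar w)\to\MCAT_C(\bar w')$ extends to a $\{\preceq,\KBleq,\sqcap\}$-injection $\hat h:\Q^*\to\Q^*$ of the \emph{entire} tree (built level by level, using at each level the density of $\Q$ to interleave new successors between the finitely many successors already determined by $h$). Once this global embedding exists, one simply sets $\bar v'=\hat h(\bar v)$ and everything follows automatically: $\typ_C(\bar w',\bar v')=\typ_C(\bar w,\bar v)$ because $\hat h$ is a $\sigma$-injection, and $\bar v\stretchleq\bar v'$ because $\hat h$ only increases distances. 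This sidesteps exactly the ``fiddly part'' you anticipate (branch points of the combined tree lying strictly inside an edge of the $\bar w,C$-part), since the global extension never needs to know where $\bar v$ sits. Your finite extension would also work, but the paper's formulation is cleaner and gives the inverse direction $\to^{-1}$ for free by symmetry.

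For Part~3 the paper again avoids building $\bar u$ from scratch via $\max$ of edge lengths. Instead it takes $d$ to be the maximal interval length appearing in $\MCAT_C(\bar w)$ and then repeatedly applies the gap-insertion operator $\iota^d_u$ (defined earlier in the paper and already known to preserve type and increase w.r.t.\ $\stretchleq$) at each non-constant node of $\MCAT_C(\bar v)$, top-down. The result $\bar u$ satisfies $(q,\bar v)\stretchleq(q,\bar u)$ by Lemma~\ref{lem:insertionIsStretchleqCompatible}, and every non-constant interval in $\MCAT_C(\bar u)$ now has length at least $d$, which forces $(q,\bar w)\stretchleq(q,\bar u)$. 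Your direct $\max$-construction is perfectly valid and arguably more transparent; the paper's version has the advantage of reusing the gap-insertion machinery that is needed anyway for Proposition~\ref{prop:noncontracting-to-stretching}.
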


\begin{definition}
  A \define{loop} is a finite run $r = (c_i)_{i\leq n}$
  with $c_0 = (q,\bar{w})$, $c_n = (q,\bar{v})$ and $\typ_C(\bar w) = \typ_C(\bar v)$.  
  We say that a loop $r = (c_i)_{i\leq n}$ is 
  \define{stretching} if $c_0 \stretchleq c_n$.
\end{definition}

\begin{lemma}\label{lem:characteriseAcceptingRuns}
  Let $\Aut{A}$ be a constraint automaton. 
  $\Aut{A}$ has an accepting run if and only if there are partial runs
  $r_1$, $r_2$ where $r_1$ starts in an initial configuration and ends
  in some configuration $c$ whose state is a final state, and where
  $r_2$ is a stretching loop starting in $c$.
\end{lemma}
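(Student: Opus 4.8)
The plan is to prove Lemma~\ref{lem:characteriseAcceptingRuns} by two implications, using the well-quasi-order machinery of Lemma~\ref{lem:CommonStretchLeqBoundExists} together with the classical ``ultimately periodic accepting run'' argument for \Buchi automata. The nontrivial direction is the forward one, so I would spend most of the effort there.

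\smallskip
\noindent\textbf{($\Leftarrow$) Easy direction.} Suppose we are given $r_1$ from an initial configuration to a configuration $c=(f,\bar w)$ with $f\in F$, and a stretching loop $r_2$ from $c$ to some $c'=(f,\bar v)$ with $c\stretchleq c'$. By Lemma~\ref{lem:StrongUpwardsCompat}(\ref{lem:stretchleqIsWQOPart2}), strong upwards compatibility of $\to$ with respect to $\stretchleq$, the finite run $r_2$ can be ``lifted'' along $c\stretchleq c'$: since $c\to\cdots\to c'$ and $c\stretchleq c'$, there is a run from $c'$ to some $c''$ with $c'\stretchleq c''$, and the state of $c''$ is again $f$ because $\stretchleq$ preserves states. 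Iterating this, we obtain an infinite run $r_1 r_2 r_2' r_2'' \cdots$ starting in an initial configuration in which the final state $f$ occurs infinitely often; hence $\Aut{A}$ has an accepting run. (One should note the lifted run is genuinely a run, i.e.\ consecutive configurations are related by $\to$; this is exactly what strong upwards compatibility gives, step by step.)

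\smallskip
\noindent\textbf{($\Rightarrow$) Main direction.} Assume $\Aut{A}$ has an accepting run $\rho=(d_i)_{i\in\N}$; so some final state $f$ occurs at infinitely many positions, say at $i_0<i_1<i_2<\cdots$, and we may take the initial segment $r_1$ of $\rho$ up to position $i_0$. Consider the infinite sequence of configurations $(d_{i_k})_{k\in\N}$, all of which have state $f$. There are only finitely many order types $\typ_C$ on $n$-tuples of words (finitely many $\sigma$-isomorphism types of $\MCAT_C$), so by pigeonhole infinitely many of the $d_{i_k}$ share one fixed type; restrict to that subsequence, renaming it $(e_k)_{k\in\N}$ with $e_k\to^*e_{k+1}$, all states $f$, all the same $\typ_C$. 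Now apply the well-quasi-order property, Lemma~\ref{lem:CommonStretchLeqBoundExists}(\ref{lem:stretchleqIsWQOPart1}): the infinite sequence $(e_k)_k$ in the wqo $(Q\times(\Q^*)^n,\stretchleq)$ must contain two indices $k<\ell$ with $e_k\stretchleq e_\ell$. The segment of $\rho$ from $e_k$ to $e_\ell$ is then a loop (same state, same type) which is stretching ($e_k\stretchleq e_\ell$), and the prefix of $\rho$ up to $e_k$ is a run from an initial configuration ending in $e_k$, whose state $f$ is final. Setting $c\coloneqq e_k$, these are the required $r_1$ and $r_2$. \qed

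\smallskip
\noindent The main obstacle is making sure the two ingredients interlock cleanly: for ($\Rightarrow$) one must first collapse to a subsequence of constant order type \emph{before} invoking the wqo, because $\stretchleq$ only compares configurations of equal type (indeed, for $e_k\stretchleq e_\ell$ to even be meaningful we need $\typ_C(e_k)=\typ_C(e_\ell)$), and one must check that the wqo argument (an infinite sequence in a wqo has an increasing pair) is applied to a sequence that is genuinely infinite after the pigeonhole step. For ($\Leftarrow$) the subtlety is that repeatedly lifting the loop along $\stretchleq$ via strong upwards compatibility yields configurations $c'\stretchleq c''\stretchleq\cdots$ which need \emph{not} have the same data tuple, but do retain state $f$ and remain legitimate runs; this is enough for \Buchi acceptance, and no finiteness of the configuration space is needed. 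Lemma~\ref{lem:CommonStretchLeqBoundExists}(\ref{lem:stretchleqIsWQOPart3}) is not needed for this lemma but will presumably be used later to merge stretching loops.
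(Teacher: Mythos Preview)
Your proof is correct and follows essentially the same approach as the paper. One minor simplification worth noting: in the forward direction the paper applies the wqo property directly to the infinite sequence of final-state configurations, without first restricting to a fixed final state or a fixed order type; since $\stretchleq$ is a wqo on \emph{all} of $Q\times(\Q^*)^n$ and by definition $(q,\bar w)\stretchleq(p,\bar v)$ already forces $q=p$ and $\typ_C(\bar w)=\typ_C(\bar v)$, the increasing pair produced by the wqo automatically has matching state and type, so your pigeonhole pre-filtering is redundant (though harmless).
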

\begin{proof}
  $(\Rightarrow)$. Let $r=(c_i)_{i\in\N}$ be an accepting run.
  Since $r$ contains infinitely many configurations with a final state
  and $\stretchleq$ is a wqo,  we can find
  numbers $n_1 < n_2$  such that $c_{n_1}
  \stretchleq c_{n_2}$ 
  whence $(c_n)_{n\leq n_1}$, $(c_n)_{n_1 \leq n \leq n_2}$ are the
  desired runs. 
  
  \noindent $(\Leftarrow)$. Assume $r_1$ is a run from some initial
  configuration to $c_1$ whose state is a final state $f\in F$ and
  $r_2$ is a stretching loop starting in $c_1$ and ending in $c_2$.
  Since $c_1\stretchleq c_2$,
  iterated use of strong upwards compatibility
  (Lemma~\ref{lem:StrongUpwardsCompat}) yields runs $r_i$
  from $c_{i-1}$ to $c_i$ such that $c_{i-1}\stretchleq c_i$ for all
  $i\geq 3$. 
  Clearly, the composition of $r_1, r_2, r_3, r_4, \dots$ is an accepting
  run. 
\qed
\end{proof}

\subsection{Stretching Loops and Types of Runs}

\begin{definition}
  Let $r=(c_i)_{0\leq i \leq n}$ be a finite run,  
  with $c_0 = (q,\bar{w})$ and
  $c_n = (p, \bar{v})$. Setting 
  $\pi = \typ_C(\bar{w},\bar{v})$,  we say $r$ has \define{type} $\typ(r) =  (q, \pi, p)$.
\end{definition}

\begin{definition}
  Let $\bar w, \bar v$ be $k$-tuples of words such that $\typ_C(\bar
  w) = \typ_C(\bar v)$ and
  let  $h$ be the induced  isomorphism from $\MCAT_C(\bar w)$
  to $\MCAT_C(\bar v)$. $(\bar w, \bar v)$ is called
  \define{contracting} if one of the following holds.
  \begin{enumerate}
  \item There is some $d \in \MCAT_C(\bar w)$ such that $h(d) \prec
    d$.
  \item There are $d,e \in \MCAT_C(\bar w)$ such that $d \prec e$,
    $h(e) = e$ and $d \prec h(d)$.
  \end{enumerate}
  We call a loop $r$ from $(q, \bar w)$ to $(q, \bar v)$
  contracting if $(\bar w, \bar v)$ is contracting.  Otherwise,
  we call it (and its type) noncontracting.
\end{definition}
\begin{remark}
  The type of a loop determines whether it is noncontracting. Let us explain
  the term `contracting'. Fix a loop from $(q, \bar w)$ to $(q,
  \bar v)$. The  isomorphism $h:\MCAT_C(\bar w) \to
  \MCAT_C(\bar v)$ relates for every pair $\bigsqcap_{k\in K} w_k
  \prec \bigsqcap_{l\in L} w_l$ the interval $(\bigsqcap_{k\in K} w_k
  , \bigsqcap_{l\in L} w_l)$ with the interval $(\bigsqcap_{k\in K}
  v_k , \bigsqcap_{l\in L} v_l)$.  By definition, for
  every contracting loop there is a pair $(K,L)$ such that (  setting
  $\bigsqcap_{k \in \emptyset} w_k=\varepsilon$)
  \begin{equation*}
    \lvert  \bigsqcap_{l\in L} w_l \rvert - 
    \lvert \bigsqcap_{k\in K} w_k  \rvert  >
    \lvert  \bigsqcap_{l\in L} v_l \rvert - 
    \lvert \bigsqcap_{k\in K} v_k  \rvert.
  \end{equation*}
\end{remark}

The technical core of this section shows that if an automaton admits a
noncontracting loop then it admits a stretching loop with the same
initial and final state. This allows to rephrase the conditions
from Lemma \ref{lem:characteriseAcceptingRuns} in terms of types. 
The proof of this claim requires some definitions and
preparatory lemmas. 

\begin{definition}
  Let $u$ be a word and $m\in \N$. We define 
  the \define{insertion of an $m$-gap at $u$} to be
  $\iota_u^m: \Q^* \to \Q^*$ given by
  $\iota_u^m(w) =
  \begin{cases}
    w &\text{if }u\not\preceq w, \\
    u0^mv &\text{if } w = uv.
  \end{cases}$

  Given a finite run $r$, the 
  sequence $\iota_u^m(r)$ obtained by applying $\iota_u^m$ to each
  data value of  $r$ is  the
  \define{run obtained by insertion of an $m$-gap at $u$ in $r$}.   
\end{definition}
For $r=(c_i)_{i\in I}$ and $r'=(d_i)_{i\in I}$ we write
$r\stretchleq r'$ if $c_i \stretchleq d_i$ for all $i\in I$. 
Note that the insertion of a gap preserves $\preceq, \KBleq$ and
$\sqcap$ in both directions. 

\begin{lemma}\label{lem:insertionIsStretchleqCompatible}
  Given a run $r$ and a word $u$ such that $u$ is not
  a prefix of any 
  constant. The sequence $\iota_u^m(r)$  is
  indeed a run $r'$ of the same 
  type and $r \stretchleq r'$. 
\end{lemma}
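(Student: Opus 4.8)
The plan is to verify the three assertions in the order they are stated: that $\iota_u^m(r)$ is a run, that it has the same type as $r$, and that $r \stretchleq \iota_u^m(r)$. Throughout I write $\iota = \iota_u^m$ and $r' = \iota(r)$, where $r = (c_i)_{i\in I}$ with $c_i = (q_i, \bar w_i)$ and $c_i' = (q_i, \iota(\bar w_i))$, applying $\iota$ componentwise to tuples.

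First I would establish the key local property of $\iota$ already flagged in the text: $\iota$ preserves $\preceq$, $\KBleq$ and $\sqcap$ in both directions, and moreover fixes every word that is a prefix of a constant (since $u$ is not a prefix of any constant, $u\not\preceq c$ for every $c\in C$, so $\iota(c)=c$). The preservation of $\preceq$ is a short case analysis on whether $u$ is a prefix of the arguments; preservation of $\sqcap$ follows because $u\preceq w\sqcap v$ iff $u\preceq w$ and $u\preceq v$, and in that case $\iota(w)\sqcap\iota(v) = u0^m\big((u^{-1}w)\sqcap(u^{-1}v)\big) = \iota(w\sqcap v)$, while if say $u\not\preceq w$ then $u\not\preceq w\sqcap v$ and neither word gets a gap below the split point; preservation of $\KBleq$ then follows from its definition via $\preceq$ and $\sqcap$ together with the fact that inserting $0^m$ does not change which letter of $\Q$ sits immediately after the common prefix. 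Since the guard $\beta$ on each transition of $\Aut A$ is a Boolean combination of atomic formulas $v\ast v'$ with $\ast\in\{=,\preceq,\KBleq\}$ over the data variables and the constants, and $\iota$ is a $\{\preceq,\KBleq\}\cup C$-embedding on the relevant finite set of words, we get $\Tree^C\models\beta(\bar w_i,\bar w_{i+1})$ iff $\Tree^C\models\beta(\iota(\bar w_i),\iota(\bar w_{i+1}))$; hence $c_i\to c_{i+1}$ implies $c_i'\to c_{i+1}'$ using the very same transition, so $r'$ is a run. The same two-directional preservation applied to the concatenation $(\bar w_0,\dots,\bar w_n,C)$ shows $\MCAT_C(\bar w_0,\dots,\bar w_n)\cong\MCAT_C(\iota(\bar w_0),\dots,\iota(\bar w_n))$ via $\iota$ itself, and since $\iota$ fixes the constants this is the canonical isomorphism; restricting to $(\bar w_0,\bar w_n)$ gives $\typ_C(\bar w_0,\bar w_n)=\typ_C(\iota(\bar w_0),\iota(\bar w_n))$, i.e.\ $\typ(r')=\typ(r)$. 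For the same-type claim at each position one only needs $\typ_C(\bar w_i)=\typ_C(\iota(\bar w_i))$, which is the $n$-tuple case of the same fact.

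It remains to check $c_i \stretchleq c_i'$ for each $i$, i.e.\ that $\iota$ never contracts any interval of the $\MCAT$. Fix $i$ and write $h=\iota$ restricted to $\MCAT_C(\bar w_i)$; we have already seen $h$ is the induced isomorphism to $\MCAT_C(\iota(\bar w_i))$. Take $d\prec e$ in $\MCAT_C(\bar w_i)$; I must show $|h(e)|-|h(d)|\ge|e|-|d|$. If $u\not\preceq e$ then $u\not\preceq d$ either, so $h(d)=d$, $h(e)=e$, and the inequality is an equality. If $u\preceq d$ then $h(d)=u0^m(u^{-1}d)$ and $h(e)=u0^m(u^{-1}e)$, so again $|h(e)|-|h(d)|=|e|-|d|$. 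The only remaining case is $u\preceq e$ but $u\not\preceq d$, where $h(d)=d$ and $h(e)=u0^me'$ with $e=ue'$, giving $|h(e)|-|h(d)| = |u|+m+|e'|-|d| = |e|-|d|+m \ge |e|-|d|$. So in every case the interval does not shrink (and grows by exactly $m$ across the one place where $u$ is inserted), which is precisely $c_i\stretchleq c_i'$. Doing this for all $i\in I$ yields $r\stretchleq r'$, completing the proof.

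I do not expect a serious obstacle here; the only mildly delicate point is making the case split for the $\stretchleq$ inequality exhaustive and remembering the boundary case $d=\varepsilon$ (where $\iota(\varepsilon)=\varepsilon$ since $u\ne\varepsilon$ would be needed — but note $u\preceq$ no constant does not force $u\ne\varepsilon$; if $u=\varepsilon$ then $\varepsilon$ would be a prefix of every constant, contradicting the hypothesis unless $C=\emptyset$, in which case $\iota$ simply prepends $0^m$ to everything and all claims are immediate). The bookkeeping that $\iota$ genuinely commutes with $\sqcap$ on the finitely many words involved is the one spot worth writing out carefully, since both the type-preservation and the non-contraction arguments rest on it.
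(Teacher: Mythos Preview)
Your proposal is correct and follows exactly the approach the paper intends: the paper does not give a detailed proof of this lemma at all, only the one-line remark preceding it that ``the insertion of a gap preserves $\preceq$, $\KBleq$ and $\sqcap$ in both directions,'' leaving the rest implicit. You have faithfully unpacked this into the three claims (run, same type, $\stretchleq$), using precisely that preservation fact together with the observation that $\iota$ fixes the constants, and your case analysis for the non-contraction of intervals is clean and complete.
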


Let $w, v\in \Q^*$. We say $w$ is \define{incomparable left} of $v$ if
$w\KBleq v$ and $w \not\preceq v$. In the same situation we
call $v$ \define{incomparable right} of $w$.

\begin{lemma}\label{lem:RightaboveShiftsRight}
  Let $\bar w, \bar v$ be $k$-tuples with
  $\typ(\bar w) = \typ(\bar v)$.  If $w_i$ is incomparable left
  (right) of $v_i$ and $v_i\preceq w_j$, then $w_j$ is incomparable
  left (right) of $v_j$ and  incomparable right (left)
  of $w_i$.
  %
\end{lemma}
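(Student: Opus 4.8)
The plan is to argue purely order-theoretically on $\Q^*$, relying on one standard fact about the lexicographic order: if $a \KBleq b$ and $a \not\preceq b$, then the greatest common prefix $u = a \sqcap b$ is a strict prefix of both $a$ and $b$, and writing $u p \preceq a$ and $u q \preceq b$ for the two symbols immediately after the split, we have $p < q$. I would prove the ``left'' case in detail; the ``right'' case then follows by the obvious symmetry, reversing the order $<$ on $\Q$. So assume $w_i$ is incomparable left of $v_i$ and $v_i \preceq w_j$.

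The first step, which does not use the equality of types, is to show that $w_j$ is incomparable right of $w_i$. Set $u = w_i \sqcap v_i$. The fact above gives $u \prec w_i$, $u \prec v_i$, and the symbol $p$ of $w_i$ at position $|u|$ is strictly smaller than the symbol $q$ of $v_i$ at position $|u|$. Since $v_i \preceq w_j$, the word $w_j$ has prefix $u$ followed again by the symbol $q$; hence $w_i$ and $w_j$ branch apart exactly at $u$, so $w_i \sqcap w_j = u$, and $p < q$ yields $w_i \KBleq w_j$ and $w_i \not\preceq w_j$. This is the first of the two assertions.

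For the second assertion, that $w_j$ is incomparable left of $v_j$, I would invoke $\typ(\bar w) = \typ(\bar v)$. The elements $w_i$, $w_j$ and $w_i \sqcap w_j$ all lie in $\MCAT(\bar w)$, and the induced isomorphism $\MCAT(\bar w) \to \MCAT(\bar v)$ sends $w_i \mapsto v_i$, $w_j \mapsto v_j$ and preserves $\preceq$ and $\KBleq$; hence the conclusion of the first step transfers verbatim to $\bar v$, giving that $v_i$ is incomparable left of $v_j$. Now repeat the computation of the first step with $u' = v_i \sqcap v_j$ in place of $u$: from $u' \prec v_i \preceq w_j$ one reads off that $w_j$ and $v_j$ branch apart exactly at $u'$, with the $w_j$-side carrying the strictly smaller symbol. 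Therefore $w_j \KBleq v_j$ and $w_j \not\preceq v_j$, which completes the proof.

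The argument is entirely routine; the one thing needing care is the bookkeeping at each split point --- tracking which branch leads to which word, in which left/right order, and verifying that the split point is a \emph{strict} prefix of each word involved so that the ``next symbol'' actually exists, which each time follows from the relevant $\not\preceq$-incomparability. Note also that plain $\typ$ rather than $\typ_C$ suffices here, since $w_i$ and $w_j$ are themselves named constants of $\MCAT(\bar w)$ and are thus already fixed by the induced isomorphism.
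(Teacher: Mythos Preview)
Your proof is correct and follows essentially the same three-step structure as the paper: derive that $w_j$ is incomparable right of $w_i$, transfer this via the type isomorphism to $v_i$ incomparable left of $v_j$, and then push the latter down along $v_i \preceq w_j$ to get $w_j$ incomparable left of $v_j$. The paper's proof is terser---it argues directly with the order relations (e.g.\ ``$w_i \KBleq v_i \preceq w_j$ gives $w_i \KBleq w_j$'') rather than unpacking split points and next symbols---and it presents the steps in a different order, stating the type-equality consequence first; your explicit treatment of the branching points is a perfectly good, if more verbose, way to reach the same conclusions.
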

\begin{proof}
  By type equality, we have that $v_i$ is incomparable left of
  $v_j$, whence the same holds for its descendant $w_j$. From
  $w_i \KBleq v_i \preceq w_j$ follows $w_i \KBleq w_j$, and $w_i \not\preceq
  w_j$ as $w_i \not\preceq v_i$.
  %
  %
  \qed
\end{proof}

\begin{proposition}\label{prop:noncontracting-to-stretching}
  Let $r$ be a noncontracting loop. There is a stretching
  loop $r'$ such that $r\stretchleq r'$.
\end{proposition}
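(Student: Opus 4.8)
The plan is to analyse the loop $r$ — running from $(q,\bar w)$ to $(q,\bar v)$ — through its induced isomorphism $h\colon\MCAT_C(\bar w)\to\MCAT_C(\bar v)$, and then to turn $r$ into a stretching loop by repeated gap insertions, invoking Lemma~\ref{lem:insertionIsStretchleqCompatible}. First I would unwind the hypothesis. Since the first clause in the definition of a contracting loop fails, $h(d)\not\prec d$ for all $d$; and since the second clause fails while $h$ preserves $\prec$, whenever $d\prec e$ and $h(e)=e$ the words $h(d)$ and $d$ are both proper prefixes of $e$ with neither a proper prefix of the other, hence $h(d)=d$. Thus the fixed-point set $F:=\Set{d\in\MCAT_C(\bar w) | h(d)=d}$ is downward closed in $\MCAT_C(\bar w)$ and contains $\varepsilon$ and every constant; its complement $U:=\MCAT_C(\bar w)\setminus F$ is an up-set that $h$ maps bijectively onto $\MCAT_C(\bar v)\setminus F$, and every $d\in U$ satisfies $d\neq h(d)$ and $h(d)\not\prec d$, i.e.\ $d\prec h(d)$ or $d\perp h(d)$. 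Unfolding the definition of $\stretchleq$ and using that $r$ is a loop, $r$ is stretching if and only if $\phi\colon d\mapsto\lvert h(d)\rvert-\lvert d\rvert$ is non-decreasing along every $\prec$-chain of $\MCAT_C(\bar w)$; as $\phi$ vanishes on $F$, only its restriction to $U$ can obstruct stretching.

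Next I would record how gap insertions act on $\phi$. If $u$ is not a prefix of a constant then, by Lemma~\ref{lem:insertionIsStretchleqCompatible}, $\iota_u^m(r)$ is a run of the same type with $r\stretchleq\iota_u^m(r)$; and since $\iota_u^m$ preserves $\preceq,\KBleq,\sqcap$ and fixes all constants, the copy $\iota_u^m(d)$ of a node $d$ is mapped to $\iota_u^m(h(d))$, so its new $\phi$-value is $\phi(d)+m\bigl([u\preceq h(d)]-[u\preceq d]\bigr)$. In particular $\phi$ still vanishes on $F$, and $\phi(d)$ changes only when $u$ is a prefix of exactly one of $d$ and $h(d)$. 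For $d\in U$ the relevant choice is $u_d:=h(d)$: it is not a prefix of any constant — otherwise $h(d)\in C\subseteq F$ since $C$ is prefix-closed, forcing $d\in F$ — and inserting an $m$-gap at $u_d$ raises $\phi(d)$ by $m$, because $h(d)\preceq h(d)$ while $h(d)\not\preceq d$. Using that $h$ and $h^{-1}$ preserve $\preceq$ (and Lemma~\ref{lem:RightaboveShiftsRight} to track the incomparable moves of $h$), one checks that such an insertion raises $\phi$ exactly on the cone $\Set{e | d\preceq e,\ h(d)\not\preceq e}$ and lowers it only on $\Set{e | h(d)\preceq e,\ d\not\preceq e}$, the latter being empty unless $d\perp h(d)$.

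The heart of the argument, and the step I expect to be the main obstacle, is to schedule these insertions so that after all of them $\phi$ is non-negative and non-decreasing on all of $U$. My plan is to treat the nodes of $U$ deepest first — treating a node $d$ only once all its $\prec$-successors have been treated — inserting at $h(d)$ a gap whose size is drawn from a rapidly growing scale, chosen to bring $\phi(d)$ to a non-negative value not exceeding those already fixed at the $\prec$-successors of $d$, while staying small enough, relative to the sizes still available for the $\prec$-predecessors of $d$, that the collateral effects of the insertion — decrements only at nodes incomparable to $d$, and the sub-cone below $h(d)$ that it fails to lift when $d\prec h(d)$ — do not undo monotonicity established so far (if necessary one adds a few further gaps below $h(d)$ to compensate). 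Once $\phi$ is monotone on $U$, the run $r'$ obtained after all these insertions is, by Lemma~\ref{lem:insertionIsStretchleqCompatible}, a run of the same state and type as $r$ — hence a loop — whose first configuration is $\stretchleq$ its last; that is, $r'$ is a stretching loop. Finally $r\stretchleq r'$ follows by transitivity from the chain $r\stretchleq\iota_{u_1}^{m_1}(r)\stretchleq\dots\stretchleq r'$ given by the successive applications of that lemma.
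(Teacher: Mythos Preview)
Your overall strategy---iterated gap insertions at points $h(d)$, each justified by Lemma~\ref{lem:insertionIsStretchleqCompatible}---is exactly the paper's, and your preliminary analysis is correct and in places cleaner than the paper's own. The observation that the fixed-point set $F$ of $h$ is $\prec$-downward closed, the reformulation of ``stretching'' as monotonicity of $\phi(d)=\lvert h(d)\rvert-\lvert d\rvert$ along $\prec$-chains, and the exact bookkeeping of how an insertion at $u$ changes $\phi$ are all sound.

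The genuine gap is precisely the step you flag as the main obstacle: you do not give a terminating schedule. ``Deepest first with a rapidly growing scale'' does not work as stated, because in the case $d\perp h(d)$ an insertion at $h(d)$ \emph{lowers} $\phi$ on every $e$ with $h(d)\preceq e$ and $d\not\preceq e$; such $e$ sit on branches $\prec$-incomparable to $d$ and may already have been ``treated'', and your parenthetical compensating insertions can themselves decrement $\phi$ on yet other branches. Nothing in a $\prec$-depth order bounds this sideways cascade. The paper's fix is to order by the lexicographic order $\KBleq$ rather than by depth: it classifies each problematic pair $(u_1,u_2)$ according to whether $u_2$ is incomparable-left of, incomparable-right of, or $\preceq$-comparable to $h(u_2)$, and in an L-step inserts at $h(u_2)$ for the $\KBleq$-\emph{minimal} such $u_2$. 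Lemma~\ref{lem:RightaboveShiftsRight} then guarantees that any newly created problematic pair has second component strictly $\KBleq$-right of $u_2$, so the L-phase (and symmetrically the R-phase) terminates by finiteness of the type. Only once $L$ and $R$ are empty is a D-step performed, where noncontracting forces $u_2\prec h(u_2)$ and new D-problems move strictly $\prec$-downward. This two-level progress measure---$\KBleq$ inside L/R, then $\prec$ for D---is the ingredient your plan is missing.
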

\begin{proof}
  Let $r$ from $(q, \bar w)$ to $(q, \bar v)$ be a noncontracting loop and
  $h:\MCAT_C(\bar w) \to \MCAT_C(\bar v)$ the induced
  isomorphism.  
  We iteratively define a sequence $r = r_0 \stretchleq r_1
  \stretchleq \dots \stretchleq r_n $ of runs until $r_n$ is stretching.


  We call a pair $(u_1, u_2)\in \MCAT_C(\bar w)^2$ \emph{problematic} (with
  respect to $r$) if $u_1 \preceq u_2$ and
  $\lvert u_2 \rvert - \lvert u_1 \rvert > \lvert h(u_2) \rvert -
  \lvert h(u_1) \rvert$.
  Recall that in this case $u_2$ and $h(u_2)$ are not prefix of any
  constant $c$ from $C$ because $h$ fixes all such elements. Let
  $P_r$ be the set of all problematic
  pairs. We
  split the set of all problematic pairs into three parts, which we
  handle separately (cf.~Figure \ref{fig:Step1} for an example). Let
  \allowdisplaybreaks
  \begin{align*}
    L_r &= \Set{ (u_1, u_2) \in P_r |
        u_2 \text{ incomparable left of } h(u_2)},  \\
    R_r &= \Set{ (u_1, u_2) \in P_r |
        u_2 \text{ incomparable right of } h(u_2)}, \text{ and} \\
    D_r &= \Set{ (u_1, u_2) \in P_r |
        u_2 \text{ comparable to } h(u_2)}. 
  \end{align*}
  \noindent\textbf{L-Step:}
  If $L_r$ is nonempty, choose the $\KBleq$-minimal $u_2$ such that
  there is $u_1$ with $(u_1, u_2)\in L_r$.  Now fix $u_1$ such that
  $(u_1,u_2)\in L_r$ and
  $d \coloneqq (\lvert u_2 \rvert - \lvert u_1 \rvert) - (\lvert
  h(u_2) \rvert - \lvert h(u_1) \rvert)$
  is maximal.  Let $\iota = \iota_{h(u_2)}^d$ be the insertion of a
  $d$ gap at $h(u_2)$ and   $r' = \iota(r)$.
  Denote by $\iota(\bar{w})$ ($\iota(\bar{v})$) the data
  values of the first (last, respectively) configuration of $r'$.  Let
  $h':\MCAT_C(\iota(\bar{w})) \to \MCAT_C(\iota(\bar{v}))$ be the corresponding
  isomorphism. 
  %
  \setlength{\textfloatsep}{0.5\textfloatsep}
  \begin{figure}[t]
    \centering
    \begin{tikzpicture}[
  every node/.style={
    draw,
    font=\footnotesize,
    inner sep=0pt,
    outer sep=0pt,
    minimum size=8.5pt,
    label distance=3pt,
  },
  semithick,
  transform shape
  ]

  \newcommand{\mydot}{\tikz{\draw[fill=black] (0,0) circle[radius=1.15pt];}}

  \tikzset{h-image/.style={fill=black!25}}

  \pgfkeys{/tikz/mygrow/.code={\pgfkeysalso{grow via three points={%
          one child at (0,-#1) and two children at (0,-#1) and (0.7,-#1)}}}}

  \matrix[draw=none,rectangle,inner sep=0pt,column sep=10mm] {
    \node[circle,label=180:$u_1$] (root-1) {}
    [mygrow=1]
    child {
      node[circle,label=180:$u_2$] {\mydot}
      [mygrow=0.5]
      child {
        node[diamond,label=180:$x_1$] {}
        child {
          node[circle,h-image,label=180:$h(u_1)$] {}
          child {
            node[circle,h-image,label=180:$h(u_2)$] (h-u2-1) {\mydot}
            child {
              node[diamond,label=180:$x_2$] {\mydot}
            }
            child {
              node[diamond,h-image,label={[label distance=1pt]10:$h(x_1)$}] {}
              child[mygrow=1] {
                node[regular polygon,regular polygon sides=4,label=180:$y_1$] {}
                [mygrow=0.5]
                child {
                  node[diamond,h-image,label=180:$h(x_2)$] {\mydot}
                  child {
                    node[regular polygon,regular polygon sides=4,label=180:$y_2$] {\mydot}
                  }
                }
              }
              child {
                child {
                  node[regular polygon,regular polygon sides=4,h-image,label=0:$h(y_1)$] {}
                  [mygrow=1]
                  child {
                    node[regular polygon,regular polygon sides=4,h-image,label=0:$h(y_2)$] {\mydot}
                  }
                }
              }
            }
          }
        }
      }
    };
    &[-5mm]
  \node[circle] (root-2) {}
  [mygrow=1]
  child {
    node[circle] {\mydot}
    [mygrow=0.5]
    child {
      node[diamond] {}
      child {
        node[circle,h-image] {}
        [mygrow=1]
        child {
          node[circle,h-image] (h-u2-2) {\mydot}
          [mygrow=0.5]
          child {
            node[diamond] {\mydot}
          }
          child {
            node[diamond,h-image] {}
            child[mygrow=1] {
              node[regular polygon,regular polygon sides=4] {}
              [mygrow=0.5]
              child {
                node[diamond,h-image] (h-x2-2) {\mydot}
                [mygrow=0.5]
                child {
                  node[regular polygon,regular polygon sides=4] {\mydot}
                }
              }
            }
            child {
              child {
                node[regular polygon,regular polygon sides=4,h-image] {}
                [mygrow=1]
                child {
                  node[regular polygon,regular polygon sides=4,h-image] {\mydot}
                }
              }
            }
          }
        }
      }
    }
  };
  &
  \node[circle] (root-3) {}
  [mygrow=1]
  child {
    node[circle] {\mydot}
    [mygrow=0.5]
    child {
      node[diamond] {}
      child {
        node[circle,h-image] {}
        [mygrow=1]
        child {
          node[circle,h-image] {\mydot}
          [mygrow=0.5]
          child {
            node[diamond] {\mydot}
          }
          child {
            node[diamond,h-image] {}
            child[mygrow=1] {
              node[regular polygon,regular polygon sides=4] {}
              [mygrow=1]
              child {
                node[diamond,h-image] (h-x2-3) {\mydot}
                [mygrow=0.5]
                child {
                  node[regular polygon,regular polygon sides=4] {\mydot}
                }
              }
            }
            child {
              child {
                node[regular polygon,regular polygon sides=4,h-image] {}
                [mygrow=1]
                child {
                  node[regular polygon,regular polygon sides=4,h-image] (h-y2-3) {\mydot}
                }
              }
            }
          }
        }
      }
    }
  };
  &
  \node[circle] (root-4) {}
  [mygrow=1]
  child {
    node[circle] {\mydot}
    [mygrow=0.5]
    child {
      node[diamond] {}
      child {
        node[circle,h-image] {}
        [mygrow=1]
        child {
          node[circle,h-image] {\mydot}
          [mygrow=0.5]
          child {
            node[diamond] {\mydot}
          }
          child {
            node[diamond,h-image] {}
            child[mygrow=1] {
              node[regular polygon,regular polygon sides=4] {}
              [mygrow=1]
              child {
                node[diamond,h-image] {\mydot}
                [mygrow=0.5]
                child {
                  node[regular polygon,regular polygon sides=4] {\mydot}
                }
              }
            }
            child {
              child {
                node[regular polygon,regular polygon sides=4,h-image] {}
                [mygrow=1.5]
                child {
                  node[regular polygon,regular polygon sides=4,h-image] (h-y2-4) {\mydot}
                }
              }
            }
          }
        }
      }
    }
  };
  \\
};

\begin{scope}[on background layer={gray,dashed,semithick}]
  \draw (h-u2-1.north) -- (h-u2-2.north);
  \draw (h-u2-1.north) -- (h-u2-1.north -| h-u2-2.center);

  \draw (h-x2-2.north) -- (h-x2-3.north);
  \draw (h-x2-2.north) -- (h-x2-2.north -| h-x2-3.center);

  \draw (h-y2-3.north) -- (h-y2-4.north);
  \draw (h-y2-3.north) -- (h-y2-3.north -| h-y2-4.center);
\end{scope}

\end{tikzpicture}

    \caption{
      Example for Proposition~\ref{prop:noncontracting-to-stretching}:
      In the first tree $(u_1,u_2)$ is problematic , insertion of a
      gap (D-Step) at
      $h(u_2)$ makes (the pair corresponding to) $(x_1,x_2)$ problematic;
      insertion of a gap (L-Step) at $h(x_2)$ makes $(y_1,y_2)$ problematic;
      insertion of a gap (L-Step) at $h(y_2)$ makes the tree stretching.}
    \label{fig:Step1}
  \end{figure}
  By definition the set
  $L_{r'} = \Set{ (x_1, x_2) \in P_{r'} |
    x_2 \text{ incomparable left of } h'(x_2)}$  
  does not contain a pair $(u,\iota(u_2))$ for any $u
  \in\MCAT_C(\iota(\bar{w}))$.  Nevertheless, $r'$ may admit
  problematic pairs that are not problematic with respect to $r$.
  This can happen if there are $x_1,x_2\in \MCAT_C(\bar w)$ such that $x_1
  \prec h(u_2) \preceq x_2$ holds, but $h(x_1) \prec h(u_2) \preceq
  h(x_2)$ does not. Then, the distance
  between $\iota(x_1)$ and $\iota(x_2)$ is 
  greater than the distance between $x_1$ and
  $x_2$ (by $d$). On the other hand, either both or none of
  $h'(\iota(x_1))$ and $h'(\iota(x_2))$ are shifted by
  the insertion of the gap whence their distance
  is equal to the distance of  $h(x_1)$ and $h(x_2)$.
  
  In this case, possibly $(\iota(x_1), \iota(x_2))$ is problematic
  w.r.t.~$r'$ 
  while $(x_1, x_2)$ is not problematic w.r.t~$r$. Application of
  Lemma~\ref{lem:RightaboveShiftsRight} shows that then $x_2$ is
  incomparable left of $h(x_2)$ and $u_2$ is 
  incomparable left of $x_2$ whence 
  the same holds for $\iota(x_2), h'(\iota(x_2)) = \iota(h(x_2))$ and
  $\iota(u_2)$. 
  Thus, if $(\iota(x_1),\iota(x_2))$ is problematic, then
  $(\iota(x_1),\iota(x_2)\in L_{r'}$ and $\iota(u_2)$ is strictly 
  incomparable left of $\iota(x_2)$.

  Thus,
  iteration of this step  only creates  problematic pairs that
  are more and more to the right with respect to $\typ_C(\bar w_n) =
  \typ_C(\iota(\bar w))$.
  Since $\typ_C(\bar w_n)$ is finite, we eventually  
  do not introduce new
  problematic pairs and obtain a run $r_{i}$ such that  $L_{r_i} =
  \emptyset$  and $r \stretchleq r_i$ because 
  $r_i$ results from insertion of several gaps in $r$.

  \smallskip
  \noindent\textbf{R-Step}: 
  If $R_r\neq\emptyset$, proceed as in (L-Step)
  all  ``left'' and ``right''.

  \smallskip
  \noindent\textbf{D-Step}: 
  If $L_r = R_r = \emptyset$ and $r$ is not stretching, then
  $D_r \ne \emptyset$.  Choose $u_2$ $\KBleq$-minimal in $\MCAT(\bar w)$
  such that there is some $u_1$ with $(u_1,u_2)\in D_r$ and choose
  $u_1\prec u_2$ in $\MCAT_C(\bar w)$ such that
  $d \coloneqq (\lvert u_2 \rvert - \lvert u_1 \rvert) - (\lvert h(u_1) \rvert
  - \lvert h(u_2) \rvert)$
  is maximal. Since $r$ is not contracting we have
  $u_2 \preceq h(u_2)$ and $u_1\preceq h(u_1)$. Assume $u_2 = h(u_2)$,
  then $u_1 \prec h(u_1)$ as $(u_1,u_2) \in D$. This contradicts that
  $r$ is not contracting.  Thus $u_2 \prec h(u_2)$. Again, let
  $\iota = \iota_{h(u_2)}^d$ and $r' = \iota(r)$.

  Define $\iota(\bar{w}), \iota(\bar{v})$ and $h'$ as
  in the $L$-step.  Again there may be a pair $(x_1,x_2)$ which is
  not problematic with respect to $r$ while $(\iota(x_1), \iota(x_2))$
  is problematic with respect to $r'$.  If $R_{r'}$ or $L_{r'}$ are
  nonempty, we can deal with those problematic intervals using R- or
  L-steps.
  This finally leads to a run $r_{j}$ with
  $R_{r_j} = L_{r_j} = \emptyset$. Moreover, for every pair $(x_1,x_2)$
  such that this pair is not problematic with respect to $r$ but
  $(\iota(x_1), \iota(x_2))$ is problematic with respect to $r'$, we
  conclude that $x_2$ is strictly below $u_2$ whence 
  $\iota(x_2)$ is strictly below $\iota(u_2)$ 
  w.r.t.~$\preceq$.  Thus, the endpoints of problematic pairs move
  downwards 
  (in $\typ_C(\bar{w}, \bar{v}) = \typ_C(\bar w', \bar
  v')$) and  eventually all problematic pairs are removed. Once $r_{j}$
  is a loop without problematic pair, it is stretching.  \qed
\end{proof}

\begin{corollary}\label{cor:EmptinessTypeCharacerisation}
  The set of words accepted by an automaton $\Aut{A}$ is nonempty if and only if there are runs
  $r_1$ $r_2$ such that $r_2$ is a noncontracting loop starting in
  configuration $(f, \bar w)$ where $f$ is a final state 
  and $r_1$ is a run from an initial configuration to some
  configuration $(f, \bar v)$ such that $\typ_C(\bar w) = \typ_C(\bar
  v)$. 
\end{corollary}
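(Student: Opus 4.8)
The plan is to read the statement off from three results already proved: Lemma~\ref{lem:characteriseAcceptingRuns} (an accepting run amounts to an initial segment reaching a final state together with a stretching loop at that configuration), Proposition~\ref{prop:noncontracting-to-stretching} (every noncontracting loop sits below a stretching loop of the same length in the $\stretchleq$ relation on runs), and Lemma~\ref{lem:StrongUpwardsCompat} (strong upward compatibility of $\to$ and of $\to^{-1}$ with respect to $\stretchleq$, and the existence of common $\stretchleq$-upper bounds of configurations of equal type). So the only real content is to interface these, plus the observation that a stretching loop is always noncontracting.

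For the direction ($\Rightarrow$) I would first record that stretching loops are noncontracting. If $(q,\bar a)\stretchleq(q,\bar b)$ with induced isomorphism $h$, then $h(\varepsilon)=\varepsilon$, since an isomorphism fixes the unique $\preceq$-least element; applying the defining inequality of $\stretchleq$ to the pair $\varepsilon\prec d$ gives $\lvert h(d)\rvert\geq\lvert d\rvert$, ruling out the first contracting alternative, and applying it to a pair $d\prec e$ with $h(e)=e$ gives $\lvert h(d)\rvert\leq\lvert d\rvert$, ruling out the second. Hence if $\Aut{A}$ has an accepting run, Lemma~\ref{lem:characteriseAcceptingRuns} supplies $r_1$ from an initial configuration to some $(f,\bar w)$ with $f$ final and a stretching — thus noncontracting — loop $r_2$ starting in $(f,\bar w)$; taking $\bar v=\bar w$ finishes this direction. (As in Lemma~\ref{lem:characteriseAcceptingRuns} the loop is implicitly taken to have positive length; this transfers through Proposition~\ref{prop:noncontracting-to-stretching} because $\stretchleq$ only compares runs of equal length.)

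For the direction ($\Leftarrow$), suppose $r_1$ runs from an initial configuration to $(f,\bar v)$ and $r_2$ is a noncontracting loop from $(f,\bar w)$ to $(f,\bar w')$ with $\typ_C(\bar w)=\typ_C(\bar v)$. By Proposition~\ref{prop:noncontracting-to-stretching} there is a stretching loop $\rho$ from $(f,\bar p)$ to $(f,\bar p')$ with $r_2\stretchleq\rho$; in particular $(f,\bar w)\stretchleq(f,\bar p)$, so $\typ_C(\bar p)=\typ_C(\bar w)=\typ_C(\bar v)$. Lemma~\ref{lem:CommonStretchLeqBoundExists} then yields a configuration $(f,\bar u)$ with $(f,\bar v)\stretchleq(f,\bar u)$ and $(f,\bar p)\stretchleq(f,\bar u)$. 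Using strong upward compatibility of $\to^{-1}$, and walking backwards from $(f,\bar v)\stretchleq(f,\bar u)$ at the last configuration of $r_1$, I lift $r_1$ to a run $r_1'$ of the same length from an initial configuration (initiality is preserved since $\stretchleq$ preserves the state) to $(f,\bar u)$. Since $\rho$ is stretching we also have $(f,\bar p)\stretchleq(f,\bar p')$, so strong upward compatibility of $\to$ lets me lift $\rho$ forward repeatedly: putting $\bar u_0=\bar u$ I obtain by induction runs $\rho_i$ from $(f,\bar u_{i-1})$ to $(f,\bar u_i)$ of the same positive length as $\rho$ with $(f,\bar p)\stretchleq(f,\bar u_i)$ (the invariant is maintained via $(f,\bar p)\stretchleq(f,\bar p')$). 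The concatenation $r_1'\,\rho_1\,\rho_2\,\cdots$ is then an infinite run from an initial configuration that visits state $f$ at the start of every $\rho_i$, hence is accepting, so the language of $\Aut{A}$ is nonempty.

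I do not expect a genuine obstacle here; the argument is light once the earlier results are available. The one point that needs care is the bookkeeping in ($\Leftarrow$): $r_1$ and the stretching loop $\rho$ need not share a configuration, so they cannot be concatenated directly, which is precisely why a common $\stretchleq$-upper bound $(f,\bar u)$ is produced by Lemma~\ref{lem:CommonStretchLeqBoundExists} and both runs are transported to it by strong upward compatibility before gluing.
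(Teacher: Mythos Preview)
Your proof is correct and uses the same three ingredients as the paper (Lemma~\ref{lem:characteriseAcceptingRuns}, Proposition~\ref{prop:noncontracting-to-stretching}, and the parts of Lemma~\ref{lem:StrongUpwardsCompat}). The only difference is the order in which you apply them in the $(\Leftarrow)$ direction. The paper first takes a common $\stretchleq$-upper bound of $(f,\bar v)$ and $(f,\bar w)$, lifts $r_2$ to start there (the lifted loop is still noncontracting because its type is unchanged), and \emph{then} applies Proposition~\ref{prop:noncontracting-to-stretching}. This produces a stretching loop whose starting configuration already sits $\stretchleq$-above $(f,\bar v)$, so after lifting $r_1$ one can invoke Lemma~\ref{lem:characteriseAcceptingRuns} directly and stop. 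You instead apply the proposition first, which forces you to iterate the lifted loop by hand --- essentially reproving the $(\Leftarrow)$ half of Lemma~\ref{lem:characteriseAcceptingRuns} --- since the single lift of a stretching loop need not itself be stretching. Both routes are fine; the paper's ordering just buys a shorter conclusion. Your explicit verification that stretching implies noncontracting is a detail the paper leaves implicit in the $(\Rightarrow)$ direction.
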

\begin{proof}
  Due to  Lemma \ref{lem:characteriseAcceptingRuns}, only
  $(\Leftarrow)$ requires a proof.
  Assume that there are runs $r_1, r_2$ as stated above. 
  By Lemma \ref{lem:CommonStretchLeqBoundExists}, there is a run
  $r_2 \stretchleq r_2'$ such that $(f,\bar v) \stretchleq c_0$
  for $c_0$ the initial configuration of $r_2'$. 
  Note that $r_2'$ is also noncontracting whence 
  by Proposition \ref{prop:noncontracting-to-stretching} there is
  a stretching loop $r_2''$ such 
  that $r_2'\stretchleq r_2''$. Hence this loop starts in some
  configuration $c_1$ such that $(f,\bar v)\stretchleq
  c_1$. Applying Lemma \ref{lem:StrongUpwardsCompat} to $r_1$ and
  $c_2$ we obtain a run $r_1'$ from an initial configuration to
  $c_2$. Thus, $r_1'$ and $r_2''$ match the conditions of Lemma
  \ref{lem:characteriseAcceptingRuns} which completes the proof. 
  \qed
\end{proof}

\subsection{Emptiness and Computation of Types}

In order to turn this characterisation of emptiness in terms of types
into an effective algorithm for the emptiness problem the last missing
step is to compute whether a given type is realised by
some run of a given automaton.

For this purpose, we equip the set of all sets of types with a product
operation. Let $S,T$ be sets of types of runs; a type $(q,\pi,p)$ is
in $S \cdot T$ if there are $(q,\pi_1,r)\in S$, $(r,\pi_2,p)\in T$ and
tuples $\bar u,\bar v, \bar w$ such that $\typ_C(\bar u,\bar v) =
\pi_1$, $\typ_C(\bar v, \bar w) = \pi_2$ and $\typ_C(\bar u, \bar w) =
\pi$. Let $T_1$ denote the set of all types of runs of length $1$ (of
some fixed automaton $\Aut{A}$) and
$T_1^+ = \bigcup_{n\in\N} (T_1)^n$. By induction on the length, one
easily shows that every finite run $r$ of $\Aut{A}$ satisfies  $\typ(r) \in
(T_1)^+$.  Conversely, 
for every type $t\in (T_1)^+$ there is also a run of $\Aut{A}$ of type
$t$. This is due to the fact that gap-insertion preserves types
(Lemma~\ref{lem:insertionIsStretchleqCompatible}),
$\to$ is upwards compatible (Lemma~\ref{lem:StrongUpwardsCompat}) and
that trees of a given type $t_1$ with large gaps have, for all order
types $t, t_2$ with $t \in \set{t_1}\cdot \set{t_2}$,  an extension to 
a tree witnessing  this product. The necessary proofs are not very
difficult but tedious and lengthy. 

We conclude that a
type $t$ is in $(T_1)^+$ if and only if $t$ is the type of some run of
$\Aut{A}$. Moreover, types of runs can be
represented in polynomial space (in terms of the constants and the
dimension of a given automaton) and the product of types can be
computed in $\PSPACE$. Thus, we can determine whether an
automaton $\Aut{A}$ realises a  type $t$ by guessing
types in $T_1$ and computing an element of their product until it matches $t$. 
This proves the following proposition.

\begin{proposition}\label{prop:TypesComputable}
  There is a
  $\PSPACE$-algorithm that, given a $\Tree^C$-constraint automaton
  $\Aut{A}$ and a type 
  $t$, determines whether there is a run of $\Aut{A}$ of type $t$. 
\end{proposition}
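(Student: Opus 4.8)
The plan is to establish the equation $(T_1)^+ = \{\,\typ(r) : r \text{ a run of } \Aut{A}\,\}$ and then to observe that membership in $(T_1)^+$ can be decided by a nondeterministic polynomial-space search, since $\NPSPACE = \PSPACE$ by Savitch's theorem. The first inclusion, that every finite run has a type in $(T_1)^+$, is immediate by induction on the length of the run: a run of length $1$ has a type in $T_1$ by definition, and if $r$ is the concatenation of $r_1$ (type $\pi_1$) and $r_2$ (type $\pi_2$), then witnessing tuples for $\typ(r_1)$, $\typ(r_2)$ and $\typ(r)$ are provided by the data values of $r$ itself, so $\typ(r) \in \{\typ(r_1)\}\cdot\{\typ(r_2)\}$; composing these stepwise gives $\typ(r)\in(T_1)^+$.

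The converse inclusion is the technical heart. First I would fix notation: represent a type $(q,\pi,p)$ by storing $q$, $p$ and the isomorphism type $\pi = \typ_C(\bar u,\bar v)$, i.e.\ the finite labelled tree $\MCAT_C(\bar u,\bar v)$ together with, for each covering pair $d \prec e$ in it, the length difference $\lvert e\rvert - \lvert d\rvert$. Since $\MCAT_C(\bar u,\bar v)$ has at most $2n + \lvert C\rvert$ elements and the relevant length differences can always be taken bounded by a polynomial in the input (this is precisely where gap-insertion, Lemma~\ref{lem:insertionIsStretchleqCompatible}, is used: any realising run can be replaced by one whose gaps are not larger than necessary), each type has a polynomial-size representation, and given two type-sets the membership question ``$(q,\pi,p)\in S\cdot T$?'' is decided by guessing the intermediate tuples $\bar u,\bar v,\bar w$ (again of polynomial size by gap-insertion) and checking the three type constraints — a $\PSPACE$ computation. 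Then I would prove by induction on $k$ that for every $t\in(T_1)^k$ there is a run of $\Aut{A}$ of type $t$: for $k=1$ this is the definition of $T_1$; for the inductive step, if $t\in\{t'\}\cdot\{t_1\}$ with $t'\in(T_1)^{k-1}$ and $t_1\in T_1$, take a run $r'$ of type $t'$ and a single transition realising $t_1$, and glue them. The gluing is where Lemma~\ref{lem:StrongUpwardsCompat} (strong upwards compatibility of $\to$) and Lemma~\ref{lem:insertionIsStretchleqCompatible} come in: one stretches the final configuration of $r'$ by inserting sufficiently large gaps so that its order tree has enough room to be extended to a tree of type $t$ that also contains a subtree of type $t_1$ hanging off the last coordinate, and then transports the transition of type $t_1$ upward along $\stretchleq$. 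Finally, the algorithm: guess types $t_1, t_2, \dots \in T_1$ one at a time (each has polynomial size, and $T_1$ itself is computable by inspecting the transitions of $\Aut{A}$), maintaining the running product $t_1 \cdots t_j$, and accept as soon as it equals $t$. Each product step keeps the representation polynomial-size and is computable in $\PSPACE$, so the whole procedure runs in $\NPSPACE = \PSPACE$.

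The main obstacle is the converse inclusion, and within it the gluing step: one must argue that a run of type $t'$ can always be modified — without changing its type — so that its final configuration admits an extension to a configuration whose order tree simultaneously realises $t$ and carries a ``dangling'' copy of the tree needed for the transition of type $t_1$. This is exactly the statement, quoted informally in the excerpt, that ``trees of a given type with large gaps have, for all order types $t, t_2$ with $t\in\{t_1\}\cdot\{t_2\}$, an extension to a tree witnessing this product.'' Making this precise requires carefully specifying how much gap to insert (a bound polynomial in the sizes of $t$, $t_1$, $t'$ suffices, since $\MCAT_C$ trees are small and all relevant length increments can be kept polynomial) and verifying that the insertion genuinely preserves the type of $r'$, which is Lemma~\ref{lem:insertionIsStretchleqCompatible} applied at a word that is not a prefix of any constant. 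These verifications are, as the authors note, routine but lengthy; the conceptual content is entirely carried by the three cited lemmas on $\stretchleq$ and gap-insertion, so I would state the gluing as a lemma, give the gap-size bookkeeping, and defer the isomorphism-chasing to the appendix.
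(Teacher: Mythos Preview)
Your overall plan matches the paper's: characterise the realisable types as $T_1^+$ and then decide membership by an $\NPSPACE$ search over products of one-step types. Two details deserve correction, though neither breaks the argument.

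First, types carry no metric information: $\typ_C(\bar w,\bar v)$ is purely the $\{\preceq,\KBleq\}$-isomorphism class of $\MCAT_C(\bar w,\bar v)$ with the constants and the $2n$ coordinates labelled. The paper represents a type by any concrete $2n$-tuple of words of length at most $m+4n$ realising it; such small representatives exist trivially, since a $\sqcap$-closed set on $2n$ generators plus $C$ has at most $4n$ non-constant nodes. Your remark that gap-insertion is used here to make gaps ``not larger than necessary'' is backwards (insertion only enlarges gaps) and in any case unnecessary --- no lemma about runs is needed for the representation bound.

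Second, for the gluing step, strong upwards compatibility alone does not suffice: it lifts a single transition along $\stretchleq$, but gives no control over the joint type $\typ_C(\bar w,\bar u)$ of the new endpoint with the \emph{start} of the run, which is precisely what must be preserved. The paper's appendix proof instead inserts $2n$-gaps throughout $r'$ (preserving its type by Lemma~\ref{lem:insertionIsStretchleqCompatible}) and then proves a dedicated extension lemma: if $\MCAT_C(\bar w,\bar v)$ has $2n$-gaps above $C$ and $\bar x,\bar y,\bar z$ witness $t\in\{t'\}\cdot\{t_2\}$, then the isomorphism $\MCAT_C(\bar x,\bar y)\to\MCAT_C(\bar w,\bar v)$ extends to a $\{\preceq,\KBleq,\sqcap\}$-injection on all of $\MCAT_C(\bar x,\bar y,\bar z)$, and the image of $\bar z$ is the required $\bar u$. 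Once this $\bar u$ is produced, the one-step transition $(p,\bar v)\to(q',\bar u)$ is immediate from $t_2\in T_1$, with no appeal to compatibility needed. You correctly identify this extension statement in your final paragraph, so the plan is sound; make that lemma, not Lemma~\ref{lem:StrongUpwardsCompat}, the centrepiece of the inductive step.
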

\noindent
Together with Corollary \ref{cor:EmptinessTypeCharacerisation} we
obtain an algorithm proving Theorem~\ref{thm:EmptinessPSPACE}.

\begin{proof}[of Theorem \ref{thm:EmptinessPSPACE}]
  By Corollary \ref{cor:EmptinessTypeCharacerisation} it suffices that
  the algorithm guesses a type $(i,\pi, f)$ and a noncontracting type
  $(f, \pi', f)$ such that $i$ is an initial state, $f$ is a final
  state, and the order type of the last elements of $\pi$ coincides
  with the order type of the first elements of $\pi'$, and then checks
  whether these types are realised by actual runs using the previous
  proposition. 
\qed
\end{proof}

\subsection*{Acknowledgement}
We thank Claudia Carapelle for extremely helpful discussions and proof
reading. 

\bibliographystyle{splncs03}
\bibliography{local-data-automata}

\newpage
\appendix

\section{Proof of Proposition \ref{prop:Depth2ConstraitnsSuffice}}

First we recall the proposition. 
\begin{proposition}
  There is a polynomial time algorithm that computes, on input a
  $\cLTL$-formula $\phi$ an equivalent $\cLTL$-formula $\psi$ such
  that $\psi$ does not contain terms of the form $\neX^i x$ with
  $i\geq 2$. 
  \qed 
\end{proposition}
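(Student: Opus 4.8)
The plan is to eliminate nested next-operators in terms by introducing a fresh "shifted copy" variable for each original variable and chaining equalities through successive positions. Concretely, for each variable $x$ occurring in $\phi$ and each $i$ with $1 \le i \le d$, where $d$ is the maximal exponent appearing in a term of $\phi$, I would introduce a fresh variable $x^{(i)}$ (with $x^{(0)} := x$), and add the global conjunct $\Glob \bigwedge_{x}\bigwedge_{i=1}^{d}\bigl(\neX x^{(i-1)} = x^{(i)}\bigr)$. On any data word this forces $x^{(i)}$ at position $j$ to equal $x$ at position $j+i$. Note these are depth-$1$ constraints, so they are permitted in the output. Then I would rewrite every atomic subterm $\neX^i x$ of $\phi$ with $i \ge 1$: a term $\neX^i x \ast \neX^j y$ becomes $\neX^{\min(i,j)} x^{(|i-j|)} \ast y$ or symmetrically, and more simply one can just replace $\neX^i x \ast \neX^j y$ by $\neX x^{(i-1)} \ast \neX y^{(j-1)}$ when $i,j \ge 1$, and $\neX^i x \ast s$ by $\neX x^{(i-1)} \ast s$ — iterating until no exponent exceeds $1$. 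Using the identity $\neX^n(\neX^i x_k \ast \neX^j x_\ell) \equiv \neX^{i+n} x_k \ast \neX^{j+n} x_\ell$ stated just before the proposition, one can alternatively first push all next-operators onto the atomic terms, making the bookkeeping cleaner.

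The key steps, in order, would be: (1) compute $d$, the largest exponent occurring in any term of $\phi$ (reading off the formula, $d$ is at most the size of $\phi$); (2) for each variable $x$ of $\phi$, allocate fresh variables $x^{(1)},\dots,x^{(d)}$ from $\Var$; (3) form $\psi'$ from $\phi$ by the syntactic substitution replacing each $\neX^i x$ ($i \ge 1$) inside a term by $\neX^{1} x^{(i-1)}$ (so every remaining exponent is $0$ or $1$); (4) set $\psi := \psi' \land \Glob\bigwedge_{x}\bigwedge_{i=1}^{d}\neX x^{(i-1)} = x^{(i)}$; (5) verify the equivalence and the polynomial bound. For correctness I would argue that any model of $\phi$ extends to a model of $\psi$ by defining $x^{(i)}$ at position $j$ to be the value of $x$ at position $j+i$, and conversely any model of $\psi$, restricted to the original variables, satisfies $\phi$ because the global conjuncts pin down exactly those shifted values. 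The size bound is immediate: at most $|\Var(\phi)| \cdot d = O(|\phi|^2)$ new variables and a global conjunct of length $O(|\phi|^2)$, all computable in polynomial time.

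The main obstacle — really the only subtlety — is getting the index arithmetic in the substitution exactly right and handling the mixed cases $\neX^i x \ast s$, $s \ast \neX^i x$, and $\neX^i x \ast \neX^j y$ with possibly $i=0$ or $j=0$ uniformly; a clean way to sidestep this is to first normalise each atomic term via the stated identity so that $\min(i,j)=0$, i.e. every atomic constraint has the shape $\neX^i x \ast y$, $x \ast \neX^j y$, $\neX^i x \ast s$, or $s \ast \neX^j y$ with a single nonzero exponent, and only then apply the single substitution $\neX^i x \mapsto x^{(i)}$ together with $\Glob\bigwedge_{i=1}^{d}(\neX x^{(i-1)} = x^{(i)})$ — now no $\neX$ survives inside any term at all. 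I expect writing out the (routine) induction on formula structure confirming $(\bar a_i)_{i\in\N}\models\phi$ iff some extension satisfies $\psi$ to be the bulk of the formal proof, but it poses no real difficulty.
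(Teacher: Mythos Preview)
Your proposal is correct and follows essentially the same approach as the paper's own proof: normalise atomic constraints via the identity so that at most one side carries a nonzero exponent, then introduce fresh ``shifted'' variables chained by depth-$1$ equalities under a $\Glob$ and rewrite the atomic terms accordingly. The only (inessential) difference is organisational: the paper introduces a separate chain of fresh variables for each atomic constraint occurrence, whereas you introduce one shared chain $x^{(0)},\dots,x^{(d)}$ per original variable; your variant is slightly cleaner and still clearly polynomial.
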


\begin{proof}
  First, we can replace any occurrence of $\neX^i x \mathrel* \neX^j
  y$ by $\neX^{\min(i,j)} (\neX^{i-\min(i,j)}) \mathrel* (\neX^{j-\min(i,j)}
  y)$.  Now assume that there is a subformula of the form $\neX^i x \mathrel*
  y$ (the case $x \mathrel* \neX^j y$ is symmetrical).  Introducing fresh
  variables $y_0,y_1, \dots, y_{i-1}$ we replace this formula by the
  formula $x \mathrel* y_i$  and add the conjunct 
  $\Glob(y_0 = y \land \bigwedge_{j=1}^{i} y_j = \neX y_{j-1})$
  which is polynomial in $i$. Obviously, this replacement yields  an
  equivalent formula. Iterating   this process for all constraints, we
  obtain the desired formula $\psi$.
\qed
\end{proof}

\section{Missing part of Theorem \ref{thm:SatAndMCInPSPACE}}
\label{sec:Orders}

Let $\Struc{O} = ( \set{11,22}^*12, \KBleq)$ where $\KBleq$ denotes the 
lexicographical order. 
\begin{lemma} \label{lem:QAsLexOrder}
  $\Struc{O}$ and $(\Q, <)$ are isomorphic. 
\end{lemma}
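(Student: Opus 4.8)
The plan is to invoke Cantor's back-and-forth theorem: every countable dense linear order without endpoints is isomorphic to $(\Q,<)$. Since $\Struc{O}$ is a set of finite words it is obviously countable, so it remains to verify that $\KBleq$ is a linear order on $\Struc{O}$, that it is dense, and that it has neither a least nor a greatest element. First I would check linearity by showing that no element of $\Struc{O}$ is a $\preceq$-prefix of a different one: if $u12\preceq v12$ with $u,v\in\set{11,22}^*$ and $u12\neq v12$, then $\abs{u}<\abs{v}$ and, both lengths being even, $\abs{u12}=\abs{u}+2\le\abs{v}$, so $u12$ is a prefix of $v$; but then the two characters of $v$ at positions $\abs{u}+1,\abs{u}+2$ are $1,2$, contradicting that $v$ decomposes into blocks from $\set{11,22}$. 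Hence any two distinct elements of $\Struc{O}$ first differ at some position, where $\KBleq$ is just the comparison $1<2$, so $\KBleq$ is total. The absence of endpoints is then immediate: for every $w12\in\Struc{O}$ both $w\,11\,12$ and $w\,22\,12$ lie in $\Struc{O}$, and comparing at the first differing position gives $w\,11\,12\KBless w12\KBless w\,22\,12$.

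The substantive point is density, and here I would move to a cleaner presentation of the order. Encode $11\mapsto 0$, $22\mapsto 1$, and replace the fixed terminal block $12$ by a fresh separator symbol $\#$ with $0<\#<1$; this yields a bijection $\Struc{O}\to\set{0,1}^{*}\#$, and I would argue it is an order-isomorphism onto $(\set{0,1}^{*}\#,\le_{\mathrm{lex}})$. The only step requiring care is the case where the $\set{0,1}$-string coding $u$ is a proper prefix of the one coding $v$: then the lexicographic comparison of $u12$ and $v12$ is decided by the first subsequent block of $v$, with next block $11$ making $v12$ \emph{smaller} and next block $22$ making it larger — which matches exactly the placement of $\#$ strictly between $0$ and $1$. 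Once that is pinned down, density and the absence of endpoints transfer, and $(\set{0,1}^{*}\#,\le_{\mathrm{lex}})$ is visibly a countable dense linear order without endpoints; concretely it is itself order-isomorphic to the dyadic rationals in $(0,1)$ via $s_1\cdots s_k\#\mapsto \sum_{i=1}^{k}s_i2^{-i}+2^{-(k+1)}$ (the trailing summand playing the role of $\#$), and that set is a classical countable dense linear order without endpoints.

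The main obstacle is thus the verification that the block-encoding is monotone, and specifically getting the prefix case above correct, since there the naive lexicographic intuition is reversed; every other ingredient is either routine ($\Struc{O}$ countable, no endpoints) or a black-box appeal (Cantor's theorem, density of the dyadic rationals). Assembling these, all three hypotheses of Cantor's theorem hold for $(\Struc{O},\KBleq)$, whence $\Struc{O}\cong(\Q,<)$.
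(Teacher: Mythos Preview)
Your proof is correct and, like the paper's, ultimately rests on Cantor's characterisation of $(\Q,<)$ as the unique countable dense linear order without endpoints. The difference is in how density (and, to a lesser extent, the absence of endpoints) is established. The paper argues directly on $\Struc{O}$: for $w\KBless v$ it locates the first differing block and, by a short case distinction on whether that block in $v$ is $12$ or $22$, writes down an explicit element strictly between $w$ and $v$; no auxiliary structure is introduced. You instead set up an order-isomorphism $\Struc{O}\cong(\{0,1\}^*\#,\le_{\mathrm{lex}})$ via the block recoding $11\mapsto 0$, $22\mapsto 1$, $12\mapsto\#$ with $0<\#<1$, and then identify the latter with the dyadic rationals in $(0,1)$, importing density from there. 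Your route is a bit longer but yields a concrete model of the isomorphism (not just its existence), and your explicit check that no element of $\Struc{O}$ is a proper prefix of another---hence that $\KBleq$ is genuinely a linear order on $\Struc{O}$---is a point the paper passes over in silence. The paper's argument, on the other hand, is self-contained and avoids introducing the intermediate alphabet and the dyadic map.
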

\begin{proof}
  $\Struc{O}$ is countable and does not have endpoints because
  $(11^n12)_{n\in\N}$ forms a strictly descending sequence such that
  any element of $\Struc{O}$ is minorised by some element of the
  chain. Analogously, $(22^n12)_{n\in\N}$ is a strictly increasing
  sequence majorising every element. 
  Thus, it is left to show that $\KBleq$ is a dense order. Let
  $w,v\in \Struc{O}$ with $w \KBleq v$. Writing $w=w_1w_2\dots w_k$
  with $w_i\in\set{11,12,22}$ and
  $v=v_1v_2\dots v_l$ with $v_i\in\set{11,12,22}$ 
  let $i$ be minimal such that $w_i\neq v_i$. If $v_i = 12$ then
  $w_i=11$ and $w_1w_2\dots w_i (22)^{\abs{w}}12$ is between $w$
  and $v$. 
  If $v_i = 22$ and $w_i = 11 $ or $w_i = 12$ then 
  $w \prec w_1w_2\dots
  w_{i-1}22(11)^{\abs{v}}12\prec v$. 
\qed
\end{proof}

\begin{definition}
  For $\sigma$ some  signature and $\sigma$-structures
  $\Struc{A}$ and $\Struc{B}$ we say a homomorphism $h:\Struc{A} \to
  \Struc{B}$ is a \define{$\sigma$-injection} if it is injective and preserves
  the relations, functions and constants under preimages.
\end{definition}

\begin{lemma}\label{lem:PreserverKBandPreceqToKAryTree}
  Let $h: (\Q, <) \to \Struc{O}$ be an isomorphism. 
  The  extension $g: \Q^* \to (\set{11,22}^*12)^*$, 
  given by $g(q_1q_2\dots q_n) = h(q_1)h(q_2)\dots h(q_n)$
  is an
  $\set{\preceq, \KBleq}$-injection of $\Tree = (Q^*, \preceq, \KBleq)$ into
  $\Tree[2]=(\set{1,2}^*, \preceq, \KBleq)$. 
\end{lemma}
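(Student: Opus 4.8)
The plan is to first isolate the one structural fact that drives the whole argument: the set $\set{11,22}^*12$ is a \emph{prefix code}, i.e.\ no element of it is a proper prefix of another. Indeed, every such word has even length and decomposes into length-two blocks, all lying in $\set{11,22}$ except the last, which is $12$; hence if $u$ were a proper prefix of $u'$, then, both having even length, $u$ would end at a block boundary of $u'$, so the last block $12$ of $u$ would coincide with a non-final block of $u'$ — but non-final blocks lie in $\set{11,22}$, a contradiction. I would record two consequences: (i) distinct elements of $\set{11,22}^*12$ are $\preceq$-incomparable, and (ii) each word $h(q_1)h(q_2)\cdots h(q_n)$ factorises in only one way into blocks from $\set{11,22}^*12$. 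As $h$ is injective, (ii) immediately gives injectivity of $g$.

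Next I would handle $\preceq$. The forward direction is trivial: if $v=ww'$ then $g(v)=g(w)g(w')$, so $g(w)\preceq g(v)$. For the converse, write $w=w_1\cdots w_n$ and $v=v_1\cdots v_m$ and assume $g(w)\preceq g(v)$; I would prove by induction on $n$ that $n\le m$ and $w_j=v_j$ for all $j\le n$. If $n\ge 1$, both $h(w_1)$ and $h(v_1)$ are prefixes of $g(v)$, hence $\preceq$-comparable, hence equal by (i), so $w_1=v_1$; cancelling the common prefix $h(w_1)$ reduces to $n-1$. Thus $g$ both preserves and reflects $\preceq$.

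For $\KBleq$, note that $\KBleq$ is a \emph{total} order on $\Q^*$ and on $\set{1,2}^*$, so once preservation of $\KBleq$ is shown, reflection follows: if $g(w)\KBleq g(v)$ but $w\not\KBleq v$, then $v\KBless w$, hence $g(v)\KBleq g(w)$ by preservation, while $g(v)\ne g(w)$ by injectivity, contradicting $g(w)\KBleq g(v)$. To show preservation, assume $w\KBleq v$. If $w\preceq v$ then $g(w)\preceq g(v)$, hence $g(w)\KBleq g(v)$. Otherwise $w$ and $v$ are incomparable, so $w\KBless v$; letting $i$ be least with $w_i\ne v_i$ and $p=w_1\cdots w_{i-1}$ gives $g(w)=g(p)\,h(w_i)\cdots h(w_n)$ and $g(v)=g(p)\,h(v_i)\cdots h(v_m)$. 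By injectivity $h(w_i)\ne h(v_i)$, and by (i) these two blocks are incomparable, so $t=h(w_i)\sqcap h(v_i)$ is a proper prefix of each and they first differ in position $\lvert t\rvert+1$. Hence $g(w)$ and $g(v)$ first differ in position $\lvert g(p)\rvert+\lvert t\rvert+1$, in precisely the letters in which $h(w_i)$ and $h(v_i)$ first differ; therefore $g(w)\KBless g(v)$ iff $h(w_i)\KBless h(v_i)$, which holds because $w_i<v_i$ and $h$ is an order-isomorphism $(\Q,<)\to(\set{11,22}^*12,\KBleq)$.

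The only step needing care is the last one: one must check that the first position in which $g(w)$ and $g(v)$ differ lies genuinely \emph{inside} the two ``first differing blocks'' $h(w_i)$ and $h(v_i)$, rather than at a block boundary, where the block structure could scramble the lexicographic comparison. This is precisely what the prefix-code property (i) guarantees; everything else is routine bookkeeping.
\qed
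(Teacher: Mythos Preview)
Your proof is correct and follows essentially the same approach as the paper: establish that the factorisation of $g(w)$ into blocks from $\set{11,22}^*12$ is unique (the paper does this by counting occurrences of $12$ whose $1$ sits at an odd position; you do it by observing that this set is a prefix code), then handle $\preceq$ and $\KBleq$ by reducing to the block level. Your treatment of $\KBleq$ is in fact more careful than the paper's, which asserts the final equivalence with ``From this it easily follows''; your explicit use of the prefix-code property to locate the first differing position, together with the totality trick to get reflection from preservation, makes that step watertight.
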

\begin{proof}
  Note that $g$ is injective: if $w$ is in $\image(g)$, then the
  number of occurrences of $12$  where $1$ occurs at an odd position
  determines the length of every preimage $v$ such that $g(v) =
  w$. It is then a routine check to prove uniqueness of $v$. 
  
  We next show that $g$ preserves $\preceq$ (in both directions).
  It is obvious from the definition that $w \preceq v$ implies $g(w) \preceq
  g(v)$. Now assume that $g(w) \preceq g(v)$. Due to the same argument
  as in the injectivity proof, this implies that $w = w_1w_2\dots w_k$,
  $v = v_1v_2\dots v_l$, $k\leq l$ and $h(w_i) = h(v_i)$ for every
  $1\leq i \leq k$. Since $h$ is injective, it follows that $w_i =
  v_i$ for all $i\leq k$ which implies $w \preceq v$. 

  Finally, we have to prove preservation of $\KBleq$.  
  For rational numbers $q_1, q_2$ we have $q_1 < q_2$ iff  $h(q_1)
  \KBleq h(q_2)$.
  From this it easily follows that
  for words $w,w'\in Q^*$ $w\KBleq w'$ if and only if
  $w\preceq w'$ or $w = viw_1$ and $w' = vjw_2$ for some $v\in\Q^*$
  and some $i<j$
  if and only if
  $g(w) \preceq g(w')$ or $g(w) = g(v)h(i)g(w_1)$ and 
  $g(w') = g(v)h(j)g(w_2)$  with $h(i) \KBless h(j)$ 
  if and only if  $g(w) \KBleq g(w')$. 
  \qed
\end{proof}

\section{Missing Proofs Concerning $\stretchleq$}

In this section we prove Lemma \ref{lem:stretchleqIsWQO}.
Part \ref{lem:stretchleqIsWQOPart1} is proved in Lemma
\ref{lem:App:stretchleqIsWQO}, 
Part \ref{lem:stretchleqIsWQOPart2}  in Lemma \ref{prop:StrongUpwardsCompat} and
Part \ref{lem:stretchleqIsWQOPart3} in Lemma \ref{lem:stretchleqUpperBoundsExist}.

\subsection{Proof of Part 1}

\begin{lemma}\label{lem:App:stretchleqIsWQO}
  $\stretchleq$ is a well-quasi order.
\end{lemma}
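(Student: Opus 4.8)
The plan is to reduce the statement, on each ``level'' fixed by a state and an order type, to Dickson's lemma. First I would record that $\stretchleq$ is a quasi-order: reflexivity is witnessed by the identity isomorphism, and for transitivity, if $(q,\bar w)\stretchleq(q,\bar v)\stretchleq(q,\bar u)$, then the composition of the two induced isomorphisms is an $\set{\preceq,\KBleq}\cup C$-isomorphism $\MCAT_C(\bar w)\to\MCAT_C(\bar u)$ fixing every $c\in C$ and sending $w_i\mapsto u_i$, hence by uniqueness it is the induced isomorphism for $(\bar w,\bar u)$; the two chains of length inequalities then compose.

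Next I would observe that only finitely many pairs $(q,\pi)$ with $q\in Q$ and $\pi=\typ_C(\bar w)$ can occur: $Q$ is finite, and $\MCAT_C(\bar w)$ has at most $2^{n+\abs{C}}$ elements over the fixed finite signature $\set{\preceq,\KBleq}\cup C$, so there are finitely many isomorphism types. Fix one such pair $(q,\pi)$ and let $(\Theta,\preceq)$ be an abstract copy of the common shape of all $\MCAT_C(\bar w)$ with $\typ_C(\bar w)=\pi$; for each such $\bar w$ let $\beta_{\bar w}\colon\Theta\to\MCAT_C(\bar w)$ be the unique isomorphism fixing $C$ and sending $w_i\mapsto w_i$. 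Let $P$ be the number of pairs $(\theta,\theta')\in\Theta^2$ with $\theta\prec\theta'$, and define $\lambda_{\bar w}\in\N^P$ by $\lambda_{\bar w}(\theta,\theta')=\abs{\beta_{\bar w}(\theta')}-\abs{\beta_{\bar w}(\theta)}$. The key observation, which I would spell out carefully, is that for two configurations on this fixed level the induced isomorphism is exactly $\beta_{\bar v}\circ\beta_{\bar w}^{-1}$, and that every pair $d\prec e$ in $\MCAT_C(\bar w)$ is of the form $(\beta_{\bar w}(\theta),\beta_{\bar w}(\theta'))$; hence $(q,\bar w)\stretchleq(q,\bar v)$ holds precisely when $\lambda_{\bar w}\le\lambda_{\bar v}$ in the product order on $\N^P$.

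Finally I would combine these facts. Given an infinite sequence of configurations, the pigeonhole principle yields an infinite subsequence lying on a single level $(q,\pi)$; applying $\lambda$ to this subsequence and invoking Dickson's lemma that $(\N^P,\le)$ is a well-quasi order produces indices $i<j$ with $\lambda$ of the $i$-th term $\le$ $\lambda$ of the $j$-th term, hence with the $i$-th configuration $\stretchleq$ the $j$-th. Together with the quasi-order property this shows $\stretchleq$ is a well-quasi order. I expect the only real content to be the identification, on a fixed level, of $\stretchleq$ with the product order on $\N^P$ via the edge-length vector; the rest is bookkeeping and a citation of Dickson's lemma.
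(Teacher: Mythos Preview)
Your proposal is correct and follows essentially the same route as the paper: restrict to an infinite subsequence of fixed state and fixed $\typ_C$, encode each configuration by a vector of interval lengths in $\N^P$, and apply Dickson's lemma. The paper's encoding differs only cosmetically (it uses $f_i(u,v)=\lvert u\rvert-\lvert u\sqcap v\rvert$ over all pairs rather than your $\lvert\beta(\theta')\rvert-\lvert\beta(\theta)\rvert$ over $\theta\prec\theta'$), and it omits the explicit verification of the quasi-order axioms that you spell out.
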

\begin{proof}
  Obviously, $\stretchleq$ is a quasi order. 

  Any infinite sequence $(\bar w^i)_{i\in\N}$ of $n$-tuples of words
  induces an infinite sequence $(\bar w^i, C)_{i\in\N}$.  The latter
  has an infinite subsequence $(\bar w^i, C)_{i\in I}$ such that
  for all $i,j\in I$ $\typ_C(\bar w^{i}) = \typ_C(\bar w^{j})$.  This
  implies that $\MCAT_C(\bar w^i)$ and $\MCAT_C(\bar w^j)$ are
  isomorphic for all $i,j\in I$ via an  isomorphism
  $\phi_{i,j}$.

  For every $i\in I$ we define a map $f_i: \MCAT_C(\bar w^i)^2 \to \N$
  by $(u,v) \mapsto \lvert u\rvert - \lvert u\sqcap v \rvert$.  Fix an
  $i_0 \in I$ and an enumeration of the domain of $f_{i_0}$.  This
  induces an enumeration of the domain of $f_i$ for every $i\in I$ by
  letting $(u,v)\in\dom(f_i)$ be the $k$-th element if
  $(\phi_{i,i_0}(u), \phi_{i,i_0}(v))$ is the $k$-th element of
  $\dom(f_{i_0})$.
  
  By Dickson's Lemma we find tuples $\bar w^j$, $\bar w^k$ ($j<k$)
  such that for all $(u,v)\in\MCAT_C(\bar w^j)$ $f_k( \phi_{j,k}(u),
  \phi_{j,k}(v)) \geq f_j(u,v)$. From this we immediately conclude that
  $\bar w^j \stretchleq \bar w^k$. 
\qed
\end{proof}

\subsection{Proof of Part 2}

We prepare the proof of strong upwards compatability of the transition
relation by formally proving the following intuition: if $\MCAT_C(\bar
w')$ has larger gaps than $\MCAT_C(\bar w)$ (seen as subtrees of
$\Q^*$), every extension of $\MCAT_C(\bar w)$ to a bigger tree induces
a corresponding extension of $\MCAT_C(\bar w')$ to a bigger tree of
the same order type. 

\begin{definition}
  For $D, E, F$ sets with $D \subseteq E$, we say $h:E \to F$ extends
  $g:D\to F$ if $h\restriction{D} = g$.
\end{definition}

\begin{lemma}\label{lem:AuxiliaryUpwardsCompatibiltyOfTypes}
  Let $\sigma=\set{\preceq, \KBleq, \sqcap}$ and
  $\bar w, \bar w'\in \Q^*$ be tuples such that $\bar w
  \stretchleq \bar w'$.  The isomorphism $h: \MCAT_C(\bar
  w) \to \MCAT_C(\bar w')$ extends to a $\sigma$-injection 
  $f: (\Q^*, \preceq, \KBleq) \to
  (\Q^*, \preceq, \KBleq)$.
\end{lemma}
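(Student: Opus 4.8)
### Plan for the proof of Lemma~\ref{lem:AuxiliaryUpwardsCompatibiltyOfTypes}

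The plan is to build $f$ piece by piece, interval by interval, using the fact that $\bar w \stretchleq \bar w'$ guarantees enough ``room'' in the target tree. First I would set up notation: write $M = \MCAT_C(\bar w)$ and $M' = \MCAT_C(\bar w')$, let $h\colon M \to M'$ be the induced isomorphism, and recall that by the definition of $\stretchleq$, for every $d \prec e$ in $M$ we have $|h(e)| - |h(d)| \ge |e| - |d|$, and that $h$ fixes all constants and respects $\preceq$ and $\KBleq$. The idea is that $\Q^*$, viewed as a tree, decomposes relative to $M$ into: (i) the finite tree $M$ itself, (ii) for each pair of consecutive nodes $d \prec e$ in $M$, the ``stalk'' of words strictly between $d$ and $e$ on the branch from $d$ to $e$, (iii) for each node $d \in M$, the ``bushes'' hanging off $d$ in directions (i.e., rational letters at the appropriate position) not used by any element of $M$ below $d$, and (iv) the subtrees above the $\preceq$-maximal elements of $M$. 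I would define $f$ separately on each of these parts so that it agrees with $h$ on $M$ and is coherent across the boundaries.

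The key steps, in order, are as follows. On $M$, set $f = h$. For a stalk between consecutive $d \prec e$: a word $u$ with $d \prec u \prec e$ is determined by the letters of $e$ strictly after $|d|$ up to $|u|$; since $|h(e)| - |h(d)| \ge |e| - |d|$, there is enough space along the branch from $h(d)$ to $h(e)$ in $\Q^*$ to embed this initial segment order-isomorphically, and I would just copy the relevant letters of $e$ (those with index in $(|d|, |u|]$) onto the corresponding positions after $h(d)$; this is injective, preserves and reflects $\preceq$ and $\KBleq$ because $h(e)$ already sits above $h(d)$ consistently. For a bush at $d$ in an unused direction $q$: map the whole subtree $dq\Q^* $ isomorphically (via the obvious shift $dqv \mapsto h(d)q'v$) into a fresh direction $q'$ at $h(d)$ — ``fresh'' meaning $q'$ is a rational not occurring as the $(|h(d)|{+}1)$-st letter of any element of $M'$ above $h(d)$; such a $q'$ exists since $M'$ is finite and $\Q$ is infinite, and we can in fact choose $q'$ on the correct side of the used directions (using that $<$ on $\Q$ is dense and without endpoints) so that $\KBleq$ is respected. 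The subtrees above $\preceq$-maximal elements are handled like bushes, just with no ``used directions'' constraint. Finally, one checks that these pieces fit together: a $\preceq$- or $\KBleq$-comparison between two words in different pieces is decided by their behaviour at the nearest common node of $M$ (their greatest common prefix, or the relevant branching node), and by construction $f$ has been arranged to respect exactly that branching data, so comparisons are preserved and reflected; $\sqcap$ is then automatically preserved because $f$ preserves $\preceq$ in both directions and the tree is determined by $\preceq$.

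The main obstacle I expect is the bookkeeping for the ``fresh direction'' choices and the stalk embeddings to be simultaneously consistent with $\KBleq$: it is not enough for $f$ to be order-preserving within each piece, it must respect the lexicographic order globally, which pins down on which side of the existing branches each new bush must be planted and forces the copied stalk letters to be compatible with where $h(e)$ already lives. This is exactly where the density and unboundedness of $(\Q, <)$ is used, and where the hypothesis $\bar w \stretchleq \bar w'$ (rather than merely $\typ_C(\bar w) = \typ_C(\bar w')$) is essential, since the length inequality $|h(e)| - |h(d)| \ge |e| - |d|$ is what lets the stalks fit. I would organise this by observing that it suffices to check preservation/reflection of $\preceq$ and $\KBleq$ for pairs $u, v$ of arbitrary words, splitting on the location of $u \sqcap v$ relative to $M$: if $u \sqcap v \in M$ the comparison reduces to the branching letters, which $f$ controls directly; if $u \sqcap v$ lies strictly inside one stalk or one bush, then $u$ and $v$ lie in that same piece and we invoked injectivity/monotonicity of that piece's partial map. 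A careful statement of these cases, rather than any hard calculation, is the bulk of the work.
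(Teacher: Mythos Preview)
Your overall strategy is the right one and close in spirit to the paper's, but there are two concrete gaps in the execution. First, the decomposition of $\Q^*$ is incomplete: it omits words that branch off from the \emph{interior} of a stalk. For instance, with $M=\{\varepsilon,000\}$ the word $01$ falls in none of your pieces (i)--(iv): its first letter $0$ is a ``used'' direction at $\varepsilon$, so it is not a bush there, yet $01\not\preceq 000$, so it is not on the stalk either. Your $f$ is simply undefined on such words. The fix is easy but essential: at every stalk node you must also plant bushes in every direction other than the one continuing towards $e$, and then verify $\KBleq$-consistency of all those fresh-direction choices (this is exactly the ``bookkeeping'' you anticipated, now at infinitely many nodes rather than finitely many). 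Second, your stalk map is misdescribed. ``Copy the letters of $e$ with index in $(|d|,|u|]$ onto positions after $h(d)$'' yields $h(d)\,e_{|d|+1}\cdots e_{|u|}$, which need not lie on the branch to $h(e)$ at all; what you want is the prefix of $h(e)$ of length $|h(d)|+(|u|-|d|)$, i.e.\ use letters of $h(e)$, not of $e$. A minor third point: $\sqcap$-preservation is \emph{not} automatic from two-sided $\preceq$-preservation when $f$ is not surjective (consider $\varepsilon\mapsto\varepsilon$, $0\mapsto 00$, $1\mapsto 01$), so that step needs its own short argument, though for the corrected construction it does hold.

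The paper sidesteps all of this by organising the construction by \emph{depth} rather than by pieces: it builds partial maps $f_j\colon\Q^{\le j}\to\Q^*$ inductively, maintaining the invariant that for every $u\in\Q^{j}$ and every subtuple $\bar v\subseteq\bar w$ one has $u\preceq\bigsqcap\bar v$ iff $f_j(u)\preceq h(\bigsqcap\bar v)$, together with the length slack $|\bigsqcap\bar v|-|u|\le |h(\bigsqcap\bar v)|-|f_j(u)|$. At each $u$ of length $j$ one identifies the finitely many child directions $q_1<\cdots<q_m$ that lead towards elements of $M$, maps them to the corresponding directions $q'_1<\cdots<q'_m$ below $f_j(u)$, and fills in the intervening open intervals by order isomorphisms $(q_i,q_{i+1})\cong(q'_i,q'_{i+1})$ of $\Q$. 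Since this treats every word of length $j$ uniformly --- whether it is an $M$-node, a stalk node, or already inside a bush --- the case your decomposition missed never arises as a separate piece.
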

\begin{proof}
  In order to simplify the notation, we assume without loss of
  generality that $C\subseteq \bar w$.  We
  define a family of $\sigma$-injections $f_j: \Q^{\leq j} \to
  \Tree$ such that  $f_j$ extends $h\restriction{M_j}$ where
  $M_j = \Set{w \in \MCAT_C(\bar w) | \abs{w} \leq  \Q^{\leq j}}$.
  Let $f_0:\set{ \varepsilon} \to \set{ \varepsilon}$. 
  Assume that
  $f_j$ has been defined and satisfies that for all $\bar v \subseteq
  \bar w$ and all $u\in \Q^j$ 
  \begin{enumerate}
  \item $u \preceq \bigsqcap \bar v$ iff $f_j(u) \preceq h(\bigsqcap
    \bar v)$ and
  \item if $u\preceq \bigsqcap \bar v$ then $ \lvert \bigsqcap \bar v
    \rvert - \lvert u \rvert \leq \lvert f(\bigsqcap \bar v) \rvert -
    \lvert f_j(u) \rvert$.
  \end{enumerate}
  For each word $u\in \Q^{j}$, we define the values of $f_{j+1}$ on $u\Q$
  according to the following rule.  
  Let $\bar v_1, \bar v_2, \dots, \bar v_m \subseteq \bar w$ be those
  subsets such that for each $i$ there is some $q_i\in \Q$ with
  $\bigsqcap \bar v_i = u q_i$. We can assume that $q_1 \leq q_2 \leq \dots
  \leq q_m$.
  Note that the second condition on $f_j$ implies that
  $f_j(u)$ and $h(\bigsqcap \bar v_i)$ have distance at least $1$
  whence there is some $q_i'\in \Q$ such that $f_j(u)q_i'\preceq
  h(\bigsqcap \bar v_i)$. 
  We claim that for all $k,l \leq m$ we have $q_k \leq q_l$ if and
  only if $q_k' \leq q_l'$. 
  \begin{itemize}
  \item If $q_k = q_l$ then $u q_k \preceq \bigsqcap \bar v_k \sqcap
    \bigsqcap \bar v_l = \bigsqcap (\bar v_k \cup \bar v_l)$.
    Thus, there is some $i$ such that 
    $\bigsqcap \bar v_i = \bigsqcap (\bar v_k \cup \bar v_l)$ and 
    $q_i = q_k = q_l$. Then $f_j(u)q'_i \preceq h(\bigsqcap \bar v_i)
    \preceq h(\bigsqcap \bar v_k)$ and analogously for $\bar v_l$
    whence $q'_i = q'_k = q'_l$. 
  \item If $q_k < q_l$ then $\bigsqcap \bar v_k \sqcap \bigsqcap \bar
    v_l = u$. Thus, $u\in\MCAT_C(\bar w)$ and $f_j(u) = h(u) =
     h(\bigsqcap \bar v_k) \sqcap  h(\bigsqcap \bar v_l)$. 
    Moreover, $\bigsqcap \bar v_k \KBless \bigsqcap \bar v_l$ whence
    $h(\bigsqcap \bar v_k) \KBless h(\bigsqcap \bar v_l)$. 
    The only possibility to match both requirements is that 
    $q'_k < q'_l$. 
  \end{itemize}
  
  Fixing isomorphisms
  $g_i:\Set{ q\in Q | q_i <q < q_{i+1}} \to \Set{ q\in Q | q_i' <q <
    q_{i+1}'}$ (with $q_0 = q_0' = -\infty$ and $q_{m+1}= q'_{m+1} =
  \infty$), we define  for every $q\in Q$
  \begin{equation*}
    f_{j+1}(u q) =
    \begin{cases}
      h(\bigsqcap v_i) &\text{if }q = q_i, \\
      f_j(u)g_{i-1}(q) &\text{otherwise, where } q_i\in\set{q_1, \dots, q_m,
        q_{m+1}} \text{\ is minimal with } q < q_i.
    \end{cases}
  \end{equation*}
  Assuming that $f_j$ preserves $\preceq, \KBleq$, and $\sqcap$ in
  both directions, it is not difficult to prove the same result for
  $f_{j+1}$. Thus, the limit of $(f_j)_{j\in\N}$ is the desired
  $\sigma$-injection $f$. 
\qed
\end{proof}

\begin{proposition}
\label{prop:StrongUpwardsCompat}
  $\to$ and $\to^{-1}$ are strongly upwards compatible with respect to
  $\stretchleq$.
\end{proposition}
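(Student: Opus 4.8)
The plan is to prove strong upwards compatibility of $\to$ (the case of $\to^{-1}$ being entirely symmetric since the definition of $\stretchleq$ and the transition relation is symmetric in the two data tuples). So suppose $(q,\bar w)\to(p,\bar v)$ via some transition $(q,\beta(x_1,\dots,x_n,y_1,\dots,y_n),p)$ with $\Tree^C\models\beta(\bar w,\bar v)$, and suppose $(q,\bar w)\stretchleq(q,\bar w')$. I need to produce $\bar v'$ with $(q,\bar w')\to(p,\bar v')$ and $(p,\bar v)\stretchleq(p,\bar v')$. The key observation is that $\beta$ is a quantifier-free formula over $\set{\preceq,\KBleq}\cup C$, so whether it holds depends only on the order type $\typ_C(\bar w,\bar v)$; hence it suffices to build $\bar v'$ so that $\typ_C(\bar w',\bar v')=\typ_C(\bar w,\bar v)$ (which already gives $(q,\bar w')\to(p,\bar v')$ since the constants match) together with the length condition for $(p,\bar v)\stretchleq(p,\bar v')$.

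The main tool is Lemma~\ref{lem:AuxiliaryUpwardsCompatibiltyOfTypes}. Apply it to the pair $\bar w\stretchleq\bar w'$: the isomorphism $h:\MCAT_C(\bar w)\to\MCAT_C(\bar w')$ extends to a $\sigma$-injection $f:(\Q^*,\preceq,\KBleq)\to(\Q^*,\preceq,\KBleq)$, where $\sigma=\set{\preceq,\KBleq,\sqcap}$. Then simply set $\bar v':=f(\bar v)$, i.e.\ $v'_i=f(v_i)$ for each $i$. Since $f$ preserves $\preceq$, $\KBleq$ and $\sqcap$ in both directions and fixes $C$ (because it extends $h$, which fixes $C$), it induces an isomorphism $\MCAT_C(\bar w,\bar v)\cong\MCAT_C(\bar w',\bar v')$, so $\typ_C(\bar w',\bar v')=\typ_C(\bar w,\bar v)=\pi$; in particular $\Tree^C\models\beta(\bar w',\bar v')$, giving $(q,\bar w')\to(p,\bar v')$.

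It remains to check $(p,\bar v)\stretchleq(p,\bar v')$, i.e.\ that the induced isomorphism $\MCAT_C(\bar v)\to\MCAT_C(\bar v')$ does not contract any interval. The node set of $\MCAT_C(\bar v)$ consists of the elements $\bigsqcap_{i\in I}v_i$ (and $C$, $\varepsilon$), and each such element is the $f$-image of the corresponding element $\bigsqcap_{i\in I}w_i\,$--- wait, more carefully: each node of $\MCAT_C(\bar v)$ equals $f$ of a node of $\MCAT_C(\bar w,\bar v)$ lying below or at some $\bigsqcap_{i\in I}v_i$, and the length-non-contraction property I want is exactly the statement that for $d\prec e$ in $\MCAT_C(\bar v)$ one has $|f(e)|-|f(d)|\ge|e|-|d|$; but this is not automatic from $f$ being merely a $\sigma$-injection. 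The honest route is instead to observe that Lemma~\ref{lem:AuxiliaryUpwardsCompatibiltyOfTypes} is built so that $f$ already satisfies $r\stretchleq f(r)$ along any run, or more precisely to strengthen/inspect its construction: the inductive invariant in that lemma's proof guarantees $|\bigsqcap\bar v|-|u|\le|f(\bigsqcap\bar v)|-|f_j(u)|$ for prefixes $u$, which propagated along chains yields precisely the non-contraction of all intervals in any $\MCAT_C$ of a subtuple, in particular of $\bar v$. Thus $(p,\bar v)\stretchleq(p,\bar v')$ and we are done.

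\textbf{Main obstacle.} The delicate point is the last paragraph: one must ensure the $\sigma$-injection $f$ produced by Lemma~\ref{lem:AuxiliaryUpwardsCompatibiltyOfTypes} genuinely \emph{stretches} every interval, not just preserves order, so that $f(\bar v)$ is $\stretchleq$-above $\bar v$ and not merely order-isomorphic. This requires reading off from the construction of the $f_j$ in that lemma that the length invariant ``$f$ never shortens a gap between a prefix and a branching node'' holds globally; everything else (order-type preservation, firing the same transition, symmetry for $\to^{-1}$) is routine.
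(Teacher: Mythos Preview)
Your proposal is correct and follows essentially the same route as the paper: apply Lemma~\ref{lem:AuxiliaryUpwardsCompatibiltyOfTypes} to extend the isomorphism $h$ to a $\sigma$-injection $f$ on all of $\Q^*$, set $\bar v' = f(\bar v)$, and read off both $(q,\bar w')\to(p,\bar v')$ (from preservation of the order type) and $(p,\bar v)\stretchleq(p,\bar v')$.

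You have in fact been more scrupulous than the paper on the one delicate point. The paper's proof simply asserts ``we obtain \ldots\ that $(p,\bar v)\stretchleq(p,\bar v')$'' without further comment, whereas you correctly flag that a bare $\sigma$-injection need not stretch lengths, and that one must go back to the inductive construction of the $f_j$ in Lemma~\ref{lem:AuxiliaryUpwardsCompatibiltyOfTypes} to see it. Concretely, the construction there satisfies $\lvert f_{j+1}(uq)\rvert \ge \lvert f_j(u)\rvert + 1$ for every $u\in\Q^j$ and $q\in\Q$ (in the ``otherwise'' branch the increment is exactly $1$; in the $q=q_i$ branch invariant~(2) gives at least $1$), and iterating this along any chain $d\prec e$ yields $\lvert f(e)\rvert - \lvert f(d)\rvert \ge \lvert e\rvert - \lvert d\rvert$. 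That is precisely the global non-contraction you need for $\MCAT_C(\bar v)\to\MCAT_C(\bar v')$; your diagnosis of the obstacle and how to resolve it is on target.
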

\begin{proof}
  Given $k$-tuples $\bar w, \bar v, \bar w'$ and states $q,p$ such
  that there is a transition $(q, \bar w) \to (p, \bar v)$ and such
  that $\bar w \stretchleq \bar w'$ we have to show that there is
  some $\bar v \stretchleq \bar v'$ and a transition $(q, \bar w')
  \to (p, \bar v')$. 
  
  Since $\bar w \stretchleq \bar w'$, 
  the isomorphism $h: \MCAT_C(\bar w) \to \MCAT_C(\bar w')$ extends 
  (by Lemma~\ref{lem:AuxiliaryUpwardsCompatibiltyOfTypes}) to
  a  $\set{\preceq, \KBleq, \sqcap}$-injection
  $\hat h: \Q^* \to \Q^*$.
  Setting $v'_i = \hat h(v_i)$ for each $v_i\in \bar v$ we obtain with
  $\bar v' = (v'_1,  \dots, v'_k)$ 
  that $(p, \bar v ) \stretchleq (p, \bar v')$ and $(q,\bar w') \to
  (p, \bar v')$ as desired.

  The argument for
  $\to^{-1}$ is completely analogous.
\qed
\end{proof}

\subsection{Proof of Part 3}

Recall from Lemma \ref{lem:insertionIsStretchleqCompatible} that
insertion of an $n$-gap at some $u$ which is not prefixed by a
constant from $C$ preserves the type and leads to a $\stretchleq$
larger tuple. Iterated use of this lemma proves  Part
\ref{lem:stretchleqIsWQOPart3} of Lemma \ref{lem:stretchleqIsWQO},
which we restate in the following lemma.

\begin{lemma} \label{lem:stretchleqUpperBoundsExist}
  Given two configurations $(q,\bar w)$ and $(q, \bar v)$ such
  that $\typ_C(\bar w) = \typ_C(\bar v)$ then there is a
  configuration $(q, \bar u)$ such that $(q,\bar w) \stretchleq
  (q, \bar u)$ and $(q, \bar v) \stretchleq (q, \bar u)$. 
\end{lemma}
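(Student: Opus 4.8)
The plan is to obtain $\bar u$ by inserting finitely many gaps into $\bar w$, one gap for each edge of the common tree that is ``too short'' relative to $\bar v$. By Lemma~\ref{lem:insertionIsStretchleqCompatible} every single gap insertion (at a word that is not a prefix of a constant) preserves the type and produces a $\stretchleq$-larger tuple, so $(q,\bar w)\stretchleq (q,\bar u)$ will follow immediately by transitivity; the real task is to choose the gaps so that simultaneously $(q,\bar v)\stretchleq (q,\bar u)$.

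First I would fix the common structure: since $\typ_C(\bar w)=\typ_C(\bar v)$, let $h\colon \MCAT_C(\bar w)\to \MCAT_C(\bar v)$ be the induced isomorphism and identify the node sets via $h$. For a node $e$ with tree-parent $d$, write $\ell_{\bar w}(e)=\lvert e\rvert-\lvert d\rvert$ and $\ell_{\bar v}(e)=\lvert h(e)\rvert-\lvert h(d)\rvert$ for the two ``edge lengths''. Because $C$ is prefix-closed, if $e$ is a constant-labelled node then its parent $d$ is the constant of length $\lvert e\rvert-1$ and $h$ fixes both, so $\ell_{\bar w}(e)=\ell_{\bar v}(e)=1$ and there is nothing to do at constant edges. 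For a non-constant node $e$ its realization in any tuple obtained so far is never equal to (hence, again by prefix-closedness, never a prefix of) a constant, since gap insertions preserve the type including the constant labels; thus Lemma~\ref{lem:insertionIsStretchleqCompatible} always applies there.

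Next I would iterate over the non-constant nodes $e$ in any order: if the current length of the edge above $e$ is $\ell<\ell_{\bar v}(e)$, insert a gap of size $\ell_{\bar v}(e)-\ell$ at the current realization of $e$. The key local observation is that inserting a gap at (the realization of) $e$ lengthens exactly the edge above $e$ by the gap size and leaves all other edge lengths unchanged: the node $e$ and everything below it are shifted down by the same amount, so lengths inside the subtree of $e$ are unchanged, the parent $d$ of $e$ is not shifted, and nodes outside the subtree of $e$ are untouched. In particular the edge above $e$ is modified only when $e$ itself is processed, so after a single pass every edge has length at least $\ell_{\bar v}(\cdot)$ and at least its original $\bar w$-length. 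Call the resulting tuple $\bar u$.

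Finally, chaining the gap insertions yields $(q,\bar w)\stretchleq (q,\bar u)$ with $\typ_C(\bar u)=\typ_C(\bar v)$. For $(q,\bar v)\stretchleq (q,\bar u)$, let $h_v\colon \MCAT_C(\bar v)\to \MCAT_C(\bar u)$ be the induced isomorphism; for any comparable $d'\prec e'$ in $\MCAT_C(\bar v)$, the quantity $\lvert e'\rvert-\lvert d'\rvert$ is the sum of the $\bar v$-lengths of the edges on the path from $d'$ to $e'$, and $\lvert h_v(e')\rvert-\lvert h_v(d')\rvert$ is the corresponding sum of $\bar u$-lengths, so since each summand dominates, so does the sum. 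Hence $(q,\bar w)\stretchleq (q,\bar u)$ and $(q,\bar v)\stretchleq (q,\bar u)$. I expect the only delicate points to be the two bookkeeping facts — that gap insertion is edge-local, and that a non-constant node's realization stays outside the prefixes of $C$ so the lemma keeps applying — and both are immediate consequences of $C$ being prefix-closed together with type-preservation of gap insertion.
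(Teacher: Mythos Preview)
Your proof is correct and uses the same underlying mechanism as the paper: iterated gap insertion via Lemma~\ref{lem:insertionIsStretchleqCompatible} to manufacture a common $\stretchleq$-upper bound. The paper's execution is slightly simpler, however: rather than inserting targeted gaps into $\bar w$ edge by edge, it lets $d$ be the maximal interval length occurring in $\MCAT_C(\bar w)$ and then inserts a gap of this fixed size $d$ at \emph{every} non-constant node of $\MCAT_C(\bar v)$ (processed from the leaves upwards), obtaining $\bar u$ from $\bar v$. Then $(q,\bar v)\stretchleq(q,\bar u)$ is immediate from the gap insertions, and $(q,\bar w)\stretchleq(q,\bar u)$ follows because every non-constant interval in $\MCAT_C(\bar u)$ has length at least $d$, which by choice of $d$ dominates every interval in $\MCAT_C(\bar w)$. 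This sidesteps your edge-locality bookkeeping entirely. Your version, in exchange, produces a tighter (more economical) upper bound and makes explicit the useful structural fact that a single gap insertion at a node affects exactly one edge length.
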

\begin{proof}
  Let $d\in \N$ be  maximal such that there are $x_1,x_2\in
  \MCAT_C(\bar w)$ with $x_1\preceq x_2$ and $\abs{x_2} - \abs{x_1} =
  d$. Inductively, from the $\preceq$-maximal elements to
  $\varepsilon$ we insert a gap of size $d$ at each $y\in\MCAT_C(\bar
  v)$ if $y$ is not prefixed by a constant from $C$. All these
  iterated insertions result finally in a tuple $\bar u$ such that 
  $(q, \bar v) \stretchleq (q, \bar u)$ and for all 
  $z_1,z_2\in \MCAT_C(\bar u)$ such that $z_1\preceq z_2$ and $z_2$ is
  not prefix of any constant from $C$, then $\abs{z_2}-\abs{z_1}\geq
  d$. Thus, by definition of $d$ also $(q, \bar w) \stretchleq (q,
  \bar u)$ holds as desired. 
\qed
\end{proof}

\section{Computation of Types}

The goal of this section is to prove Proposition
\ref{prop:TypesComputable}, i.e., to provide an algorithm that checks
whether a given type is realised by one of the runs of a given
$\Tree^C$-automaton. For this purpose we first 
fix an $n$-dimensional  $\Tree^C$-constraint automaton $\Aut{A}$ with state set
$Q$. We equip the power set of all types with a product operation as follows.

\begin{definition}
  \begin{itemize}
  \item   Let $\RunTypes$ denote the \define{set of all types} $(q,\pi,p)$ where
    $q,p\in Q$ and $\pi= \typ_C(\bar w, \bar v)$ where $\bar w$
    and $\bar v$ are  $n$-tuples of words. 
  \item   We equip the power set $2^\RunTypes$ with a \define{product}
    $\cdot$ as 
    follows. For $t = (q_1,\pi_1,p_1) ,u = (q_2, \pi_2,p_2)$,
    $v = (q_3,\pi_3,p_3) \in \RunTypes$ let $t\in \set{u} \cdot \set{v}$
    if
    \begin{enumerate}
    \item $q_1 = q_2$, $p_1 = p_3$, $p_2 = q_3$, and
    \item there are $n$-tuples $\bar x, \bar y, \bar
      z$ such that $\typ_C(\bar x, \bar y) = \pi_2$, 
      $\typ_C(\bar y, \bar z) = \pi_3$ and 
      $\typ_C(\bar x, \bar z) = \pi_1$.
    \end{enumerate}
    Generally, for $A,B\subseteq \RunTypes$ such that at least one of
    them is not a singleton, we define
    $A\cdot B = \Set{t\cdot u | t\in A, u\in B}$. 
  \item The \define{set of types of
      one-step runs}  $T_1\subseteq \RunTypes$ is given by
    $t=(q,\pi,p)\in T_1$ if there is a transition $(q,\beta,p)$ of
    $\Aut{A}$ such that $\pi$ satisfies $\beta$.
  \item Let $T_1^1=T_1$, $T_1^{n+1} = T_1^n T_1$, and $T_1^+ =
    \bigcup_{n \geq 1} T_1^n$. 
  \end{itemize}
\end{definition}
\begin{remark}
  One easily checks that  $t\in T_1$ holds if and only if there is a
  run of length $1$  with type $t$. 
\end{remark}

The product operation resembles the composition of types. As a
consequence one can connect the runs of $\Aut{A}$ and $T_1^+$ as
follows. 

\begin{lemma}\label{lem:TypesAndRunsConnection}
  There is a run of $\Aut{A}$ of type $t$ if and only if $t \in
  T_1^+$. 
\end{lemma}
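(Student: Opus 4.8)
The plan is to prove both directions of the equivalence by induction on the length of runs and on the exponent $n$ in $T_1^n$, with the harder direction being the construction of an actual run realising a given type $t \in T_1^+$.

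For the direction from runs to types, I would proceed by induction on the length of the run $r = (c_0, c_1, \dots, c_n)$. If $n = 1$ then $\typ(r) \in T_1$ by the preceding remark. For the inductive step, write $r$ as the concatenation of a run $r'$ of length $n-1$ from $c_0$ to $c_{n-1}$ and a one-step run $r''$ from $c_{n-1}$ to $c_n$. By induction $\typ(r') \in T_1^+$ and $\typ(r'') \in T_1$; writing $c_0 = (q, \bar u)$, $c_{n-1} = (r, \bar v)$, $c_n = (p, \bar w)$, the tuples $\bar u, \bar v, \bar w$ themselves witness clause (2) in the definition of the product (with $\pi_1 = \typ_C(\bar u, \bar v)$, $\pi_2 = \typ_C(\bar v, \bar w)$, $\pi_3 = \typ_C(\bar u, \bar w)$, modulo the bookkeeping that the product composes the \emph{second} argument of the first type with the \emph{first} argument of the second), so $\typ(r) \in T_1^+ \cdot T_1 \subseteq T_1^+$. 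This direction is essentially routine.

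For the converse, the plan is again induction on $n$ with $t \in T_1^n$. The base case $n = 1$ is the remark that $t \in T_1$ gives a one-step run realising $t$. For the inductive step, suppose $t = (q, \pi, p) \in T_1^n \cdot T_1$, so there are $(q, \pi_1, r) \in T_1^{n-1}$ and $(r, \pi_2, p) \in T_1$ and tuples $\bar x, \bar y, \bar z$ with $\typ_C(\bar x, \bar y) = \pi_1$, $\typ_C(\bar y, \bar z) = \pi_2$, $\typ_C(\bar x, \bar z) = \pi$. By induction there is a run $r_1$ of $\Aut{A}$ from some $(q, \bar x')$ to $(r, \bar y')$ with $\typ_C(\bar x', \bar y') = \pi_1$, and there is a one-step run $r_2$ from some $(r, \bar y'')$ to $(p, \bar z'')$ with $\typ_C(\bar y'', \bar z'') = \pi_2$. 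The difficulty is that $\bar y'$ and $\bar y''$ carry the same order type but need not be equal, so $r_1$ and $r_2$ do not compose directly; one must first adjust both runs so that the intermediate tuples coincide. This is where the structural lemmas about $\stretchleq$ enter: by Lemma~\ref{lem:CommonStretchLeqBoundExists} there is a common upper bound $\bar y^*$ with $(r, \bar y') \stretchleq (r, \bar y^*)$ and $(r, \bar y'') \stretchleq (r, \bar y^*)$, and then by strong upwards compatibility of $\to$ and $\to^{-1}$ (Lemma~\ref{lem:StrongUpwardsCompat}) applied iteratively along $r_1$ (using $\to^{-1}$) and along $r_2$ (using $\to$) we obtain runs $r_1^*$ ending in $(r, \bar y^*)$ and $r_2^*$ starting in $(r, \bar y^*)$ whose concatenation $r^*$ is an honest run of $\Aut{A}$.

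The last thing to check is that $r^*$ actually has type $t$, not merely type $(q, \pi', p)$ for the possibly-stretched $\pi'$. Here the point is that the lengths of the intermediate gaps grew but their order type was preserved, and — crucially — one also needs that trees of a given type with sufficiently large gaps can be \emph{extended} so that, for prescribed $\pi_1, \pi_2, \pi$ consistent via the product, the first and last tuples realise exactly $\pi$: in other words, having composed the stretched pieces, a final application of the same gap-insertion / upwards-compatibility machinery (Lemmas~\ref{lem:insertionIsStretchleqCompatible} and~\ref{lem:StrongUpwardsCompat}) lets one reshape the outer tuples of $r^*$ to attain the intended $\typ_C(\bar x, \bar z) = \pi$ without disturbing the transitions. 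I expect this reshaping step — verifying that the gap sizes can be chosen uniformly large enough and that the induced isomorphisms compose to give precisely $\pi$ rather than some coarser stretching — to be the main obstacle, and it is exactly the part the paper flags as ``not very difficult but tedious and lengthy''.
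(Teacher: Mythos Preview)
Your forward direction is fine and matches the paper.

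The converse direction has the right inductive skeleton but a genuine gap, and it is precisely the one you flag yourself. After you lift $r_1$ backwards and $r_2$ forwards to meet at a common $(r,\bar y^*)$, the concatenated run goes from some $(q,\bar x^*)$ to some $(p,\bar z^*)$; even granting that each lifted piece still has type $t_1$ respectively $t_2$, nothing controls the \emph{joint} type $\typ_C(\bar x^*,\bar z^*)$. All you know is that $(q,\typ_C(\bar x^*,\bar z^*),p)\in\{t_1\}\cdot\{t_2\}$, which could be any element of that product, not the particular $t$ you want. Your proposed ``reshaping'' via Lemmas~\ref{lem:insertionIsStretchleqCompatible} and~\ref{lem:StrongUpwardsCompat} cannot fix this: both gap insertion and upwards compatibility \emph{preserve} types, so they can never change $\typ_C(\bar x^*,\bar z^*)$ into a different $\pi$.

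The paper avoids the problem by never building two runs and gluing. It builds only the run $r'$ for $t_1\in T_1^{n}$, with endpoints $(q,\bar w)$ and $(p,\bar v)$, and first inserts gaps so that $\MCAT_C(\bar w,\bar v)$ has $2n$-gaps above $C$ (Lemma~\ref{lem:allRunsRealisedWithGaps}; this keeps the type). The key step is then Lemma~\ref{lem:BigGapsAllowAllTypes}: once $\MCAT_C(\bar w,\bar v)$ has enough room, for the \emph{given} $t\in\{t_1\}\cdot\{t_2\}$ one can find a concrete tuple $\bar u$ with $\typ_C(\bar v,\bar u)=\pi_2$ \emph{and} $\typ_C(\bar w,\bar u)=\pi$ simultaneously. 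Since $t_2\in T_1$, appending $(q',\bar u)$ to $r'$ yields a run of type exactly $t$. So the missing idea is not ``stretch two runs until they meet'' but ``stretch one run until its endpoint tree has gaps large enough to embed any prescribed one-step extension realising the target outer type''; that embedding lemma is where the work (and the tedium) lies.
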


Before we provide a proof, we show how this lemma can be used to prove 
\ref{prop:TypesComputable} which we restate here:

\begin{proposition}
  There is a
  $\PSPACE$-algorithm that, given an $n$-dimensional
  $\Tree^C$-constraint automaton 
  $\Aut{A}$ and a type 
  $t$, determines whether there is a run of $\Aut{A}$ of type $t$. 
\end{proposition}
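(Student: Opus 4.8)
The plan is to turn Lemma~\ref{lem:TypesAndRunsConnection} into an effective search procedure over the (finite) set $\RunTypes$ and argue that this search runs in polynomial space. First I would observe that a type $(q,\pi,p)$ can be stored in polynomial space: $\pi$ is the isomorphism type of $\MCAT_C(\bar w,\bar v)$ for $2n$-tuples, and by the definition of $\MCAT$ this structure has at most $2n + \abs{C} + 1$ elements, so $\pi$ is described by the induced $\preceq$- and $\KBleq$-orderings on a set of that size together with the labelling by constants and the tuple positions — all of size polynomial in $n$ and $\abs{C}$. Next I would note that, given two types $u=(q_2,\pi_2,p_2)$ and $v=(q_3,\pi_3,p_3)$, one can decide in polynomial space (indeed polynomial time) whether a candidate $t=(q_1,\pi_1,p_1)$ lies in $\set{u}\cdot\set{v}$: the state condition is trivial to check, and for the type condition one guesses an isomorphism type $\pi$ of a $3n$-tuple $(\bar x,\bar y,\bar z)$ whose three pairwise restrictions are $\pi_2$, $\pi_3$, $\pi_1$ — again such a $\pi$ has a bounded number of elements, and verifying that the restrictions match is a comparison of finite labelled orders. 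Consequently $\{u\}\cdot\{v\}$ is computable in polynomial space, and hence so is membership $t\in A\cdot B$ for any $A,B$.

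With these two facts in hand, the algorithm for Proposition~\ref{prop:TypesComputable} works as follows: on input $\Aut{A}$ and $t$, repeatedly guess a type $s\in T_1$ (checking membership in $T_1$ by guessing a transition $(q,\beta,p)$ of $\Aut{A}$ and verifying that $\pi$ satisfies the quantifier-free formula $\beta$, which is polynomial-time), and maintain a ``current'' accumulated type $t_{\mathrm{cur}}$, updating it to any element of $\{t_{\mathrm{cur}}\}\cdot\{s\}$ at each step; accept if at some point $t_{\mathrm{cur}} = t$. By Lemma~\ref{lem:TypesAndRunsConnection}, such a sequence of guesses succeeds exactly when $t\in T_1^+$, i.e.\ exactly when there is a run of $\Aut{A}$ of type $t$. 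The configuration kept by the algorithm at any moment consists of a single type ($t_{\mathrm{cur}}$) plus the input, hence uses polynomial space; since the guesses of the $s\in T_1$ need not be remembered, the space does not grow with the number of steps. This gives an $\NPSPACE$ procedure, and by Savitch's theorem an equivalent $\PSPACE$ one.

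The main obstacle is making precise the polynomial bound on the size of a type $\pi$ and on the intermediate witness $(\bar x,\bar y,\bar z)$, and checking that membership in the product can really be decided without ever writing down actual tuples of words (whose lengths are a priori unbounded) — only their $\MCAT$-isomorphism types. The key point that makes this work is that $\MCAT_C(\bar u)$ has size at most $\abs{\bar u} + \abs{C} + 1$, so all structures in play are of polynomially bounded size and the whole computation stays within $\PSPACE$. The correctness of replacing ``there is a run of type $t$'' by ``$t\in T_1^+$'' is exactly Lemma~\ref{lem:TypesAndRunsConnection}, which I am entitled to assume, so the proof is essentially a complexity bookkeeping argument on top of that lemma.
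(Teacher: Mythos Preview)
Your proposal is correct and follows essentially the same approach as the paper: bound the representation of a type polynomially, show that membership in the product of two types is decidable in polynomial space, and then run an $\NPSPACE$ guess-and-accumulate procedure over $T_1$, invoking Lemma~\ref{lem:TypesAndRunsConnection} for correctness and Savitch's theorem for the $\PSPACE$ bound. One minor slip: your size bound $2n+\abs{C}+1$ on $\MCAT_C(\bar w,\bar v)$ forgets the extra elements created by meets $\bigsqcap_{i\in I}$ of proper subsets --- the paper counts at most $4n$ non-constant elements and represents a type by $2n$ concrete words of length at most $m+4n$ --- but since the correct bound is still polynomial this does not affect your argument.
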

\begin{proof}
  Writing $m = \max(\abs{c}: c\in C)$ the algorithm uses polynomial
  space in terms of $m + n + \abs{\Aut{A}}$.\footnote{Assuming any
    reasonable notion of size of an automaton.}
  Given $n$-tuples $\bar w$ and $\bar v$, note that $\typ_C(\bar w, \bar v)$
  contains at most $4n$ elements that  are not constants. Thus, we can
  represent any type by $2$ states and $2n$ words of length at most 
  $m+ 4n$. 
  Moreover, It takes logarithmic space in $n$ and $\abs{A}$
  to check whether a given type satisfies a specific transition.
  Finally, it only needs $O( 2n (m+(4n\cdot 2n)))$ space to decide
  whether a given type $t$ is in the product of two types $t_1,t_2$
  (cf.~the upcoming Lemma \ref{lem:BigGapsAllowAllTypes}).
  
  Thus, an $\NPSPACE$ ( = $\PSPACE$) algorithm can guess a
  first type $t_1\in T_1$ and, having stored a type $t_i\in T_1^i$,
  it can guess another type $t\in T_1$ and a type $t_{i+1}$ and
  verify that $t_{i+1} \in \set{t_i} \cdot \set{t}$. This procedure is
  iterated until $t_{i}$ is the desired type and the algorithm reports
  that $t_i$ can be realised by some run. 
\qed
\end{proof}

\subsection{Proof of Lemma \ref{lem:TypesAndRunsConnection}}

We finally have to prove the connection between composition of runs
and products of their types. One direction is easily shown and
contained in the following lemma.

\begin{lemma}
  For $r=(c_i)_{1\leq i \leq n }$  a run (with $n\geq 2$), 
  $\typ(r)\in T_1^{n-1}$. 
\end{lemma}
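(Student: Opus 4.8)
The statement to prove is that for a run $r = (c_i)_{1 \leq i \leq n}$ of length $n \geq 2$, its type $\typ(r)$ lies in $T_1^{n-1}$. The plan is to proceed by induction on $n$. For the base case $n = 2$, a run $r = (c_1, c_2)$ is by definition a single-step run $c_1 \to c_2$, so $\typ(r) \in T_1 = T_1^1 = T_1^{2-1}$ by the remark following the definition of $T_1$. For the inductive step, suppose $n \geq 3$ and the claim holds for all shorter runs. Write $r = (c_i)_{1 \leq i \leq n}$ with $c_i = (q_i, \bar w_i)$, and split $r$ at position $n-1$: let $r' = (c_i)_{1 \leq i \leq n-1}$ and let $r'' = (c_{n-1}, c_n)$ be the final single step. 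Then $r'$ is a run of length $n-1 \geq 2$, so by induction hypothesis $\typ(r') \in T_1^{n-2}$, and $\typ(r'') \in T_1$.

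First I would set up the notation carefully: write $\typ(r') = (q_1, \pi', q_{n-1})$ with $\pi' = \typ_C(\bar w_1, \bar w_{n-1})$, write $\typ(r'') = (q_{n-1}, \pi'', q_n)$ with $\pi'' = \typ_C(\bar w_{n-1}, \bar w_n)$, and write $\typ(r) = (q_1, \pi, q_n)$ with $\pi = \typ_C(\bar w_1, \bar w_n)$. The goal is then to verify that $\typ(r) \in \{\typ(r')\} \cdot \{\typ(r'')\}$, from which $\typ(r) \in T_1^{n-2} \cdot T_1 = T_1^{n-1}$ follows immediately. Checking the membership in the product amounts to verifying the two conditions in the definition of $\cdot$: the state-matching condition ($q_1 = q_1$, $q_n = q_n$, and the ``middle'' states agree, namely $q_{n-1} = q_{n-1}$) is trivially satisfied since all three types come from the same run and hence share the configuration $c_{n-1}$; and the second condition, demanding $n$-tuples $\bar x, \bar y, \bar z$ with $\typ_C(\bar x, \bar y) = \pi'$, $\typ_C(\bar y, \bar z) = \pi''$ and $\typ_C(\bar x, \bar z) = \pi$, is witnessed directly by taking $\bar x = \bar w_1$, $\bar y = \bar w_{n-1}$, $\bar z = \bar w_n$ — these are actual tuples occurring in the run, and their pairwise types are exactly $\pi'$, $\pi''$, $\pi$ by construction.

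The proof is essentially routine once the bookkeeping is set up, so there is no serious obstacle; the only point requiring a modicum of care is making sure that the definition of the product $\cdot$ on $2^{\RunTypes}$ is applied correctly to singletons (the paper defines $\{u\} \cdot \{v\}$ via conditions (1) and (2), and $A \cdot B$ for non-singletons pointwise), and that one is not conflating $\typ_C(\bar w_1, \bar w_n)$ with some composed type rather than the genuine type of the pair of first and last configurations of $r$. I would therefore state explicitly at the start of the induction step that the type of a run depends only on its first and last configurations, so that splitting the run and recombining the pieces is consistent with the definitions. This completes the proof.
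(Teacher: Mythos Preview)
Your proposal is correct and follows essentially the same approach as the paper's own proof: induction on $n$, splitting off the last step $r'' = (c_{n-1}, c_n)$ from the prefix $r' = (c_i)_{1 \leq i \leq n-1}$, applying the induction hypothesis to $r'$, and using the actual data tuples $\bar w_1, \bar w_{n-1}, \bar w_n$ from the run as the witnesses required by the definition of the product. The paper's argument is slightly terser but structurally identical.
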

\begin{proof}
  For $n = 2$ the claim follows by definition of $T_1^{2-1} = T_1$. 
  We proceed by induction.
  Write $c_i = (q_i, w^i_1, \dots, w^i_k)$.  Let
  $r'=(c_i)_{1\leq i \leq n-1}$ and $r_{n-1} = (c_i)_{n-1\leq i \leq n}$.
  By induction hypothesis  $\typ(r') = (q_1, \pi, q_{n-1}) \in T_1^{n-2}$  with 
  \begin{equation*}
    \pi = \typ_C(w^1_1, w^1_2, \dots, w^1_k, w^{n-1}_1,
    \dots, w^{n-1}_k),    
  \end{equation*}
  and $\typ(r_{n-1}) = (q_{n-1}, \pi_{n-1}, q_n) \in T_1$ with
  \begin{equation*}
    \pi_{n-1} =
    \typ_C(w^{n-1}_1, \dots, w^{n-1}_k, w^n_1, \dots, w^n_k).    
  \end{equation*}
  Thus, the tuples $w^1_1, \dots, w^1_k$, $w^{n-1}_1, \dots,
  w^{n-1}_k$, $w^n_1, \dots, w^n_k$ witness that
  \begin{equation*}
    (q_1, \pi', q_n) := \typ(r) \in \typ(r') \cdot \typ(r_{n-1})
    \subseteq T_1^{n-2} \cdot T_1 = T_1^{n-1}, 
  \end{equation*}
  which completes the proof.
\qed
\end{proof}

The other direction of Lemma \ref{lem:TypesAndRunsConnection} relies
on the following intuition.
\begin{enumerate}
\item By upwards-compatability and gap-insertion every type realised
  by some run, is realised by one with large gaps between all pairs of
  elements except the constants.
\item If two $n$-tuple $\bar w, \bar v$ have $2n$-gaps  between all
  pairs of elements from $\MCAT_C(\bar w, \bar v)$ except the
  constants,
  then for every type $t \in \typ_C(\bar w, \bar v) \cdot T_1$ there
  is a tuple $\bar u$ such that $\bar w, \bar v, \bar u$ witness this
  inclusion.
\item Thus, assuming that all types from $T_1^{n-1}$ are realised by
  runs, for all $t\in T_1^{n-1} \cdot T_1$ we can realise the
  appropriate type from $T_1^{n-1}$ with a run $r$ that has large gaps at
  its last configuration and find a witness for $t$ by realising the
  appropriate type  from $T_1$ using the values of the last
  configuration of $r$. 
\end{enumerate}
Proving these intuitions is rather tedious and we give the details in
the following. Recall that we assume that the set of constants $C$ is
closed under prefixes. Let us first make precise what a gap is. 

\begin{definition}
  We say that a tree $T\subseteq \Q^*$ has \define{$n$-gaps above C} if for
  all $d,e\in T$ with $d\prec e$ such that $e \not\preceq c$ for all
  $c\in C$ we have $\lvert e \rvert - \lvert d \rvert > n$.
\end{definition}

We can now give a precise version of the first claim. 
\begin{lemma}\label{lem:allRunsRealisedWithGaps}
  Given a finite run $r$ there is a run $r'$ from $c_1'=(q,\bar w')$ to
  $c_2 = (p', \bar v')$ of the same type such that 
  $\MCAT_C(\bar w', \bar v')$  has $2n$-gaps above $C$. 
\end{lemma}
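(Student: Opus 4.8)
The plan is to build $r'$ from $r$ by a finite sequence of gap-insertions, one for each node of the tree $\MCAT_C(\bar w,\bar v)$ that is not a prefix of a constant; this is the same idea as in the proof of Lemma~\ref{lem:stretchleqUpperBoundsExist}, now carried out for the $2n$-tuple $(\bar w,\bar v)$ rather than a single tuple. Write $(q,\bar w)$ and $(p,\bar v)$ for the first and last configuration of $r$, let $T_0=\MCAT_C(\bar w,\bar v)$ (a finite subtree of $\Q^*$, with at most $4n$ nodes that are not prefixes of a constant, since $C$ is prefix-closed), and enumerate those non-constant nodes as $u_1,\dots,u_K$ in some order.

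Next I would iterate, maintaining that the current run $r^{(k)}$ has the same type as $r$. Start with $r^{(0)}=r$; given $r^{(k-1)}$ with first and last data values $\bar w^{(k-1)},\bar v^{(k-1)}$ satisfying $\typ_C(\bar w^{(k-1)},\bar v^{(k-1)})=\typ_C(\bar w,\bar v)$, let $u_k'$ be the image of $u_k$ under the canonical isomorphism $T_0\to T_{k-1}:=\MCAT_C(\bar w^{(k-1)},\bar v^{(k-1)})$, and put $r^{(k)}=\iota_{u_k'}^{2n+1}(r^{(k-1)})$. Since the canonical isomorphism fixes $C$ and preserves $\preceq$, the word $u_k'$ is not a prefix of any constant, so by Lemma~\ref{lem:insertionIsStretchleqCompatible} $r^{(k)}$ is again a run of the same type as $r$; moreover, because gap-insertion preserves $\sqcap$ in both directions (as noted before that lemma) and only inserts symbols, the canonical isomorphism $T_{k-1}\to T_k:=\MCAT_C(\bar w^{(k)},\bar v^{(k)})$ is the restriction of $\iota_{u_k'}^{2n+1}$, hence never shortens an interval, and it moves $u_k'$ to $u_k'0^{2n+1}$ while fixing every proper prefix of $u_k'$. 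Consequently, writing $d_k'$ for the $\MCAT$-parent of $u_k'$ in $T_{k-1}$ (which remains the parent of the corresponding node in every later tree, as all these isomorphisms preserve $\preceq$), the interval from $d_k'$ to $u_k'$ has length $\geq 1$ before step $k$ and length $\geq 2n+2>2n$ after it, and by interval-monotonicity it stays $>2n$ in every $T_j$ with $j\geq k$.

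Finally, set $r'=r^{(K)}$ with first and last data values $\bar w',\bar v'$; then $r'$ has the same type as $r$ and $\MCAT_C(\bar w',\bar v')=T_K$. For arbitrary $d\prec e$ in $T_K$ with $e$ not a prefix of a constant, $e$ is the image of some $u_k$, so with $d'$ its $\MCAT$-parent we get $\abs{e}-\abs{d}\geq\abs{e}-\abs{d'}>2n$; thus $\MCAT_C(\bar w',\bar v')$ has $2n$-gaps above $C$, as claimed. The one point that needs care---and essentially the only source of friction in the argument---is the bookkeeping: each gap must be inserted at the current position $u_k'$, not at the original $u_k$, and one has to check that ``$u_k'$ is not a prefix of a constant'' is preserved by the earlier insertions; both follow because gap-insertion fixes the constants and preserves the order type. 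I do not anticipate any genuine mathematical obstacle beyond Lemma~\ref{lem:insertionIsStretchleqCompatible} itself.
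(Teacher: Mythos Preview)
Your proof is correct and follows essentially the same approach as the paper: iterate gap-insertions at the non-constant nodes of $\MCAT_C(\bar w,\bar v)$, invoking Lemma~\ref{lem:insertionIsStretchleqCompatible} at each step to preserve the type. The only cosmetic difference is that the paper fixes the enumeration order (processing $\preceq$-maximal nodes first, so that unprocessed nodes are never displaced and no tracking is needed), whereas you allow an arbitrary order and compensate by carefully following each node through the canonical isomorphisms---this extra bookkeeping is sound and arguably more explicit than the paper's terse version.
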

\begin{proof}
  Let $r$ be a run from $(q,\bar w)$ to $(q,\bar v)$ For each $u \in
  \MCAT_C(\bar w, \bar v)$ (starting with $\preceq$-maximal ones) that
  is not a constant from $C$, we insert a gap of size $2n$ at $u$ in
  $r$ . Since gap insertion preserves types (Lemma
  \ref{lem:insertionIsStretchleqCompatible}), the resulting run $r'$
  from $(q, \bar w')$ to $(p, \bar v')$ is of the same type as $r$ and
  $\MCAT_C(\bar w', \bar v')$ has $2n$-gaps above $C$.  \qed
\end{proof}

For the second claim we need a technical lemma first and then prove
the second intuition to be correct.

\begin{lemma} \label{lem:Auxiliary:BigGapsAllowAllTypes} Let $\sigma =
  \set{\preceq, \KBleq, \sqcap}$, $n\in\N$.  Let $A\subseteq \Q^*$ be
  some finite set closed under maximal common prefixes such that
  $\varepsilon\in A$. Let $B\subseteq A$ and $h:A\to \Tree$ a
  $\sigma$-injection such that $h(A)$ has $n$-gaps above $h(B)$.
  Given $D\subseteq \Q^*$ such that
  \begin{enumerate}
  \item $\lvert D \setminus A \rvert \leq n$,
  \item $D\cup A$ is closed under maximal common prefixes, and
  \item there is no $d\in D$ and $b\in B$ such that $d\preceq b$,
  \end{enumerate}
  then $h$ extends to a $\sigma$-injection $h_D:A\cup D \to \Tree$.
\end{lemma}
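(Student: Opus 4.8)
The plan is to regard $T := A\cup D$ as a finite subtree of $\Q^*$ that is closed under maximal common prefixes and contains $\varepsilon$ (by hypothesis~(2) and the assumptions on $A$), and to build $h_D$ by \emph{placing} the finitely many new vertices of $D\setminus A$ into the image tree $h(A)\subseteq\Tree$. Two facts make this possible: $\Q$ is a dense linear order without endpoints, so the branching of $\Q^*$ is completely unconstrained (below any node one may realise any finite family of children in any prescribed $\KBleq$-order); and $h(A)$ has $n$-gaps above $h(B)$ with $n\ge\abs{D\setminus A}$, so there is enough room \emph{in depth} exactly where new vertices must be inserted.

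First I would pin down the combinatorial position of the new vertices. For $d\in D\setminus A$ put $p_d := \max_{\preceq}\{a\in A : a\prec d\}$ (well defined since $\varepsilon\in A$) and let $q_d$ be the letter of $d$ at position $\abs{p_d}+1$. If some $a\in A$ with $p_d\prec a$ has letter $q_d$ at that position, let $p'_d$ be the shortest such $a$; then $p_d$ is the longest element of $A$ that is a proper prefix of $p'_d$, no element of $A$ lies strictly between $p_d$ and $p'_d$, and $d$ belongs to the \emph{open region} $R_{(p_d,p'_d)} := \{z\in T : p_d\prec(z\sqcap p'_d)\prec p'_d\}$. Otherwise $d$ sits in the \emph{fresh subtree} of $p_d$ in direction $q_d$, which contains no element of $A$. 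A short argument using closure of $A$ (resp.\ of $T$) under maximal common prefixes shows that these regions are pairwise disjoint, that $R_{(p,p')}\subseteq D\setminus A$ for every open region, that the spine $\{z\in R_{(p,p')} : z\prec p'\}$ is a $\prec$-chain and every other vertex of $R_{(p,p')}$ branches off it at a spine vertex, and that $R_{(p,p')}\cup\{p\}$ (resp.\ a fresh subtree together with its anchor) is a finite tree.

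Now I would use condition~(3). If $p'\preceq c$ for some $c\in B$ then $R_{(p,p')}=\emptyset$: for $z$ in it, $z\sqcap p'$ is a new vertex that is a proper prefix of $c\in B$, contradicting~(3). Hence every \emph{nonempty} open region belongs to an edge $(p,p')$ with $p'\not\preceq c$ for all $c\in B$, and for such an edge $\abs{h(p')}-\abs{h(p)}>n\ge\abs{D\setminus A}$. I then embed region by region, keeping $h_D:=h$ on $A$. For a nonempty open region $R_{(p,p')}$, its spine is a $\prec$-chain of length at most $\abs{R_{(p,p')}}\le\abs{D\setminus A}\le n$, so I send its members to pairwise distinct prefixes of $h(p')$ at $\preceq$-monotone depths strictly between $\abs{h(p)}$ and $\abs{h(p')}$ --- there are $\abs{h(p')}-\abs{h(p)}-1\ge n$ such depths available, which is exactly enough. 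Each remaining vertex of $R_{(p,p')}$, together with the whole subtree of $T$ below it, is placed into a fresh subtree of the image of the spine vertex where it branches off, arbitrarily deep, with child-letters chosen via density and unboundedness of $\Q$ so as to realise all $\KBleq$-relations. Fresh subtrees are handled likewise, with complete freedom in depth. This defines $h_D$ on all of $T$.

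Finally one checks that $h_D$ is a $\sigma$-injection, i.e.\ injective and preserving \emph{and} reflecting $\preceq$, $\KBleq$, $\sqcap$. Injectivity is immediate, since the images of $A$, of distinct fresh subtrees and of distinct spine positions are pairwise disjoint and each region is embedded injectively. The relational properties are a case distinction according to which region each of the two vertices lies in; each case reduces to the corresponding property of $h$ on $A$, to the fact that a region is embedded into exactly the part of $\Tree$ lying between (resp.\ below) the images of the $A$-vertices bounding it, and to the fact that $\KBleq$ is governed by $\preceq$ and $\sqcap$ once the child-letters are fixed. I expect this verification to be the main obstacle --- not because any single case is hard, but because there are many of them and the interplay of $\preceq$ and $\KBleq$ has to be tracked carefully. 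The one genuinely delicate structural point, used throughout, is the claim of the second step that the open and fresh regions partition $D\setminus A$ and that the boundary of an open region consists of precisely two $A$-vertices; this rests on repeated use of the closure of $A$ and of $T$ under maximal common prefixes.
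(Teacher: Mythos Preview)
Your approach is correct and takes a genuinely different route from the paper's. The paper proceeds by induction on $n$: it picks the $\KBleq$-minimal element $d\in D\setminus A$, extends $h$ to $A\cup\{d\}$ by a two-case definition (according to whether $d$ has an $A$-descendant or not), places $h'(d)$ at depth exactly one below the image of the deepest $A$-ancestor of $d$, and then verifies the $\sigma$-injection properties for this single extension before invoking the induction hypothesis on the remaining elements. You instead perform a global structural decomposition of $D\setminus A$ into open regions along $A$-edges and fresh subtrees, and place each region in one shot, using the full $n$-gap to accommodate the entire spine of an open region.

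What each approach buys: the paper's inductive argument keeps the $\sigma$-injection verification local (one new point against $A$ at a time), at the price of tracking how the gap parameter decreases step by step. Your region-based argument makes the role of the $n$-gap hypothesis completely transparent --- the spine of a region has at most $\lvert D\setminus A\rvert\le n$ vertices and the edge $(h(p),h(p'))$ offers at least $n$ intermediate depths --- and your use of condition~(3) to rule out nonempty open regions along edges into $B$ is exactly where that hypothesis enters. The price is the final case analysis for pairs drawn from different regions and from $A$, which you rightly flag as the main obstacle.

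One point to make explicit in a full write-up: for a spine vertex $s$ you map to a prescribed prefix of $h(p')$, so the child-letter towards $h(p')$ is \emph{forced}. Hence the $\KBleq$-position of each off-spine subtree relative to $p'$ (and thus relative to every $a\in A$ with $p'\preceq a$) must be realised by choosing its fresh direction on the correct side of that forced letter. Density and unboundedness of $\Q$ make this possible, and your phrase ``child-letters chosen \ldots\ so as to realise all $\KBleq$-relations'' is meant to cover it, but this is the one place where the $\KBleq$-verification is not purely mechanical.
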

\begin{proof}
  The base case $n = 0$ is trivial.  Assume that the lemma has been
  proven for some $n\in\N$.  If $\lvert D\setminus A \rvert = n+1$,
  let $d\in D\setminus A$ be $\KBleq$-minimal.  By induction
  hypothesis it suffices to extend $h$ to a $\sigma$-injection $h':
  A\cup \set{d} \to \Tree$ that has $n$-gaps above $h(B\cup\set{d})$.
  We first define the image of $d$ by a case distinction and prove
  that the resulting map $h'$ has the desired properties.  We
  distinguish two cases.
  \begin{enumerate}
  \item Assume that there is some $a\in A$ such that $d\preceq a$.
    Since $\varepsilon\in A$ we find a maximal $w\in A$ such that
    $w\prec d$. Moreover, $\bar a = \bigsqcap \Set{a\in A |d\preceq
      a}$ is well defined and satisfies $w\prec \bar a$.  Thus, $h(w)
    \prec h(\bar a)$ and there is a $q\in Q$ such that $h(w) q \preceq
    h(\bar a)$.  Let $h'$ be the extension of $h$ to $A\cup\{d\}$
    mapping $h'(d) = h(w)q$ and $h'(a) = h(a)$ for all $a\in A$.
  \item Otherwise, there is no $a\in A$ with $d\preceq a$. Let again
    $w\in A$ be maximal with $w\preceq d$ and let $q_d\in \Q$ such
    that $w q_d\preceq d$.  For later use we first establish that
    \begin{equation}
      \label{eq:wqdisAfree}
      \text{there is no $a\in A$ with $w q_d\preceq a$.}
    \end{equation}
    Assuming the contrary let $w q_d\preceq a$. Since $A\cup D$ is
    closed under maximal common prefixes, we conclude that $w q_d
    \preceq (a\sqcap d)\in A\cup D$. $(a\sqcap d)\in A$ contradicts
    the maximality of $w$. But due to $\KBleq$-minimality of $d$,
    $(a\sqcap d) \in D\setminus A$ is only possible if $d=a\sqcap d$
    which implies $d\preceq a$ which contradicts our assumption on
    $d$.

    We define a partition of $\Set{a\in A | w\prec a}$ by setting
    \begin{align*}
      A^- &= \Set{a\in A | w\prec a \text{ and }a\KBleq d}\text{ and}\\
      A^+ &= \Set{a\in A | w\prec a \text{ and } d\KBleq a}.
    \end{align*}
    If $A^-\neq \emptyset$ let $a^-$ be its $\KBleq$-maximal element.
    Since $h$ preserves $\prec$, there is some $q^-\in \Q$ such
    that $h(w)q^- \preceq h(a^-)$. If $A^- = \emptyset$ set
    $q^-=-\infty$.  Analogously, if $A^+\neq \emptyset$ let $a^+$ be
    its $\KBleq$-minimal element.  Since $h$ preserves $\prec$, 
    there is some $q^+\in \Q$ such that $h(w)q^+ \preceq h(a^+)$. If
    $A^- = \emptyset$ set $q^+ = \infty$.
    
    If $a^-$ and $a^+$ are both defined, we conclude with
    \eqref{eq:wqdisAfree} that there are $q_1 < q_d < q_2$ such that
    $wq_1\preceq a^-$ and $wq_2 \preceq a^+$. Since $h$ is a
    $\sigma$-injection, we directly conclude that $q^- < q^+$.

    Choose $q\in (q^-, q^+)$ arbitrarily and define the map $h': A\cup
    \set{d} \to \Tree$ by $h'(a) = h(a)$ for all $a\in A$ and $h'(d) =
    h(w)q$.

    We prepare the proof that $h'$ is a $\sigma$-injection by
    establishing that
    \begin{equation}
      \label{eq:wqdisAfreeHom}
      \text{for all $p\in (q^-,q^+)$ there is no $a\in A$ such that $h(w)p\preceq h(a)$.}
    \end{equation}
    Heading for a contradiction assume that there was such $a$ and
    note that $h(a^-) \KBless h(a) \KBless h(a^+)$ and $h(w)\prec
    h(a)$.  This would imply $a^-\KBless a \KBless a^+$ and $w\prec
    a$. But this clearly contradicts the definitions of $a^-$ and
    $a^+$ as maximal below $d$ (minimal above $d$, respectively).
  \end{enumerate}
  We claim that the resulting map $h'$ is a $\sigma$-injection.

  \noindent\textbf{Injectivity:}
  Heading for a contradiction, assume that there is an $a\in A$ with
  $h(a)=h(w)q$ then $h(w) \prec h(a)$ which implies $w\prec a$.  But
  then either $w\prec a\preceq d$ violates the choice of $w$ or
  $d\preceq a$. In the latter case the third condition on $D$ implies
  that there is no $b\in B$ with $a\preceq b$. But then $h(w)$ and
  $h(a)$ need to have an $(n+1)$-gap which is not the case. Thus, we
  have arrived at a contradiction and conclude that there is no $a\in
  A$ with $h(a) = h(w)q$ whence $h'$ is injective.

  \noindent\textbf{Preservation of $\preceq$:}
  We show that $h'$ preserves $\preceq$ in both directions. Choose
  some $a\in A$.
  \begin{itemize}
  \item If $a \preceq d$ then by choice of $w$ we have $a\preceq w$
    whence $h'(a) = h(a) \preceq h(w) \prec h'(d)$.
  \item If $h'(a) = h(a) \preceq h'(d) = h(w)q$, then $h(a) \preceq
    h(w)$ because $h'$ is injective. Thus, $a\preceq w \prec d$ as
    desired.
  \item If $d\preceq a$ we are in case one of the definition of
    $h'$. Thus, $\bar a \preceq a$ whence by definition $h'(d) \preceq
    h(\bar a) \preceq h(a) = h'(a)$.
  \item If $h'(d)=h(w)q \preceq h(a)$, we conclude with
    \eqref{eq:wqdisAfreeHom} that we are in case one of the definition
    of $h'$. Thus, $h(w)q\preceq h(\bar a) \preceq h(a)$ implies that
    $h(w)q \preceq h(a)\sqcap h(\bar a) = h(a\sqcap \bar a)$. Since
    $h$ is a $\sigma$-injection, it follows that $w\prec a\sqcap \bar
    a \preceq \bar a$. Since $d\preceq \bar a$, we obtain that
    $a\sqcap \bar a$ and $d$ are comparable. By maximality of $w$, we
    conclude $d\preceq (a \sqcap \bar a) \preceq a$.
  \end{itemize}

  \noindent\textbf{Preservation of $\KBleq$:}
  Due to the $\preceq$
  preservation, it suffices to prove preservation of $\KBless \cap
  \not\preceq$.  Again choose some $a\in A$.
  \begin{itemize}
  \item Assume that $a\KBleq d$ and $a\not\preceq d$. 
    If $a\KBleq w$ we immediately conclude that $h'(a) = h(a) \KBleq
    h(w) \KBleq h(w)q = h'(d)$. 
    Otherwise, one immediately concludes that $d\sqcap a = w$. 
    \begin{enumerate}
    \item If $h'$ has been defined in case one, we immediately
      conclude $a \sqcap \bar a = w$ and $a \KBless \bar a$ whence $
      h(a) \sqcap h(\bar a) = h(a \sqcap \bar a) = h(w)$ and $h(a)
      \KBless h(\bar a)$. Since $h(w) \prec h'(d) \preceq h(\bar a)$,
      it follows that that $h'(a) = h(a) \KBless h(w)$.
    \item Otherwise, $h'$ has been defined in the second case and we
      conclude that $a\in A^-$ whence $a  \KBleq a^-$. This implies
      that
      $h'(a) = h(a) \KBleq h(a^-) \KBleq h(w)q = h'(d)$. 
    \end{enumerate}
  \item Assume that $d\KBleq a$ and $d\not\preceq a$.
    First assume that $w\not\preceq a$. Then $d\sqcap a = w \sqcap a
    \prec w$ whence $w\KBleq a$. Since $h$ is a $\sigma$-injection, we
    obtain
    $h(w) \KBleq h(a)$, and $h(w) \sqcap h(a) = h(w\sqcap a) \prec
    h(w)$. Thus, $h(w) \preceq h'(d)$ directly implies 
    $h'(d) \KBleq h(a) = h(a')$. 
    Otherwise, we have $w\preceq a$. Since $d\KBleq a$ we conclude
    that  $w\prec a$. 
    \begin{enumerate}
    \item If $h'$ has been defined in case one, 
      $d\not\preceq a$, $w\prec a$ and maximality of $w$ imply that 
      $w = d \sqcap a = \bar a \sqcap a$.
      Since $\bar a$ and $d$ are on a common path, 
      we also have $\bar a \KBleq  a$. 
      Thus, $h(w) = h(\bar a \sqcap a) = h(\bar a) \sqcap h(a)$ and
      $h(\bar a) \KBleq h(a)$. Since $h'(d)$ and $h(\bar a)$ are on a
      common path, we obtain $h'(d) \KBleq h(a) = h'(a)$.
    \item Otherwise, $h'$ has been defined in case two. Then $w\prec
      a$ and $d \KBleq a$ imply $a^+ \KBleq a$. We conclude by choice
      of $q$ that
      $h'(d) = h(w)q \KBleq h'(a^+) \KBleq h(a)$. 
    \end{enumerate}
    Since $\KBleq$ is a total order, the
    backwards preservation of $\KBleq$ follows directly from the
    forward preservation: assume $h'(x) \KBleq h'(y)$, then forwards
    preservation and injectivity rules out the case $y \KBless x$, whence
    $x \KBleq y$ because $\KBleq$ is total. 
  \end{itemize}
  
  \noindent\textbf{Preservation of $\sqcap$:}  
  Finally, note that $h'$ preserves $\sqcap$ in both directions. Let
  $a\in A$.  If $a$ and $d$ are comparable, the claim follows from the
  preservation of $\preceq$. Otherwise, if $a$ and $d$ are
  incomparable (with respect to $\preceq$), then we conclude $a\sqcap
  d \in A$ whence $a\sqcap d = a \sqcap w$. But then also $h'(a)$ and
  $h'(d)$ are incomparable whence $h'(a) \sqcap h'(d) \preceq h'(w)$
  whence by definition of $h'(d)$ we have $h'(a) \sqcap h'(d) = h'(a)
  \sqcap h'(w) = h(a) \sqcap h(w) = h( a \sqcap w) = h'(a\sqcap w) =
  h'(a\sqcap d)$.
\qed
\end{proof}

\begin{lemma} \label{lem:BigGapsAllowAllTypes} Let $\bar w, \bar v$ be
  $n$-tuples and $t=(q,\pi,r), t_1 = (q,\pi_1,p), t_2 = (p, \pi_2, r)
  \in \RunTypes$ such that 
  $\typ_C(\bar w, \bar v) = \pi_1$,  and
  $\MCAT_C(\bar w, \bar v)$ has $(2n)$-gaps above
  $C$.  There  is an $n$-tuple $\bar u$ such that 
  $\typ_C(\bar v, \bar  u) = \pi_2$ 
  and $\typ_C(\bar w, \bar u) = \pi$.
\end{lemma}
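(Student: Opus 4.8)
The plan is to \emph{transplant} the ``last block'' of a witness of the product onto the fixed, gap-rich tuple $(\bar w,\bar v)$, using Lemma~\ref{lem:Auxiliary:BigGapsAllowAllTypes}. First, unfolding the product hypothesis $t\in\set{t_1}\cdot\set{t_2}$, fix $n$-tuples $\bar x,\bar y,\bar z$ with $\typ_C(\bar x,\bar y)=\pi_1$, $\typ_C(\bar y,\bar z)=\pi_2$ and $\typ_C(\bar x,\bar z)=\pi$. Since $\typ_C(\bar x,\bar y)=\pi_1=\typ_C(\bar w,\bar v)$, there is a unique isomorphism $\phi\colon\MCAT_C(\bar x,\bar y)\to\MCAT_C(\bar w,\bar v)$ fixing every constant from $C$ and mapping $x_i\mapsto w_i$ and $y_i\mapsto v_i$. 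As both $\MCAT$-structures sit inside $(\Q^*,\preceq,\KBleq,\sqcap)$ as sub-meet-semilattices, $\phi$ preserves $\sqcap$ in both directions, so $\phi$ is a $\set{\preceq,\KBleq,\sqcap}$-injection, and $\phi(\MCAT_C(\bar x,\bar y))=\MCAT_C(\bar w,\bar v)$ has $(2n)$-gaps above $\phi(C)=C$ by hypothesis.

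Next I would invoke Lemma~\ref{lem:Auxiliary:BigGapsAllowAllTypes} with $A=\MCAT_C(\bar x,\bar y)$, $B=C$, $h=\phi$, gap parameter $2n$, and $D=\MCAT_C(\bar x,\bar y,\bar z)\setminus A$. Its three hypotheses hold: $A\cup D=\MCAT_C(\bar x,\bar y,\bar z)$ is closed under $\sqcap$ (any $\MCAT$-set is, and it contains $\varepsilon$); $\abs{D}\le 2n$, because $\MCAT_C(\bar x,\bar y,\bar z)$ is obtained from $\MCAT_C(\bar x,\bar y)$ by adjoining the $n$ words $z_1,\dots,z_n$ one at a time, and adjoining a single word to a $\sqcap$-closed set introduces at most two new points --- the word itself and at most one new branch point, namely the longest prefix it shares with the current set; and no $d\in D$ is a prefix of a constant, since $C$ is prefix-closed, so such a $d$ would lie in $C\subseteq A$, contradicting $d\in D$. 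The lemma then produces a $\set{\preceq,\KBleq,\sqcap}$-injection $h_D\colon\MCAT_C(\bar x,\bar y,\bar z)\to\Q^*$ extending $\phi$.

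Finally, I would set $\bar u=(h_D(z_1),\dots,h_D(z_n))$. Since $h_D$ is a $\sqcap$-preserving embedding which sends $x_i\mapsto w_i$, $y_i\mapsto v_i$, $z_i\mapsto u_i$ and fixes $C$, restricting it to the underlying set of $\MCAT_C(\bar x,\bar z)$ (a subset of $\MCAT_C(\bar x,\bar y,\bar z)$) yields an isomorphism onto $\MCAT_C(\bar w,\bar u)$ respecting $\preceq$, $\KBleq$ and all constant symbols; hence $\typ_C(\bar w,\bar u)=\typ_C(\bar x,\bar z)=\pi$, and symmetrically (restricting to $\MCAT_C(\bar y,\bar z)$) $\typ_C(\bar v,\bar u)=\typ_C(\bar y,\bar z)=\pi_2$, as required. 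The one genuinely delicate point --- the very reason Lemma~\ref{lem:Auxiliary:BigGapsAllowAllTypes} carries a gap bound at all --- is the counting step: one must be certain that accommodating $\bar z$ needs no more than $2n$ fresh nodes, so that the $(2n)$-gaps of $\MCAT_C(\bar w,\bar v)$ leave enough room for all of them; everything else is a routine transport of structure along $h_D$.
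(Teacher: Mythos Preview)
Your proof is correct and follows essentially the same route as the paper's: fix witnesses $\bar x,\bar y,\bar z$ for the product, take the induced isomorphism $\MCAT_C(\bar x,\bar y)\to\MCAT_C(\bar w,\bar v)$, apply Lemma~\ref{lem:Auxiliary:BigGapsAllowAllTypes} with $A=\MCAT_C(\bar x,\bar y)$, $B=C$, $D=\MCAT_C(\bar x,\bar y,\bar z)\setminus A$, and read off $\bar u$ as the image of $\bar z$. Your justification of $\lvert D\rvert\le 2n$ (adjoining one word to a $\sqcap$-closed set creates at most the word itself and one new branch point) is exactly the paper's ``at most $n$ new leaves and $n$ new inner nodes'' phrased elementwise, and your verification of condition~(3) via prefix-closure of $C$ matches the paper's implicit use of the same standing assumption.
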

\begin{proof}
  By definition of the product, there are $k$-tuples $\bar x, \bar y, \bar
  z$ such that $\typ_C(\bar x, \bar y) = \pi_1$, $\typ_C(\bar y, \bar
  z) = \pi_2$ and $\typ_C(\bar x, \bar z) = \pi$.  Fix the
  isomorphism
  $h: \MCAT_C(\bar x, \bar y) \to \MCAT_C(\bar w, \bar v)$.
  One shows by induction on $n$ that if $\MCAT_C(\bar x, \bar y)$ has $n_1\in\N$
  many leaves and $n_2\in \N$ many inner nodes then
  $\MCAT_C(\bar x, \bar y)$ has at most $n_1+n$ leaves and $n_2+n$
  inner nodes whence
  $\lvert \MCAT_C(\bar x, \bar y, \bar z) \setminus
  \MCAT_C(\bar x, \bar y) \rvert \leq 2n$.
  Thus,
  $h$ extends by 
  Lemma~\ref{lem:Auxiliary:BigGapsAllowAllTypes} 
  (setting $A = \MCAT_C(\bar x, \bar y)$, $B = C$, $D=\MCAT_C(\bar x,
  \bar y, \bar z) \setminus \MCAT_C(\bar x, \bar y)$, and seeing $h$ as
  an injection $A\to \Tree$) 
  to a
  $\Set{\preceq, \KBleq, \sqcap}$-injection
  $\hat h: \MCAT_C(\bar x, \bar y, \bar z) \to \Tree$ (which is the
  identity on all all constants from $C$) such that for
  $\bar u = \hat h(\bar z)$, $\typ_C(\bar w, \bar v, \bar u) =
  \typ_C(\bar x, \bar y, \bar z)$.  
  In particular, $\typ_C(\bar v, \bar u) = \pi_2$ and 
  $\typ_C(\bar w, \bar u) = \pi$ as desired. 
\qed
\end{proof}

Now we are prepared to prove the last direction of Lemma
\ref{lem:TypesAndRunsConnection} 

\begin{lemma}
  For every $t\in T_1^+$ there is a run $r$ of $\Aut{A}$ with
  $\typ(r)=t$. 
\end{lemma}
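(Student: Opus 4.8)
The statement will be proved by induction on $n$ for $t\in T_1^n$ (recall $T_1^+=\bigcup_{n\geq 1}T_1^n$), using the two workhorse lemmas that have already been established: Lemma~\ref{lem:allRunsRealisedWithGaps} (every run can be re-realised, at the same type, by a run whose initial and final data tuples have $2n$-gaps above $C$) and Lemma~\ref{lem:BigGapsAllowAllTypes} (such a run can be continued by one step to witness any admissible product type). For the base case $n=1$, fix $t=(q,\pi,p)\in T_1$ witnessed by a transition $(q,\beta,p)$ of $\Aut{A}$ with $\pi\models\beta$. By the definition of $\RunTypes$ there are $n$-tuples $\bar w,\bar v$ with $\typ_C(\bar w,\bar v)=\pi$, and since satisfaction of the quantifier-free formula $\beta$ depends only on $\typ_C(\bar w,\bar v)$ we get $\Tree^C\models\beta(\bar w,\bar v)$, hence $(q,\bar w)\to(p,\bar v)$; this two-configuration run has type $t$.

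\textbf{Inductive step.} Assume the claim for $T_1^n$ and let $t\in T_1^{n+1}=T_1^n\cdot T_1$. Pick $t'=(q,\pi_1,p)\in T_1^n$ and $t_2=(p,\pi_2,r)\in T_1$ with $t=(q,\pi,r)\in\set{t'}\cdot\set{t_2}$. By the induction hypothesis there is a run of type $t'$; applying Lemma~\ref{lem:allRunsRealisedWithGaps} to it yields a run $r_1$ from $(q,\bar w)$ to $(p,\bar v)$ of type $t'$ with $\MCAT_C(\bar w,\bar v)$ having $2n$-gaps above $C$. Now Lemma~\ref{lem:BigGapsAllowAllTypes}, applied with its $t_1\mapsto t'$, $t_2\mapsto t_2$, $t\mapsto t$, produces an $n$-tuple $\bar u$ with $\typ_C(\bar v,\bar u)=\pi_2$ and $\typ_C(\bar w,\bar u)=\pi$. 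Since $t_2\in T_1$ there is a transition $(p,\beta,r)$ of $\Aut{A}$ satisfied by $\pi_2$; as satisfaction depends only on the type, $\Tree^C\models\beta(\bar v,\bar u)$, so $(p,\bar v)\to(r,\bar u)$ is a legal step. Appending $(r,\bar u)$ to $r_1$ gives a run from $(q,\bar w)$ to $(r,\bar u)$ whose type is $(q,\typ_C(\bar w,\bar u),r)=(q,\pi,r)=t$, completing the induction.

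\textbf{Where the effort lies.} The genuinely substantive facts are already in place: Lemma~\ref{lem:insertionIsStretchleqCompatible} and Lemma~\ref{lem:allRunsRealisedWithGaps} provide gap-insertion (type preservation and $\stretchleq$-monotonicity), and Lemma~\ref{lem:Auxiliary:BigGapsAllowAllTypes}/Lemma~\ref{lem:BigGapsAllowAllTypes} provide the extension of a sufficiently "spread out" $\MCAT_C$-tree to realise a prescribed one-step continuation. Consequently the step above is essentially bookkeeping, and I expect the only points needing care to be: (i) re-applying Lemma~\ref{lem:allRunsRealisedWithGaps} at \emph{each} inductive step, so that the accumulated run always has the $2n$-gaps at its last configuration that Lemma~\ref{lem:BigGapsAllowAllTypes} demands; and (ii) correctly matching the component types in $\set{t'}\cdot\set{t_2}$ (the shared middle state $p$ and the fact that the middle order type of $t'$ agrees with the first order type of $t_2$, which is exactly what the product definition guarantees). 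No further obstacle remains.
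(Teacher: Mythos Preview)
Your proposal is correct and follows essentially the same argument as the paper's proof: induction on $n$, invoking Lemma~\ref{lem:allRunsRealisedWithGaps} to obtain $2n$-gaps at the endpoints of the inductively given run, then Lemma~\ref{lem:BigGapsAllowAllTypes} to find $\bar u$ realising the required product type, and appending the resulting one-step transition. You are a bit more explicit than the paper about the base case and about why $(p,\bar v)\to(r,\bar u)$ is a legal transition (satisfaction of $\beta$ depends only on the type), but these are harmless elaborations of the same proof.
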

\begin{proof}
  As remarked before, for $t\in T_1^1 = T_1$ there is nothing to
  show. Let $r\in T_1^{n+1}$ and assume the claim is true for all
  $t\in T_1^n$. 
  Let $t\in t_1 \cdot t_2$ with $t_1\in T_1^n$ and $t_2\in T_1$ and
  let $r'$ be  a run of type  $t_1$. 
  Let $c_0=(q, \bar w)$ be the first and $c_1 = (p, \bar v)$ the last
  configuration of $r'$. 
  By Lemma~\ref{lem:allRunsRealisedWithGaps}, we can assume that 
  $\MCAT_C(\bar w, \bar v)$ has $2n$-gaps. 
  Thus, by Lemma~\ref{lem:BigGapsAllowAllTypes}, there is tuple $\bar
  u$  and a state $q'$ such that 
  $(p, \typ_C(\bar v, \bar u), q') = t_2$ and $(q, \typ_C(\bar w, \bar
  u), q') = t$. Thus, extending $r'$ by configuration $(q', \bar u)$
  results in the desired run $r$.
  \qed
\end{proof}

\end{document}